\newcommand{\comm}[1]{}
\def\calI{\mathcal{I}}
\def\calH{\mathcal{H}}
\def\calA{\mathcal{A}}
\newtheorem{observation}{Observation}
\begin{document}

\title{Geometric Hitting Set for Line-Constrained Disks and Related Problems\thanks{A preliminary version of this paper will appear in {\em Proceedings of the 18th Algorithms and Data Structures Symposium (WADS 2023)}. This research was supported in part by NSF under Grants CCF-2005323 and CCF-2300356.}}
\author{Gang Liu 
\and
Haitao Wang 
}

 \institute{
 School of Computing\\
  University of Utah, Salt Lake City, UT 84112, USA\\
  \email{u0866264@utah.edu, haitao.wang@utah.edu}
}

\maketitle

\pagestyle{plain}
\pagenumbering{arabic}
\setcounter{page}{1}

\begin{abstract}
Given a set $P$ of $n$ weighted points and a set $S$ of $m$ disks in the plane, the hitting set problem is to compute a subset $P'$ of points of $P$ such that each disk contains at least one point of $P'$ and the total weight of all points of $P'$ is minimized. The problem is known to be NP-hard. In this paper, we consider a line-constrained version of the problem in which all disks are centered on a line $\ell$. We present an $O((m+n)\log(m+n)+\kappa \log m)$ time algorithm for the problem, where $\kappa$ is the number of pairs of disks that intersect.
For the unit-disk case where all disks have the same radius, the running time can be reduced to $O((n + m)\log(m + n))$. In addition, we solve the problem in $O((m + n)\log(m + n))$ time in the $L_{\infty}$ and $L_1$ metrics, in which a disk is a square and a diamond, respectively.
Our techniques can also be used to solve other geometric hitting set problems. For example, given in the plane a set $P$ of $n$ weighted points and a set $S$ of $n$ half-planes, we solve in $O(n^4\log n)$ time the problem of finding a minimum weight hitting set of $P$ for $S$. This improves the previous best algorithm of $O(n^6)$ time by nearly a quadratic factor. 
\end{abstract}

\section{Introduction}
\label{sec:intro}

Let $S$ be a set of $m$ disks and $P$ be a set of $n$ points in the plane such that each point of $P$ has a weight. The {\em hitting set problem} on $S$ and $P$ is to find a subset $P_{opt} \subseteq P$ of minimum total weight so that each disk of $S$ contains a least one point of $P_{opt}$ (i.e, each disk is {\em hit} by a point of $P_{opt}$). The problem is NP-hard even if all disks have the same radius and all point weights are the same~\cite{ref:DurocherDu15,ref:KarpRe72,ref:MustafaIm10}.

In this paper, we consider the {\em line-constrained} version of the problem in which centers of all disks of $S$ are on a line $\ell$ (e.g., the $x$-axis). To the best of our knowledge, this line-constrained problem was not particularly studied before. We give an algorithm of $O((m+n)\log(m+n)+\kappa\log m)$ time, where $\kappa$ is the number of pairs of disks that intersect. We also present an alternative algorithm of  $O(nm\log(m+n))$ time. For the {\em unit-disk case} where all disks have the same radius, we give a better algorithm of $O((n + m)\log(m + n))$ time. We also consider the problem in $L_{\infty}$ and $L_1$ metrics (the original problem is in the $L_2$ metric), where a disk becomes a square and a diamond, respectively; we solve the problem in $O((m + n)\log(m + n))$ time in both metrics. The 1D case where all disks are line segments can also be solved in $O((m + n)\log(m + n))$ time.
In addition, by a reduction from the element uniqueness problem, we prove an $\Omega((m+n)\log(m+n))$ time lower bound in the algebraic decision tree model even for the 1D case (even if all segments have the same length and all points of $P$ have the same weight). The lower bound implies that our algorithms for the unit-disk, $L_{\infty}$, $L_1$, and 1D cases are all optimal.

Hitting set is a fundamental problem that has attracted much attention in the literature.
Many variations of the problem are intractable. One motivation of our study is to see
to what extent the hitting set problem has efficient algorithms and how efficient we
could make. Our problem may also have direct applications. For example, suppose a number of sensors are deployed along a line (e.g., a high way or a rail way); 
we need to determine locations to build base stations to
communicate with sensors (each sensor has a range for communications and the base
station must be located within its range). How to determine a minimum number of locations for base stations is exactly an instance of our problem.

Although the problems are line-constrained, our techniques can be utilized to solve various other geometric hitting set problems. If all disks of $S$ have the same radius and the set of disk centers is separated from $P$ by a line $\ell$, the problem is called {\em line-separable unit-disk hitting set problem}. 
Our algorithm for the line-constrained $L_2$ general case can be used to solve the problem in $O(nm\log (m+n))$ time or in $O((m+n)\log (m+n)+\kappa\log m)$ time, where $\kappa$ is the number of pairs of disks whose boundaries intersect in the side of $\ell$ that contains $P$.
Interestingly, we can also employ the algorithm to tackle the following
{\em half-plane hitting set problem}. Given in the plane a set $S$ of $m$ half-planes and a set $P$ of $n$ weighted
points, find a subset of $P$ of minimum total weight so that each  half-plane of $S$ contains at least one point of the subset. For the {\em lower-only case} where all half-planes are
lower ones, Chan and Grant~\cite{ref:ChanEx14} gave an $O(m^2n(m+n))$
time algorithm. Notably, recognizing that a half-plane can be seen as a special unit disk with an infinite radius, our line-separable unit-disk
hitting set algorithm can be applied to solve the problem
in $O(nm\log (m+n))$ time or in $O(n\log n+m^2\log m)$ time. This
improves the result of~\cite{ref:ChanEx14} by nearly a quadratic
factor. 
For the general case where both upper and lower half-planes
are present, Har-Peled and Lee~\cite{ref:Har-PeledWe12} proposed an
algorithm of $O(n^6)$ time when $m=n$. Based on observations, we manage to reduce the problem to $O(n^2)$ instances of the lower-only case problem and consequently solve the problem in $O(n^3m\log (m+n))$ time or in
$O(n^3\log n+n^2m^2\log m)$ time using our lower-only case algorithm. The runtime is $O(n^4\log n)$ when $m=n$, which  improves the one
in~\cite{ref:Har-PeledWe12} by nearly a quadratic factor.
We believe that our techniques have the potential to find numerous other applications.


\paragraph{\bf Related work.}
The hitting set and many of its variations are fundamental and have been studied
extensively; the problem is usually hard to solve, even
approximately~\cite{ref:MorenoTh13}. Hitting set problems in geometric settings
have also attracted much attention and most problems are NP-hard,
e.g.,\cite{ref:BusPr18,ref:ChanEx14,ref:EvenHi05,ref:GanjugunteGe11,ref:MustafaPt09}, and
some approximation algorithms are known~\cite{ref:EvenHi05,ref:MustafaPt09}.

A ``dual'' problem to the hitting set problem is the coverage problem. For our
problem, we can define its {\em dual coverage problem} as follows. Given a set
$P^*$ of $n$ weighted disks and a set $S^*$ of $m$ points, the problem is to find a
subset $P^*_{opt} \subseteq P^*$ of minimum total weight so that each point of
$S^*$ is covered by at least one disk of $P^*_{opt}$. This problem is also
NP-hard~\cite{ref:FederOp88}. The line-constrained problem where disks of $P^*$ are all centered on the $x$-axis was studied before
and polynomial time algorithms were proposed by Pedersen and Wang~\cite{ref:PedersenAl22}.
The time complexities of the algorithms of~\cite{ref:PedersenAl22}
match our results in this paper. Specifically, an algorithm of
$O((m+n)\log(m+n)+\kappa^*\log n)$ time was given in~\cite{ref:PedersenAl22} for
the $L_2$ metric, where $\kappa^*$ is the number of pairs of
disks that intersect~\cite{ref:PedersenAl22}; the unit-disk, $L_{\infty}$, $L_1$, and 1D cases were all solved in $O((n + m)\log(m + n))$ time~\cite{ref:PedersenAl22}. Other variations
of line-constrained coverage have also been studied,
e.g.,~\cite{ref:AltMi06,ref:BiloGe05,ref:PedersenOn18}.

The $L_2$ case coverage algorithm of~\cite{ref:PedersenAl22} can also be employed to solve  in $O((m+n)\log (m+n)+\kappa^*\log m)$ time the {\em line-separable unit-disk coverage problem} (i.e., all disks of $P^*$ have the same radius and the disk centers are separated from points of $S^*$ by a line $\ell$), where $\kappa^*$ is the number of pairs of disks whose boundaries intersect in the side of $\ell$ that contains $S^*$. Notice that since all disks are unit disks, we can reduce our line-separable unit-disk hitting set problem on $P$ and $S$ to the line-separable unit-disk coverage problem. Indeed, for each point $p$ of $P$, we create a {\em dual unit disk} centered at $p$ (with unit radius); for each disk of $S$, we consider its center as its {\em dual point}. As such, we obtain a set $P^*$ of $n$ dual disks and a set $S^*$ of $m$ dual points. It is not difficult to see that an optimal solution to the hitting set problem on $P$ and $S$ corresponds to an optimal solution to the coverage problem on $P^*$ and $S^*$. Applying the above coverage algorithm of \cite{ref:PedersenAl22} can solve our hitting set problem on $P$ and $S$ in $O((m+n)\log (m+n)+\kappa^*\log m)$ time, where $\kappa^*$ is the number of pairs of dual disks of $P^*$ whose boundaries intersect in the side of $\ell$ that contains $S^*$. Note that the time complexity is not the same as our above $O((m+n)\log (m+n)+\kappa\log m)$ time algorithm, 
where $\kappa$ is the number of pairs of disks of $S$ whose boundaries intersect in the side of $\ell$ that contains $P$, because $\kappa$ may not be the same as $\kappa^*$; indeed, $\kappa=O(m^2)$ while $\kappa^*=O(n^2)$. 

Using their algorithm for the line-separable unit-disk coverage problem, the lower-only half-plane coverage problem is solvable in $O(n\log n+m^2\log m)$~\cite{ref:PedersenAl22}, for $n$ points and $m$ lower half-planes. As above, by duality, we can also reduce our lower-only half-plane hitting set problem to the coverage problem and obtain an $O(m\log m+n^2\log n)$ time algorithm. Note again that this time complexity is not the same as our $O(n\log n+m^2\log m)$ time result, although they become identical when $m=n$. In addition, the general half-plane coverage problem, where both upper and lower half-planes are present, was also considered in \cite{ref:PedersenAl22} and an $O(n^3\log n+n^2m^2\log m)$ time algorithm was given. It should be noted that since both upper and lower half-planes are present, we cannot reduce the hitting set problem to the coverage problem by duality as in the lower-only case. Therefore, solving our general half-plane hitting set problem needs different techniques.


\paragraph{\bf Our approach.}
To solve the line-constrained hitting set problem, we propose a novel and interesting method, dubbed {\em dual transformation}, by reducing the hitting set
problem to the 1D dual coverage problem and consequently solve it by applying the 1D dual coverage algorithm of~\cite{ref:PedersenAl22}. Indeed, to the best of our knowledge, we are not aware of such a dual transformation in the literature. Two issues arise for this approach: The first one is to prove a good upper bound on the number of segments in the 1D dual coverage problem and the second is to compute these segments efficiently.
These difficulties are relatively easy to overcome for the 1D, unit-disk, and $L_1$ cases. The challenge, however, is in the $L_{\infty}$ and $L_2$ cases. Based on many interesting observations and techniques,
we prove an $O(n+m)$ upper bound and present an $O((n+m)\log (n+m))$ time algorithm to compute these segments for the $L_{\infty}$ case; for the $L_2$ case, we prove an $O(m+\kappa)$ upper bound and derive an $O((n+m)\log (n+m)+\kappa\log m)$ time algorithm.

\paragraph{\bf Outline.} The rest of the paper is organized as follows.
In Section~\ref{sec:pre}, we define notation and some concepts.
Section~\ref{sec:dualtransformation} introduces the dual transformation and solves the
1D, unit-disk, and $L_1$ cases.
Algorithms for the $L_{\infty}$ and $L_2$ cases are presented in
Sections~\ref{sec:infinity} and \ref{sec:l2}, respectively.
We discuss the line-separable unit-disk case and the half-plane hitting set problem in Section~\ref{sec:line-separable}.
Section~\ref{sec:conclusion} concludes the paper with a lower bound proof.

\section{Preliminaries}
\label{sec:pre}

We follow the notation defined in Section~\ref{sec:intro}, e.g., $P$, $S$, $P_{opt}$, $\kappa$, $\ell$, etc.  In this section, unless otherwise stated, all statements, notation, and concepts are applicable for all three metrics, i.e., $L_1$, $L_2$, and $L_{\infty}$, as well as the 1D case. Recall that we assume $\ell$ is the $x$-axis, which does not lose generality for the $L_2$ case but is special for the $L_1$ and $L_{\infty}$ cases. 

We assume that all points of $P$ are above or on $\ell$ since if a point $p\in P$ is below $\ell$, we could replace $p$ by its symmetric point with respect to $\ell$ and this would not affect the solution as all disks are centered at $\ell$. For ease of exposition, we make a general position assumption that no two points of $P$ have the same $x$-coordinate and no point of $P$ lies on the boundary of a disk of $S$ (these cases can be handled by standard perturbation techniques~\cite{ref:EdelsbrunnerSi90}).
We also assume that each disk of $S$ is hit by at least one point of $P$ since otherwise there would be no solution (we could check whether this is the case by slightly modifying our algorithms).

For any point $p$ in the plane, we use $x(p)$ and $y(p)$ to refer to its $x$- and $y$-coordinates, respectively.

We sort all points of $P$ in ascending order of their $x$-coordinates; let $\{p_1, p_2,\cdots, p_n\}$ be the sorted list.
For any point $p\in P$, we use $w(p)$ to denote its weight. We assume that $w(p)>0$ for each $p\in P$ since otherwise one could always include $p$ in the solution.

We sort all disks of $S$ by their centers from left to right; let $s_1, s_2,
\cdots , s_m$ be the sorted list. For each disk $s_j\in S$, let $l_j$ and $r_j$ denote its leftmost and rightmost points on $\ell$, respectively. Note that $l_j$ is the leftmost point of $s_j$ and $r_j$ is the
rightmost point of $s_j$. More specifically, $l_j$ (resp., $r_j$) is the only leftmost (resp., rightmost) point of $s_j$ in the 1D, $L_1$, and $L_2$ cases. For each of exposition, we make a general position assumption that
no two points of $\{l_i,r_i\ |\ 1\leq i\leq m\}$ are coincident.
For $1 \leq j_1 \leq j_2 \leq m$, let $S[j_1, j_2]$ denote the subset of disks $s_j\in S$ for all $j\in [j_1,j_2]$.

We often talk about the relative positions of two geometric objects $O_1$ and $O_2$ (e.g., two points, or a point and a line). We say that $O_1$ is to the {\em left} of $O_2$ if $x(p) \leq x(p')$ holds for any point $p \in O_1$ and any point $p' \in O_2$, and {\em strictly left} means $x(p) < x(p')$. Similarly, we can define {\em right, above, below,} etc.


\subsection{Non-Containment subset}
\label{sec:fifo}

We observe that to solve the problem it suffices to consider only a subset of $S$ with certain property, called the {\em Non-Containment subset}, defined as follows.
We say that a disk of $S$ is {\em redundant} if it contains another disk of $S$. The Non-Containment subset, denoted by $\widehat{S}$, is defined as the subset of $S$ excluding all redundant disks. We have the following observation on $\widehat{S}$, which is called the {\em Non-Containment property}.

\begin{observation}{\em (Non-Containment Property)}\label{obser:FIFO}
For any two disks $s_i, s_j \in \widehat{S}$, $x(l_i) < x(l_j)$ if and only if $x(r_i) < x(r_j)$.
\end{observation}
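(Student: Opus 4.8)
The plan is to prove both directions of the biconditional by contradiction, exploiting the fact that disks in $\widehat{S}$ are centered on $\ell$ and hence are symmetric about $\ell$, so that a disk $s_j$ is determined on $\ell$ by the interval $[l_j, r_j]$ and, more strongly, one disk is contained in another if and only if the corresponding intervals on $\ell$ are nested. First I would record this containment characterization: for line-constrained disks (in any of the $L_1$, $L_2$, $L_\infty$ metrics, as well as the 1D case), $s_i \subseteq s_j$ holds precisely when $x(l_j) \le x(l_i)$ and $x(r_i) \le x(r_j)$. This is immediate in 1D and in $L_1/L_\infty$ where a ``disk'' is an axis-aligned diamond/square whose intersection with the $x$-axis is $[l_j,r_j]$ and whose vertical extent is governed by the same radius; and for $L_2$ it follows because a disk centered on $\ell$ is the disk with diameter $[l_j,r_j]$, and a disk with a smaller diameter interval that is nested inside a larger one is contained in it.

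Next, assume toward a contradiction that $s_i, s_j \in \widehat{S}$ with $x(l_i) < x(l_j)$ but $x(r_i) \ge x(r_j)$; by the general position assumption that no two of the points $\{l_k, r_k\}$ coincide, we may take $x(r_i) > x(r_j)$. Then the interval $[l_j, r_j]$ satisfies $x(l_i) < x(l_j)$ and $x(r_j) < x(r_i)$, i.e. $[l_j,r_j] \subset [l_i, r_i]$, so by the containment characterization $s_j \subseteq s_i$. Hence $s_i$ is redundant, contradicting $s_i \in \widehat{S}$. This establishes that $x(l_i) < x(l_j)$ implies $x(r_i) < x(r_j)$. The reverse implication is entirely symmetric: if $x(r_i) < x(r_j)$ but $x(l_i) \ge x(l_j)$, then (again using general position) $x(l_j) < x(l_i)$, so $[l_i, r_i] \subset [l_j, r_j]$ and $s_i \subseteq s_j$, making $s_i$ redundant, a contradiction. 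Combining the two directions gives the claimed equivalence.

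The only mildly delicate point — and the step I would treat most carefully — is the containment characterization for the $L_2$ metric, since there a disk is a genuine Euclidean disk rather than an interval; I would justify it by noting that a disk of $S$ centered on $\ell$ is exactly $\{x : \|x - c\| \le \rho\}$ where the center $c$ lies on $\ell$ and $\rho$ is the radius, that $l_j = c - (\rho, 0)$ and $r_j = c + (\rho,0)$, and that for two such disks nestedness of the diameter intervals along $\ell$ is equivalent to one radius-interval containing the other with the centers positioned accordingly, which is equivalent to disk containment. All other steps are routine once general position rules out the boundary (equality) cases.
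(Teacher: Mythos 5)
Your proof is correct and follows exactly the reasoning the paper leaves implicit: the paper states Observation~1 without proof, relying on the fact (also stated in the paper when computing $\widehat{S}$) that for disks centered on $\ell$, containment of disks is equivalent to containment of their intersection segments with $\ell$, which together with the general position assumption and the definition of redundancy gives the biconditional just as you argue. One trivial slip: in the reverse direction, $s_i \subseteq s_j$ makes $s_j$ (not $s_i$) redundant under the paper's definition, but the contradiction with $s_j \in \widehat{S}$ is unaffected.
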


It is not difficult to see that it suffices to work on $\widehat{S}$ instead of $S$. Indeed, suppose $P_{opt}$ is an optimal solution for $\widehat{S}$. Then, for any disk $s\in S\setminus \widehat{S}$, there must be a disk $s'\in \widehat{S}$ such that $s$ contains $s'$. Hence, any point of $P_{opt}$ hitting $s'$ must hit $s$ as well.

We can easily compute $\widehat{S}$ in $O(m \log m)$ time in any metric. Indeed, because all disks of $S$ are centered at $\ell$, a disk $s_k$ contains another disk $s_j$ if and only the segment $s_k\cap \ell$ contains the segment $s_j\cap \ell$. Hence, it suffices to identify all redundant segments from $\{s_j\cap \ell\ | \ s_j\in S \}$. This can be easily done in $O(m\log m)$ time, e.g., by sweeping the endpoints of disks on $\ell$; we omit the details.

In what follows, to simplify the notation, we assume that $S=\widehat{S}$, i.e., $S$ does not have any redundant disk. As such, $S$ has the Non-Containment property in Observation~\ref{obser:FIFO}. As will be seen later, the Non-Containment property is very helpful in designing algorithms.

\section{Dual transformation and the 1D, unit-disk, and $L_1$ problems}
\label{sec:dualtransformation}

By making use of the Non-Containment property of $S$, we propose a {\em dual transformation} that can reduce our hitting set problem on $S$ and $P$ to an instance of the 1D dual coverage problem. More specifically, we will construct a set $S^*$ of points and a set $P^*$ of weighted segments on the $x$-axis such that an optimal solution for the coverage problem on $S^*$ and $P^*$ corresponds to an optimal solution for our original hitting set problem. We refer to it as the {\em 1D dual coverage problem}. To differentiate from the original hitting set problem on $P$ and $S$, we refer to the points of $S^*$ as {\em dual points} and the segments of $P^*$ as {\em dual segments}.

As will be seen later, $|S^*|=m$, but $|P^*|$ varies depending on the specific problem. Specifically, $|P^*|\leq n$ for the 1D, unit-disk, and $L_1$ cases, $|P^*|=O(n+m)$ for the $L_{\infty}$ case, and $|P^*|=O(m+\kappa)$ for the $L_2$ case. In what follows, we present the details of the dual transformation by defining $S^*$ and $P^*$.

For each disk $s_j\in S$, we define a dual point $s_j^*$ on the $x$-axis with $x$-coordinate equal to $j$. Define $S^*$ as the set of all $m$ points $s_1^*,s_2^*,\ldots,s_m^*$. As such, $|S^*|=m$.

We next define the set $P^*$ of dual segments. For each point $p_i\in P$, let $I_i$ be the set of indices of the disks of $S$ that are hit by $p_i$. We partition the indices of $I_i$ into maximal intervals of consecutive indices and let $\calI_i$ be the set of all these intervals. By definition, for each interval $[j_1,j_2]\in \calI_i$, $p_i$ hits all disks $s_j$ with $j_1\leq j\leq j_2$ but does not hit either $s_{j_1-1}$ or $s_{j_2+1}$; we define a dual segment on the $x$-axis whose left (resp., right) endpoint has $x$-coordinate equal to $j_1$ (resp., $j_2$) and whose weight is equal to $w(p_i)$ (for convenience, we sometimes also use the interval $[j_1,j_2]$ to represent the dual segment and refer to dual segments as intervals). We say that the dual segment is {\em defined} or {\em generated} by $p_i$. Let $P^*_i$ be the set of dual segments defined by the intervals of $\calI_i$. We define $P^*=\bigcup_{i=1}^nP^*_i$. The following observation follows the definition of dual segments.

\begin{observation}\label{obser:20}
$p_i$ hits a disk $s_j$ if and only if a dual segment of $P^*_i$ covers the dual point $s_j^*$.
\end{observation}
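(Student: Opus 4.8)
The plan is to prove Observation~\ref{obser:20} directly from the definitions of the dual point $s_j^*$ and the dual segments in $P^*_i$, unwinding both directions of the ``if and only if.'' Recall that $s_j^*$ has $x$-coordinate $j$, and that each dual segment of $P^*_i$ corresponds to a maximal interval $[j_1,j_2]\in\calI_i$ of consecutive indices of disks hit by $p_i$, represented on the $x$-axis as the segment with endpoints $j_1$ and $j_2$. Since the indices in $I_i$ are exactly partitioned into the intervals of $\calI_i$, a disk $s_j$ is hit by $p_i$ precisely when $j\in I_i$, which holds precisely when $j$ lies in some interval $[j_1,j_2]\in\calI_i$.

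First I would establish the forward direction: if $p_i$ hits $s_j$, then $j\in I_i$, so by the partition $j\in[j_1,j_2]$ for some $[j_1,j_2]\in\calI_i$; the corresponding dual segment in $P^*_i$ has endpoints $j_1\le j\le j_2$ on the $x$-axis, hence it covers the point with $x$-coordinate $j$, which is $s_j^*$. Conversely, suppose some dual segment of $P^*_i$, say the one for $[j_1,j_2]\in\calI_i$, covers $s_j^*$; then $j_1\le j\le j_2$, and since $j$ is an integer in a block of consecutive indices all belonging to $I_i$, we get $j\in I_i$, i.e., $p_i$ hits $s_j$. One small point worth noting is that although a dual segment is a continuous object on the $x$-axis, the dual points $s_j^*$ it can possibly contain all have integer $x$-coordinates, so ``covers $s_j^*$'' is equivalent to the integer condition $j_1\le j\le j_2$; this is the only place a tiny bit of care is needed.

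This observation is essentially a bookkeeping statement, so I do not expect a genuine obstacle; the ``hard part'' is merely being careful about the correspondence between the combinatorial object (the index interval $[j_1,j_2]$) and the geometric object (the dual segment with those endpoints), and observing that the sets $\calI_i$ form a partition of $I_i$ so that no disk hit by $p_i$ is missed and no extra disk is spuriously covered. I would keep the proof to a few sentences, since its role is to justify that the dual transformation faithfully encodes the hitting relation, which will then be leveraged in the subsequent reduction to the 1D dual coverage problem.
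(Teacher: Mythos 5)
Your proof is correct and matches the paper's treatment: the paper simply notes that the observation follows from the definition of dual segments, and your argument is that same definitional unwinding, spelled out in both directions.
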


Suppose we have an optimal solution $P^*_{opt}$ for the 1D dual coverage problem on $P^*$ and $S^*$, we obtain an optimal solution $P_{opt}$ for the original hitting set problem on $P$ and $S$ as follow: for each segment of $P^*_{opt}$, if it is from $P^*_i$ for some $i$, then we include $p_i$ into $P_{opt}$.

Clearly, $|S^*|=m$. We will prove later in this section that $|P^*_i|\leq 1$ for all $1\leq i\leq n$ in the 1D problem, the unit-disk case, and the $L_1$ metric, and thus $|P^*|\leq n$ for all these cases. Since $|P^*_i|\leq 1$ for all $1\leq i\leq n$, in light of Observation~\ref{obser:20}, $P_{opt}$ constructed above is an optimal solution of the original hitting set problem. Therefore, one can solve the original hitting set problem for the above cases with the following three main steps: (1) Compute $S^*$ and $P^*$; (2) apply the algorithm for the 1D dual coverage problem in~\cite{ref:PedersenAl22} to compute $P^*_{opt}$, which takes $O((|S^*|+|P^*|)\log (|S^*|+|P^*|))$ time~\cite{ref:PedersenAl22}; (3) derive $P_{opt}$ from $P^*_{opt}$. For the first step, computing $S^*$ is straightforward. For $P^*$, we will show later that for all above three cases (1D, unit-disk, $L_1$), $P^*$ can be computed in $O((n+m)\log (n+m))$ time. As $|S^*|=m$ and $|P^*|\leq n$, the second step can be done in $O((n+m)\log (n+m))$ time~\cite{ref:PedersenAl22}. As such, the hitting set problem of the above three cases can be solved in $O((n+m)\log (n+m))$ time.

For the $L_{\infty}$ metric, we will prove in Section~\ref{sec:infinity} that $|P^*|=O(n+m)$ but each $P^*_i$ may have multiple segments. If $P^*_i$ has multiple segments, a potential issue is the following: If two segments of $P^*_i$ are in $P^*_{opt}$, then the weights of both segments will be counted in the optimal solution value (i.e., the total weight of all segments of $P^*_{opt}$), which corresponds to counting the weight of $p_i$ twice in $P_{opt}$. To resolve the issue, we prove in Section~\ref{sec:infinity} that even if $|P^*_i|\geq 2$, at most one dual segment of $P^*_i$ will appear in any optimal solution $P^*_{opt}$. As such, $P_{opt}$ constructed above is an optimal solution for the original hitting set problem. Besides proving the upper bound $|P^*|=O(n+m)$, another challenge of the $L_{\infty}$ problem is to compute $P^*$ efficiently, for which we propose an $O((n+m)\log (n+m))$ time algorithm. Consequently, the $L_{\infty}$ hitting set problem can be solved in  $O((n+m)\log (n+m))$ time.

For the $L_2$ metric, we will show in Section~\ref{sec:l2} that $|P^*|=O(m+\kappa)$. Like the $L_{\infty}$ case, each $P^*_i$ may have multiple segments but we can also prove that $P^*_i$ can contribute at most one segment to any optimal solution $P^*_{opt}$. Hence, $P_{opt}$ constructed above is an optimal solution for the original hitting set problem.
We present an algorithm that can compute $P^*$ in $O((n+m)\log (n+m)+\kappa\log m)$ time. As such, the $L_2$ hitting set problem can be solved in $O((m+n)\log(m+n)+\kappa \log m)$ time. Alternatively, a straightforward approach can prove $|P^*|=O(nm)$ and compute $P^*$ in $O(nm)$ time; hence, we can also solve the problem in $O(nm\log (n+m))$ time.


In the rest of this section, following the above framework, we will solve the 1D problem, the unit-disk case, and the $L_1$ case in Sections~\ref{sec:1Dproblem}, \ref{sec:unitDist}, and \ref{sec:L_1}, respectively.

\subsection{The 1D problem}
\label{sec:1Dproblem}
In the 1D problem, all points of $P$ are on $\ell$ and each disk $s_i\in S$ is a line segment on $\ell$, and thus $l_i$ and $r_i$ are the left and right endpoints of $s_i$, respectively.
We follow the above dual transformation and have the following lemma.

\begin{lemma}\label{lem:10}
In the 1D problem, $|P_i^*|\leq 1$ for all $1\leq i\leq n$. In addition, $P_i^*$ for all $1\leq i\leq n$ can be computed in $O((n+m)\log (n+m))$ time.
\end{lemma}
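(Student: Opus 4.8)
The plan is to establish the two assertions of Lemma~\ref{lem:10} in turn: first the combinatorial claim that each point $p_i \in P$ generates at most one maximal interval of hit-disk indices (so $|\calI_i| \le 1$ and hence $|P_i^*| \le 1$), and then the algorithmic claim that all the $P_i^*$ can be computed in $O((n+m)\log(n+m))$ time. For the first part, I would argue directly from the Non-Containment property of $S$ (Observation~\ref{obser:FIFO}). Fix a point $p_i \in P$, which lies on $\ell$ in the 1D case. A disk $s_j$ is a segment $[x(l_j), x(r_j)]$ on $\ell$, so $p_i$ hits $s_j$ exactly when $x(l_j) \le x(p_i) \le x(r_j)$. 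The key step is to show that the set $I_i = \{\, j : x(l_j) \le x(p_i) \le x(r_j)\,\}$ is an interval of consecutive indices. Suppose $j_1 < j < j_2$ with $j_1, j_2 \in I_i$; I want $j \in I_i$. Since the disks are sorted by center and have the Non-Containment property, both $x(l_j)$ and $x(r_j)$ lie between the corresponding quantities for $s_{j_1}$ and $s_{j_2}$: concretely, $x(l_{j_1}) < x(l_j) < x(l_{j_2}) \le x(p_i)$ gives $x(l_j) \le x(p_i)$, and $x(p_i) \le x(r_{j_1}) < x(r_j)$... wait, I need the right inequality direction — from $x(l_{j_1}) < x(l_j)$ and $j_1 \in I_i$ we do not immediately get $x(l_j) \le x(p_i)$, so instead I use: $l_j$ lies to the right of $l_{j_1}$, hence $x(l_j) \le x(l_{j_2}) \le x(p_i)$ by Non-Containment on the left endpoints, and symmetrically $x(r_j) \ge x(r_{j_1}) \ge x(p_i)$ by Non-Containment on the right endpoints. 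Thus $p_i$ hits $s_j$, so $I_i$ is a single maximal interval and $|P_i^*| \le 1$.

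For the algorithmic part, the plan is to compute for each $p_i$ the (single) maximal index interval $[j_1, j_2] = I_i$, which by the argument above equals $\{\, j : x(l_j) \le x(p_i)\,\} \cap \{\, j : x(r_j) \ge x(p_i)\,\}$. Because $S$ is sorted by center and has the Non-Containment property, the sequences $x(l_1) < x(l_2) < \cdots < x(l_m)$ and $x(r_1) < x(r_2) < \cdots < x(r_m)$ are both sorted. Hence $j_2$ is the largest $j$ with $x(l_j) \le x(p_i)$, found by binary search in the sorted list of left endpoints, and $j_1$ is the smallest $j$ with $x(r_j) \ge x(p_i)$, found by binary search in the sorted list of right endpoints. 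After the initial $O(m\log m)$ sort (already subsumed in the $\widehat S$ computation) and the $O(n\log n)$ sort of $P$, each of the $n$ points costs $O(\log m)$ for the two binary searches; we then discard any $p_i$ with $j_1 > j_2$ (it hits no disk and contributes no dual segment) and otherwise emit the dual segment $[j_1, j_2]$ with weight $w(p_i)$. The total is $O((n+m)\log(n+m))$.

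I do not expect any single step to be a serious obstacle — this is the easiest of the cases, as the paper itself signals ("relatively easy to overcome for the 1D, unit-disk, and $L_1$ cases"). The one place to be careful is getting the inequality bookkeeping in the consecutiveness argument exactly right: one must invoke Non-Containment separately on the left-endpoint order and the right-endpoint order (rather than trying to squeeze both from a single disk's containment), and handle the general-position assumption that no two of the $l_i, r_i$ coincide so that the orderings are strict and the binary searches are unambiguous. The only other minor subtlety is confirming that a point hitting zero disks simply produces $\calI_i = \emptyset$, consistent with $|P_i^*| \le 1$, and that such points are harmlessly dropped.
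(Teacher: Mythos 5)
Your proof is correct. The combinatorial half matches the paper: the paper simply asserts that the Non-Containment property forces the hit indices to be consecutive, and your inequality argument ($x(l_j)\leq x(l_{j_2})\leq x(p_i)$ and $x(r_j)\geq x(r_{j_1})\geq x(p_i)$ for $j_1<j<j_2$) is exactly the right way to fill in that assertion; note it tacitly uses the fact that, for disks centered on $\ell$ with no containment, index order (by center) coincides with the order of left endpoints and of right endpoints --- a fact the paper itself invokes elsewhere (e.g., in the proof of Lemma~\ref{lem:maxrange}), and which does hold here. The algorithmic half is where you diverge: the paper sweeps a point along $\ell$ maintaining a queue of currently hit disks (insert at the rear at a left endpoint, delete from the front at a right endpoint, and read off $[j_1,j_2]$ as the front and rear indices when a point of $P$ is reached), which runs in $O(n+m)$ time after sorting; you instead exploit the monotonicity of the two endpoint sequences and answer each point by two binary searches, taking $O(n\log m)$ after sorting. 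Both implementations are valid and stay within the claimed $O((n+m)\log(n+m))$ bound (the sort dominates either way), so the difference is one of taste: the paper's sweep is output-linear after sorting and mirrors the sweeps used in its other cases, while your binary-search version avoids maintaining any sweep structure and makes the prefix--suffix structure of $I_i=[j_1,j_2]$ explicit.
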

\begin{proof}
Consider a point $p_i\in P$. If $p_i$ does not hit any disk, then $|P_i^*|=0$. Otherwise, since $S$ has the Non-Containment property, the indices of the segments of $S$ hit by $p_i$ must be consecutive. Hence, $|P_i^*|=1$. This proves the first part of the lemma.

To compute $P_i^*$ for all $1\leq i\leq n$, we use a straightforward sweeping algorithm. We sweep a point $q$ on $\ell$ from left to right. During the sweeping, we store in $Q$ all disks hit by $q$ sorted by their indices. When $q$ encounters the left endpoint of a disk $s_j$, we add $s_j$ to the rear of $Q$. When $q$ encounters the right endpoint of a disk $s_j$, $s_j$ must be at the front of $Q$ due to the Non-Containment property of $S$ and we remove $s_j$ from $Q$. When $q$ encounters a point $p_i$, we report $P_i^*=\{[j_1,j_2]\}$, where $j_1$ (resp., $j_2$) is the index of the front (resp., rear) disk of $Q$. After the endpoints of all disks of $S$ and the points of $P$ are sorted on $\ell$ in $O((n+m)\log (n+m))$ time, the above sweeping algorithm can be implemented in $O(n+m)$ time. \qed
\end{proof}

In light of Lemma~\ref{lem:10}, using the dual transformation, the 1D hitting set problem can be solved in $O((n+m)\log (n+m))$ time. The result is summarized in the following theorem, whose proof also provides a simple dynamic programming algorithm that solves the problem directly.

\begin{theorem}
The line-constrained 1D hitting set problem can be solved in $O((n+m)\log (n+m))$ time.
\end{theorem}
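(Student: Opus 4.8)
The bound follows immediately from the material already developed: by Lemma~\ref{lem:10} each $P^*_i$ has at most one dual segment, so $|P^*|\le n$; together with the trivial construction of $S^*$ and the $O((n+m)\log(n+m))$-time construction of $P^*$ from Lemma~\ref{lem:10}, step~(1) of the framework of Section~\ref{sec:dualtransformation} takes $O((n+m)\log(n+m))$ time; step~(2), the 1D dual coverage algorithm of~\cite{ref:PedersenAl22}, runs in $O((|S^*|+|P^*|)\log(|S^*|+|P^*|))=O((n+m)\log(n+m))$ time; and step~(3) reads off $P_{opt}$ in linear time, its optimality being exactly Observation~\ref{obser:20} together with $|P^*_i|\le 1$. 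To keep the 1D case self-contained I would in addition present the following direct dynamic program.

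Keep the sorted order $p_1,\dots,p_n$ and write $s_j=[l_j,r_j]\subseteq\ell$. Add two sentinel points, $p_0$ at $-\infty$ and $p_{n+1}$ to the right of every $r_j$, both of weight~$0$. For $0\le i\le n+1$ let $W(i)$ be the minimum total weight of a subset $P'\subseteq\{p_0,\dots,p_i\}$ with $p_i\in P'$ that hits every disk $s_j$ with $x(l_j)\le x(p_i)$; such a disk is hit by $p_i$ itself when $x(p_i)\le x(r_j)$, and otherwise (so $x(r_j)<x(p_i)$) must be hit by an earlier chosen point. With $W(0)=0$ and, for $i\ge 1$, peeling off $p_i$ and letting $p_k$ be the next chosen point to its left gives the recurrence $W(i)=w(p_i)+\min\{\,W(k) : k<i,\ x(p_k)\ge L_i\,\}$, where $L_i=\max\{\,x(l_j) : x(r_j)<x(p_i)\,\}$ (and $L_i=-\infty$ if no such disk exists): the side condition $x(p_k)\ge L_i$ says exactly that no disk lies strictly between $p_k$ and $p_i$, which is precisely what makes $P'\setminus\{p_i\}$ feasible for the $W(k)$ subproblem and lets $p_i$ take care of all disks with $x(p_k)<x(l_j)\le x(p_i)$. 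The optimum of the hitting set problem equals $W(n+1)$ (here $L_{n+1}=\max_j x(l_j)$, so the final step forces every disk to be hit), and $P_{opt}$ is recovered by tracing back the choices realizing $W(n+1)$.

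For the running time, since the points are sorted, the set $\{k<i : x(p_k)\ge L_i\}$ is a contiguous block of indices, so the recurrence is a range-minimum query over the already-computed $W$-values; I would maintain a balanced search tree (or a segment tree) on indices supporting insertion and range minimum in $O(\log n)$ time and process $i=1,\dots,n+1$ in order. The thresholds $L_i$, and hence the left endpoints of these blocks (by binary search on the sorted points), are produced by a single left-to-right sweep over the merged sorted list of the $r_j$'s and the $x(p_i)$'s that maintains the running maximum of $x(l_j)$ over the disks whose right endpoint has already been passed. Sorting dominates at $O((n+m)\log(n+m))$; the sweep costs $O(n+m)$ and the $n+1$ dynamic-programming steps cost $O(\log n)$ each, which gives the claimed bound.

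Correctness of the recurrence is the only point needing care, and it is a routine exchange argument: given any feasible hitting set, list its points left to right as $p_{i_1}<\dots<p_{i_t}$; feasibility forces $p_{i_{r-1}}$ to be a legal predecessor of $p_{i_r}$ for every $r$ (a disk sitting strictly between two consecutive chosen points is hit by none of the chosen points), and likewise $L_{i_1}=-\infty$ and $x(p_{i_t})\ge\max_j x(l_j)$, so a short induction gives $W(i_t)\le\sum_{r} w(p_{i_r})$ and therefore $W(n+1)\le\mathrm{OPT}$, while the trace-back establishes the reverse inequality. I do not anticipate a genuine obstacle here --- this case is the warm-up; the real difficulties, namely bounding $|P^*|$ and computing $P^*$ within the time budget, arise only for the $L_{\infty}$ and $L_2$ metrics treated in Sections~\ref{sec:infinity} and~\ref{sec:l2}.
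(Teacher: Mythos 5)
Your proposal is correct and follows essentially the same route as the paper: the first paragraph (Lemma~\ref{lem:10} plus the dual-transformation framework and the 1D dual coverage algorithm of~\cite{ref:PedersenAl22}) is exactly the paper's argument, and your supplementary dynamic program is a sound variant of the paper's own alternative proof, differing only in that you index subproblems by points (threshold $L_i$ and a range-minimum over a contiguous block of indices) whereas the paper indexes them by disks ($W(j)$ with $cost(i)=w(p_i)+W(a_i)$, evaluated by a sweep with an augmented balanced search tree). Both yield the claimed $O((n+m)\log (n+m))$ bound, so nothing needs fixing.
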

\begin{proof}
In addition to the above method using the dual transformation and applying the 1D dual coverage algorithm~\cite{ref:PedersenAl22}, we present below a simple dynamic programming algorithm that solves the problem directly; the runtime of the algorithm is also $O((n+m)\log (n+m))$.

For each point $p_i\in P$, let $a_i$ refer to the largest index of the disk in $S$ whose right endpoint is strictly left of $p_i$, i.e., $a_i=\arg\max_{1\leq j\leq m}\{s_j \in S': x(r_j)<x(p_i)\}$. Due to the Non-Containment property of $S$, the indices $a_i$ for all $i=1,2,\ldots,n$ can be obtained in $O(n+m)$ time after we sort all points of $P$ along with the endpoints of all segments of $S$.

For each $j\in [1,m]$, define $W(j)$ to be the minimum total weight of any subset of points of $P$ that hit all disks of $S[1,j]$. Our goal is thus to compute $W(m)$. For convenience, we set $W(0)=0$.
For each point $p_i\in P$, we define its {\em cost} as $cost(i)=w(p_i)+W(a_i)$.
As such, $W(j)$ is equal to the minimum $cost(i)$ among all points $p_i\in P$ that hit $s_j$. This is the recursive relation of our dynamic programming algorithm.

We sweep a point $q$ on $\ell$ from left to right. During the sweeping, we maintain the subset $P'$ of all points of $P$ that are to the left of $q$ and the cost values for all points of $P'$ as well as the values $W(j)$ for all disks $s_j$ whose right endpoints are to the left of $q$. An event happens when $q$ encounters a point of $P$ or the right endpoint of a segment of $S$.
If $q$ encounters a point $p_i\in P$, we set $cost(i)=w(p_i)+W(a_i)$ and insert $p_i$ into $P'$. If $q$ encounters the right endpoint of a segment $s_j$, then among the points of $P'$ that hit $s_j$, we find the one with minimum cost and set $W(j)$ to the cost value of the point.
If we store the points of $P'$ by an augmented balanced binary search tree with their $x$-coordinates as keys and each node storing the minimum cost of all leaves in the subtree rooted at the node, then processing each event can be done in $O(\log n)$ time.

As such, the sweeping takes $O((n+m)\log n)$ time, after sorting the points of $P$ and all segment endpoints in $O((n+m)\log(n+m))$ time. \qed
\end{proof}

\subsection{The unit-disk case}
\label{sec:unitDist}

In the unit-disk case, all disks of $S$ have the same radius. We follow the dual transformation and have the following lemma.

\begin{lemma}\label{lem:20}
In the unit-disk case, $|P_i^*|\leq 1$ for any $1\leq i\leq n$. In addition, $P_i^*$ for all $1\leq i\leq n$ can be computed in $O((n+m)\log (n+m))$ time.
\end{lemma}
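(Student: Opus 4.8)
The plan is to mimic the structure of Lemma~\ref{lem:10}: first argue that the indices of disks hit by a single point $p_i$ form a contiguous block, so that $|\calI_i|\le 1$ and hence $|P_i^*|\le 1$; then give a sweep to compute all the $P_i^*$ in $O((n+m)\log(n+m))$ time. For the first part, let $p_i\in P$ be a point above or on $\ell$, and suppose $p_i$ hits disks $s_a$ and $s_c$ with $a<c$; I want to show $p_i$ hits $s_b$ for every $a<b<c$. Here is where the unit-radius assumption does the work: since all disks have the same radius $\rho$, $p_i\in s_j$ is equivalent to $|x(p_i)-c_j|\le \sqrt{\rho^2-y(p_i)^2}=:\rho_i$, where $c_j$ is the $x$-coordinate of the center of $s_j$ (this requires $y(p_i)\le\rho$, which holds since $p_i$ hits some disk). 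Thus the set of disks hit by $p_i$ is exactly those whose centers lie in the interval $[x(p_i)-\rho_i,\,x(p_i)+\rho_i]$ on $\ell$. Since we have sorted the disks $s_1,\dots,s_m$ by their centers' $x$-coordinates, the indices of hit disks form a contiguous range $[j_1,j_2]$, giving $|\calI_i|\le 1$ and $|P_i^*|\le 1$, with $P_i^*=\{[j_1,j_2]\}$ when nonempty. (Equivalently one could invoke the Non-Containment property as in Lemma~\ref{lem:10}, but the direct argument via center coordinates is cleanest here and also tells us how to find $j_1,j_2$.)

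For the computation, I would first sort $P$ by $x$-coordinate and the disk centers by $x$-coordinate together, in $O((n+m)\log(n+m))$ time; since $S$ is already indexed in this sorted order, no further sorting of $S$ is needed. Then for each $p_i$ with $y(p_i)\le\rho$, the value $\rho_i=\sqrt{\rho^2-y(p_i)^2}$ is computed in $O(1)$ time, and $j_1$ (resp.\ $j_2$) is the smallest (resp.\ largest) index $j$ with $x(p_i)-\rho_i\le c_j\le x(p_i)+\rho_i$. These two indices can be found by binary search over the sorted center coordinates in $O(\log m)$ time each, or — if one prefers a single sweep — by a left-to-right sweep of $q$ over $\ell$ maintaining pointers into the list of sorted centers, exactly analogous to the queue-based sweep in the proof of Lemma~\ref{lem:10}: the interval $[x(q)-\rho_i,\,x(q)+\rho_i]$ has monotone endpoints as $q$ moves right only if the $\rho_i$ were equal, which they are not, so the binary-search implementation is the safe one and already gives $O(n\log m)$ total, which is within the bound. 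If $j_1\le j_2$ we output $P_i^*=\{[j_1,j_2]\}$ with weight $w(p_i)$; otherwise $P_i^*=\emptyset$. One should also recall the general-position assumption that no point of $P$ lies on a disk boundary, so the comparisons $x(p_i)\pm\rho_i$ versus the $c_j$ are strict and there is no ambiguity in $j_1,j_2$; the degenerate case $y(p_i)=\rho$ is handled by the hit-some-disk assumption or by perturbation.

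Honestly, there is no serious obstacle here — this lemma is the easy analogue of Lemma~\ref{lem:10}, and the only thing to be careful about is the reduction of $p_i\in s_j$ to a one-dimensional condition on the center coordinate, which needs the equal-radius hypothesis and the bound $y(p_i)\le\rho$ (valid because $p_i$ hits at least one disk). The ``main obstacle,'' such as it is, is purely expository: making sure the write-up explains why contiguity of hit indices follows from sorting by center and why $\rho_i$ is well defined, and being explicit that the $O((n+m)\log(n+m))$ bound comes from the initial sort (the per-point work being only $O(\log m)$). The harder cases $L_\infty$ and $L_2$, where $|P_i^*|$ may exceed $1$, are deferred to Sections~\ref{sec:infinity} and~\ref{sec:l2} and are not our concern here.
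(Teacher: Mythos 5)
Your proof is correct and follows essentially the same route as the paper: the paper also reduces "$p_i$ hits $s_j$" to "the center of $s_j$ lies on the segment $D(p_i)\cap\ell$" (your interval $[x(p_i)-\rho_i,\,x(p_i)+\rho_i]$ is exactly that segment), gets contiguity of hit indices from the center-sorted order, and computes each $P_i^*$ by binary search over the sorted centers with the two endpoints of that segment. Your write-up just makes the chord-length computation and the empty/degenerate cases more explicit; there is no substantive difference.
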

\begin{proof}
Consider a point $p_i\in P$. Observe that $p_i$ hits a disk $s_j$ if and only
if the segment $D(p_i)\cap \ell$ covers the center of $s_j$, where $D(p_i)$ is the
	unit disk centered at $p_i$. By definition, the indices of the disks whose
centers are covered by the segment $D(p_i)\cap \ell$ must be consecutive. Hence, $|P_i^*|\leq 1$ must hold.

To compute $P_i^*$, it suffices to determine the disks whose
centers are covered by $D(p_i)\cap \ell$. This can be easily done in $O((n+m)\log
(n+m))$ time for all $p_i\in P$
(e.g., first sort all disk centers and then do binary search on the
sorted list with the two endpoints of $D(p_i)\cap \ell$ for each $p_i\in P$).
\qed
\end{proof}

In light of Lemma~\ref{lem:20}, using the dual transformation, the unit-disk case can be solved in $O((n+m)\log (n+m))$ time.

\begin{theorem}
The line-constrained unit-disk hitting set problem can be solved in $O((n+m)\log (n+m))$ time.
\end{theorem}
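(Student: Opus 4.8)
The plan is to combine the dual transformation with the 1D dual coverage algorithm of~\cite{ref:PedersenAl22}, exactly as outlined in the framework preceding Lemma~\ref{lem:20}. First I would invoke Lemma~\ref{lem:20}, which gives two things: $|P_i^*|\leq 1$ for every $1\leq i\leq n$, so that $|P^*|\leq n$; and the fact that all the sets $P_i^*$ (and hence $P^*=\bigcup_i P_i^*$) can be computed in $O((n+m)\log(n+m))$ time. Computing $S^*$ is trivial, since $s_j^*$ just has $x$-coordinate $j$, so the whole first step of the framework runs in $O((n+m)\log(n+m))$ time.

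Next I would apply the dual coverage algorithm of~\cite{ref:PedersenAl22} to the instance $(P^*,S^*)$, which by their result takes $O((|S^*|+|P^*|)\log(|S^*|+|P^*|))$ time; since $|S^*|=m$ and $|P^*|\leq n$, this is $O((n+m)\log(n+m))$. This produces an optimal cover $P^*_{opt}$. Because $|P_i^*|\leq 1$ for all $i$, each selected dual segment comes from a distinct point $p_i\in P$, so there is no double-counting issue, and the set $P_{opt}=\{p_i : P^*_i\cap P^*_{opt}\neq\emptyset\}$ has total weight equal to the weight of $P^*_{opt}$. By Observation~\ref{obser:20}, a point $p_i$ hits $s_j$ iff the dual segment of $P^*_i$ covers $s_j^*$, so $P_{opt}$ hits every disk of $S$ iff $P^*_{opt}$ covers every point of $S^*$; this correspondence is weight-preserving in both directions, so $P_{opt}$ is an optimal hitting set. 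Extracting $P_{opt}$ from $P^*_{opt}$ is a linear-time relabeling. Summing the three steps gives the claimed $O((n+m)\log(n+m))$ bound.

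There is essentially no obstacle here: the theorem is an immediate corollary of Lemma~\ref{lem:20} plugged into the already-established dual-transformation pipeline, with the quoted running time of the 1D dual coverage algorithm from~\cite{ref:PedersenAl22}. The only point that needs a sentence of care is noting that the $|P_i^*|\leq 1$ property is what rules out the weight-double-counting phenomenon discussed for the $L_\infty$ and $L_2$ cases, so that the naive reconstruction of $P_{opt}$ is already optimal. I would also recall (as the paper already does for $S$) that we may assume $S=\widehat S$ has the Non-Containment property at no asymptotic cost, since this is what Lemma~\ref{lem:20}'s proof relies on through the consecutiveness of hit indices.
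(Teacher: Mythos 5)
Your proposal is correct and follows essentially the same route as the paper: the paper also derives this theorem directly from Lemma~\ref{lem:20} via the three-step dual-transformation framework (compute $S^*$ and $P^*$, run the 1D dual coverage algorithm of~\cite{ref:PedersenAl22} in $O((n+m)\log(n+m))$ time since $|S^*|=m$ and $|P^*|\leq n$, then read off $P_{opt}$), with the $|P_i^*|\leq 1$ property guaranteeing the weight-preserving correspondence. Your added remarks about the Non-Containment assumption and the absence of double-counting are exactly the considerations the paper's framework already addresses.
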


\subsection{The $L_1$ metric}
\label{sec:L_1}

In the $L_1$ metric, each disk of $S$ is a diamond, whose boundary is comprised
of four edges of slopes 1 or -1, but the diamonds of $S$ may have different
radii.
We follow the dual transformation and have the following lemma.

\begin{lemma}\label{lem:30}
In the $L_1$ metric, $|P_i^*|\leq 1$ for any $1\leq i\leq n$. In addition, $P_i^*$ for all $1\leq i\leq n$ can be computed in $O((n+m)\log (n+m))$ time.
\end{lemma}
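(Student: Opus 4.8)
The plan is to mirror the structure of the proofs of Lemmas~\ref{lem:10} and~\ref{lem:20}. The whole point is to reduce the $L_1$ disk (a diamond) to a 1D object so that the Non-Containment property of $S$ forces consecutivity of hit indices, giving $|P_i^*|\le 1$. First I would observe that, since $\ell$ is the $x$-axis, a point $p_i=(x(p_i),y(p_i))$ with $y(p_i)\ge 0$ hits the diamond $s_j$ (center $c_j$ on $\ell$, $L_1$-radius $\rho_j$) if and only if $|x(p_i)-x(c_j)|+y(p_i)\le \rho_j$, i.e.\ if and only if $x(c_j)$ lies in the interval $[x(p_i)-(\rho_j-y(p_i)),\,x(p_i)+(\rho_j-y(p_i))]$. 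This is almost the unit-disk situation except the ``reach'' $\rho_j-y(p_i)$ now depends on $j$. The cleaner reformulation I would use: $p_i$ hits $s_j$ iff the point $p_i$ lies inside $s_j$, and inside $s_j$ the region at height $y(p_i)$ is exactly the horizontal segment $s_j\cap\{y=y(p_i)\}$; so the set of diamonds hit by $p_i$ is $\{s_j : s_j\cap\{y=y(p_i)\}\ni x(p_i)\}$. Now fix the horizontal line $y=y(p_i)$ and intersect every diamond of $S$ with it: each $s_j$ either misses the line or gives a subsegment $\sigma_j$ of $\ell'=\{y=y(p_i)\}$.

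The key step is to argue that the segments $\{\sigma_j\}$ that are nonempty still satisfy a Non-Containment-type property along $\ell'$, namely that their left endpoints and right endpoints are sorted consistently in the same order as $j$. I would prove this by noting that the left endpoint of $\sigma_j$ has $x$-coordinate $x(l_j)+y(p_i)$ when $x(l_j)\le x(c_j)$ (the left edge of the diamond has slope $+1$ emanating upward from $l_j$), and similarly the right endpoint is $x(r_j)-y(p_i)$; since the shift $+y(p_i)$ (resp.\ $-y(p_i)$) is the same constant for all $j$, the order of the $\sigma_j$'s left endpoints is exactly the order of the $x(l_j)$'s, and likewise for right endpoints. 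By Observation~\ref{obser:FIFO} applied to $S=\widehat S$, these two orders agree, so the nonempty $\sigma_j$'s have the Non-Containment property on $\ell'$. Hence the indices $j$ with $x(p_i)\in\sigma_j$ form a set of consecutive integers: if $j_1<j_2<j_3$ with $x(p_i)\in\sigma_{j_1}$ and $x(p_i)\in\sigma_{j_3}$ but $x(p_i)\notin\sigma_{j_2}$, then $x(p_i)$ is either left of all of $\sigma_{j_2}$ or right of all of it, and either way monotonicity of the endpoints forces $x(p_i)\notin\sigma_{j_1}$ or $x(p_i)\notin\sigma_{j_3}$, a contradiction. Therefore $\calI_i$ has at most one interval and $|P_i^*|\le 1$.

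For the computational part, I would compute $P_i^*$ for all $i$ by a sweep analogous to Lemma~\ref{lem:10}, but sweeping with a horizontal line moving upward in $y$ (equivalently, a diagonal sweep), or more simply: for each $p_i$ I need the leftmost and rightmost index $j$ with $x(c_j)\in[x(p_i)-(\rho_j-y(p_i)),x(p_i)+(\rho_j-y(p_i))]$. Rewriting, $p_i$ hits $s_j$ iff $x(p_i)-y(p_i)\in[x(l_j),\,x(r_j)+ \text{(something)}]$\,--- more carefully, $p_i$ hits $s_j$ iff $x(l_j)+y(p_i)\le x(p_i)\le x(r_j)-y(p_i)$, i.e.\ iff $x(l_j)\le x(p_i)-y(p_i)$ and $x(r_j)\ge x(p_i)+y(p_i)$. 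So, setting $u_i=x(p_i)-y(p_i)$ and $v_i=x(p_i)+y(p_i)$, the diamonds hit by $p_i$ are exactly those $s_j$ with $x(l_j)\le u_i$ and $x(r_j)\ge v_i$; by the Non-Containment property the qualifying $j$'s form the contiguous range $[\,\text{(smallest $j$ with $x(r_j)\ge v_i$)},\ \text{(largest $j$ with $x(l_j)\le u_i$)}\,]$ (if this range is nonempty and that common point lies in all intermediate $\sigma_j$, which consecutivity guarantees). Both endpoints of the range are found by binary search in the arrays $\{x(l_j)\}$ and $\{x(r_j)\}$ (each sorted, by Observation~\ref{obser:FIFO}), after one $O(m\log m)$ sort and $O(n\log n)$ for the $u_i,v_i$; total $O((n+m)\log(n+m))$.

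The main obstacle I anticipate is not the complexity bound but getting the endpoint-shift algebra exactly right, in particular handling the case $y(p_i)>\rho_j$ (the line $y=y(p_i)$ misses $s_j$ entirely, so $\sigma_j=\emptyset$) uniformly: one must check that in the ``$x(l_j)\le u_i$ and $x(r_j)\ge v_i$'' characterization this degenerate case is automatically excluded, i.e.\ that $x(l_j)\le u_i$ and $x(r_j)\ge v_i$ together already imply $y(p_i)\le\rho_j$. Indeed $x(r_j)-x(l_j)=2\rho_j$ and $v_i-u_i=2y(p_i)$, so the two inequalities give $2\rho_j=x(r_j)-x(l_j)\ge v_i-u_i=2y(p_i)$, which settles it; I would include this one-line check. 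The only other care needed is the general-position assumption (no point on a diamond boundary), which rules out boundary ties and makes the consecutivity argument clean, exactly as in the other cases.
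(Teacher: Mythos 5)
Your proof is correct, but it takes a genuinely different route from the paper's. The paper establishes $|P_i^*|\leq 1$ by contradiction: if $p_i$ defined two intervals $[j_1,j_2]$ and $[j_3,j_4]$, it considers the common intersection $C$ of the four diamonds, argues via Non-Containment that $C$ is a diamond with extreme points $l_{j_4}$ and $r_{j_1}$, and shows $C\subseteq s_j$ for every intermediate $j$, contradicting that $p_i\in C$ misses $s_j$; for the computation, it runs a left-to-right plane sweep maintaining two balanced search trees on the upper-left and upper-right diamond edges crossing the sweep line and, at each point $p_i$, reads off the interval endpoints from the lowest edges above $p_i$. You instead exploit the $45^\circ$ rotation implicit in the $L_1$ metric: writing $u_i=x(p_i)-y(p_i)$ and $v_i=x(p_i)+y(p_i)$, you characterize $p_i\in s_j$ exactly as $x(l_j)\leq u_i$ and $x(r_j)\geq v_i$ (and your one-line check that these two inequalities force $y(p_i)\leq\rho_j$ correctly disposes of the empty-slice case), so the hit indices are a prefix intersected with a suffix, hence consecutive, and each $P_i^*$ falls out of two binary searches in the sorted arrays of $x(l_j)$ and $x(r_j)$ --- no sweep or dynamic trees needed. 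This is simpler and closer in spirit to the paper's unit-disk argument (reducing membership to a 1D interval condition); the paper's sweep is heavier machinery here, though of the same flavor as its $L_\infty$ and $L_2$ treatments. One small point to make explicit: Observation~\ref{obser:FIFO} only says the left-endpoint and right-endpoint orders agree; to conclude that both coincide with the index order (which your prefix/suffix argument and binary searches need) you must combine it with the fact that disks are indexed by center order, so that $x(l_j)$ and $x(r_j)$ are both increasing in $j$ --- the same fact the paper itself invokes in the proof of Lemma~\ref{lem:maxrange}, so this is a cosmetic addition rather than a gap.
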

\begin{proof}
Assume to the contrary that $|P_i^*|>1$. Let $[j_1,j_2]$ and
$[j_3,j_4]$ be two consecutive intervals. Hence, $j_2+1 \leq j_3-1$, and $p_i$ is in the common
intersection $C$ of the four disks $s_{j_1}$, $s_{j_2}$, $s_{j_3}$, and $s_{j_4}$,
while $p_i$ does not hit $s_j$ for any $j\in [j_2+1,j_3-1]$.
Note that $C$ is also a diamond with its leftmost and rightmost points on
$\ell$.
Further, due to the Non-Containment property of $S$, the leftmost point of $C$ is $l_{j_4}$
and the rightmost endpoint is $r_{j_1}$ (e.g., see Fig.~\ref{fig:fourdisks}).

\begin{figure}[t]
\begin{minipage}[t]{\textwidth}
\begin{center}
\includegraphics[height=1.1in]{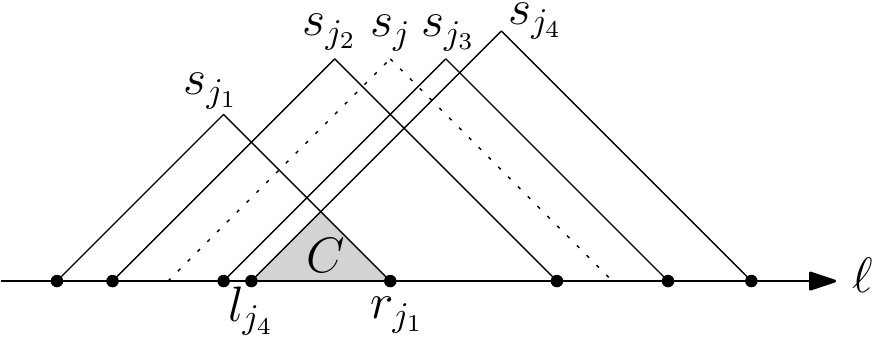}
\caption{\footnotesize Illustrating the disks $s_{j_1}$, $s_{j_2}$, $s_{j_3}$, $s_{j_4}$, and $s_j$ for some $j\in [j_2+1,j_3-1]$; only the portions of the disks above $\ell$ are shown.}
\label{fig:fourdisks}
\end{center}
\end{minipage}
\vspace{-0.15in}
\end{figure}

On the other hand, consider any $j\in [j_2+1,j_3-1]$. Since $j_1<j<j_4$, due to the Non-Containment property of $S$,
$x(l_j)\leq x(l_{j_4})$ and $x(r_{j_1})\leq x(r_j)$, implying that $C\subseteq s_j$
since both $C$ and $s_j$ are diamonds (e.g., see Fig.~\ref{fig:fourdisks}). As $p_i\in C$, $p_i$ must hit
$s_j$. But this incurs contradiction since $p_i$ does not hit $s_j$.

This proves that $|P_i^*|\leq 1$ for any $1\leq i\leq n$.

In the following, we describe an algorithm to compute $P_i^*$ for all $1\leq i\leq n$.

We sweep a vertical line $\ell'$ in the plane from left to right. During the sweeping we
maintain two subsets $S_L$ and $S_R$ of $S$: $S_L$ (resp., $S_R$) consists of
all disks of $S$ whose upper left (resp., right) edges intersecting
$\ell'$; disks of $S_L$ (resp., $S_R$) are stored in a binary
search tree $T_L$ (resp., $T_R$) sorted by the $y$-coordinates of the
intersections between $\ell'$ and the upper left (resp., right) edges of the disks
of $S_L$ (resp., $S_R$).
An event happens if $\ell'$ encounters a point of $P$, the left endpoint,
the right endpoint, or the center of a disk $s_j$.

If $\ell'$ encounters the left endpoint of a disk $s_j$, we insert $s_j$ into $T_L$. If
$\ell'$ encounters the center of a disk $s_j$, we remove $s_j$ from $T_L$ and insert
it into $T_R$. If $\ell'$ encounters the right endpoint of a disk $s_j$, we remove
$s_j$ from $T_R$. If $\ell'$ encounters a point $p_i\in P$, we compute the only interval
$[j_1,j_2]$ of $P_i^*$ as follows.

\begin{figure}[h]
\begin{minipage}[t]{\textwidth}
\begin{center}
\includegraphics[height=1.2in]{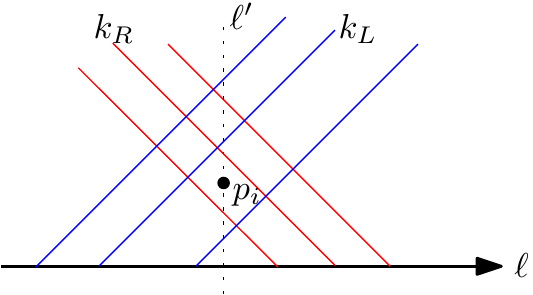}
\caption{\footnotesize Illustrating the processing of the event at $p_i$: The red segments are the upper right edges of disks in $T_R$ and the blue segments are upper left edges of disks in $T_L$.}
\label{fig:sweepdisk}
\end{center}
\end{minipage}
\vspace{-0.15in}
\end{figure}

Using $T_R$, we find the disk of $T_R$ whose
upper right edge is the lowest but above $p_i$;
let $k_R$ be the index of the disk (e.g., see Fig.~\ref{fig:sweepdisk}). Similarly, we find the disk
of $T_L$ whose upper left edge is the lowest but above $p_i$; let $k_L$ be the
index of the disk. Both $k_R$ and $k_L$ can be found in $O(\log m)$ time.

Assuming that both $k_R$ and $k_L$ are well defined, we claim that $j_1=k_R$ and
$j_2=k_L$. Indeed, for any disk $s_j\in T_R$ that is below $s_{k_R}$, $p_i$ does
not hit $s_j$ and $j<k_R$ due to the Non-Containment property of $S$. On the other hand,
for any disk $s_j\in T_R$ that is above $s_{k_R}$, $p_i$
hits $s_j$ and $j>k_R$ due to the Non-Containment property of $S$. Similarly, for any
disk $s_j\in T_L$ that is below $s_{k_L}$, $p_i$ does
not hit $s_j$ and $j>k_L$, and for any disk $s_j\in T_L$ that is above $s_{k_L}$, $p_i$
hits $s_j$ and $j<k_L$. Note that the indices of disks in $T_L$ are larger than
those in $S_R$ due to the Non-Containment property of $S$. Also note that disks not in $T_L$ or $T_R$ cannot be hit by $p_i$. As such, $j_1=k_R$ and $j_2=k_L$ must hold.

The above argument assumes that both $k_R$ and $k_L$ are well defined. If
neither $k_R$ nor $k_L$ exists, then $P_i^*=\emptyset$. If $k_R$ exists while
$k_L$ does not, then $j_1=k_R$ and $j_2$ is the index of the highest disk of
$T_R$. If $k_L$ exists while $k_R$ does not, then $j_2=k_L$ and $j_1$ is the
highest disk of $T_L$. The proof is similar to the above and we omit the
details.

It is not difficult to see that the above sweeping algorithm can be implemented in $O((n+m)\log (n+m))$
time.
\qed
\end{proof}

In light of Lemma~\ref{lem:30}, using the dual transformation, the $L_1$ case can be solved in $O((n+m)\log (n+m))$ time.

\begin{theorem}
The line-constrained $L_1$ hitting set problem can be solved in $O((n+m)\log (n+m))$ time.
\end{theorem}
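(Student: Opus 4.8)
The plan is to follow the framework already set up for the 1D and unit-disk cases: invoke the dual transformation, apply Lemma~\ref{lem:30} to bound and compute $P^*$, then run the 1D dual coverage algorithm of~\cite{ref:PedersenAl22}. Concretely, I would first compute $\widehat{S}$ in $O(m\log m)$ time so that $S$ has the Non-Containment property, then construct $S^*$ (trivially, in $O(m)$ time) and $P^*=\bigcup_i P^*_i$ using the sweep-line procedure in the proof of Lemma~\ref{lem:30}, which takes $O((n+m)\log(n+m))$ time and, by the first part of that lemma, produces at most one dual segment per point $p_i$, so $|P^*|\le n$.

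Next I would feed the instance $(S^*,P^*)$, with $|S^*|=m$ and $|P^*|\le n$, into the 1D dual coverage algorithm of~\cite{ref:PedersenAl22}, obtaining an optimal solution $P^*_{opt}$ in $O((|S^*|+|P^*|)\log(|S^*|+|P^*|))=O((n+m)\log(n+m))$ time. From $P^*_{opt}$ I recover $P_{opt}$ by including $p_i$ whenever some dual segment of $P^*_i$ lies in $P^*_{opt}$; since $|P^*_i|\le 1$, the total weight of $P_{opt}$ equals the total weight of $P^*_{opt}$, and by Observation~\ref{obser:20} the sets hit by $P_{opt}$ are exactly the disks whose dual points are covered by $P^*_{opt}$. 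Hence $P_{opt}$ is an optimal hitting set. Summing the three steps gives the claimed $O((n+m)\log(n+m))$ bound.

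Since all the heavy lifting is done in Lemma~\ref{lem:30} (the $|P^*_i|\le 1$ bound and the sweep-line computation) and in the cited coverage algorithm, there is no real obstacle here — the theorem is essentially a corollary, and the proof is the short observation that the dual transformation is correctness-preserving when each $P^*_i$ is a single segment, combined with the running-time accounting above. The only point that deserves a sentence is the reduction from the dual coverage optimum back to the hitting set optimum, which is exactly the general argument given right after Observation~\ref{obser:20}.
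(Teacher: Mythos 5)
Your argument is correct and coincides with the paper's: the theorem is obtained exactly as you describe, by combining the dual-transformation framework of Section~\ref{sec:dualtransformation} (including the preprocessing to the Non-Containment subset and the correctness argument via Observation~\ref{obser:20} when $|P^*_i|\le 1$) with Lemma~\ref{lem:30} and the 1D dual coverage algorithm of~\cite{ref:PedersenAl22}. Your running-time accounting matches the paper's as well, so there is nothing to add.
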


%

\section{The $L_{\infty}$ metric}
\label{sec:infinity}
In this section, following the dual transformation, we present an $O((m+n)\log(m+n))$ time algorithm for $L_{\infty}$ case.

In the $L_{\infty}$ metric, each disk is a square whose edges are axis-parallel.
For a disk $s_j\in S$ and a point $p_i\in P$, we say that $p$ is {\em vertically above} $s_j$ if $p_i$ is outside $s_j$ and $x(l_j)\leq x(p_i)\leq x(r_j)$.

In the $L_{\infty}$ metric, using the dual transformation, it is easy to come up with an example in which $|P_i^*|\geq 2$. Observe that $|P_i^*|\leq \lceil m/2\rceil$ as the indices of $S$ can be partitioned into at most $\lceil m/2\rceil$ disjoint maximal intervals. Despite $|P_i^*|\geq 2$, the following critical lemma shows that each $P_i^*$ can contribute at most one segment to any optimal solution of the 1D dual coverage problem on $P^*$ and $S^*$.

\begin{lemma}\label{lem:40}
In the $L_{\infty}$ metric, for any optimal solution $P^*_{opt}$ of the 1D dual coverage problem on $P^*$ and $S^*$, $P^*_{opt}$ contains at most one segment from $P_i^*$ for any $1\leq i\leq n$.
\end{lemma}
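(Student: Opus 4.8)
The plan is to argue by contradiction: suppose some optimal solution $P^*_{opt}$ contains two segments $I=[a,b]$ and $I'=[c,d]$ both generated by the same point $p_i$, with $b+1\le c-1$ (so they are distinct maximal intervals of $\calI_i$, separated by at least one index). I would first try to show that $p_i$ is \emph{vertically above} the square $s_j$ for every index $j\in[b+1,c-1]$ — this is where the $L_\infty$ geometry enters. Since $p_i$ lies in $s_a\cap s_b\cap s_c\cap s_d$ but not in $s_j$, and since $s_j$'s horizontal span on $\ell$ contains the intersection points $l_d$ and $r_a$ (by the Non-Containment property, exactly as in the $L_1$ argument of Lemma~\ref{lem:30}: $x(l_j)\le x(l_d)$ and $x(r_a)\le x(r_j)$ because $a<j<d$), the point $p_i$ must fail to be in $s_j$ only by being \emph{too high} — i.e. $y(p_i)$ exceeds the top edge of $s_j$, while still $x(l_j)\le x(p_i)\le x(r_j)$. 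So $p_i$ is vertically above each such $s_j$.

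Next I would use this to build a strictly better (or at least no-worse, then strictly better) feasible solution, contradicting optimality — or else show one of $I,I'$ is redundant. The key geometric consequence: among the indices in $[b+1,c-1]$, consider the square $s_t$ with the largest top edge (tallest square); because $p_i$ is vertically above all of them, the vertical ray upward from $x(p_i)$ stabs the top edges of all these squares, and for the $L_\infty$ metric "being vertically above $s_t$ with $x(l_t)\le x(p_i)\le x(r_t)$" forces $s_t$ to contain the whole segment $s_j\cap\ell$ structure in a way that makes $p_i$ hittable by a \emph{single} dual segment. More precisely, I expect to prove that the square $s_t$ of maximum height among $\{s_j : b+1\le j\le c-1\}$ actually satisfies $s_a\cap s_t\cap s_d\neq\emptyset$ containing a point of $P$? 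No — rather, the right line of attack is: since $p_i\notin s_j$ for all $j\in[b+1,c-1]$ but $p_i$ covers $s_a^*,\dots,s_b^*,s_c^*,\dots,s_d^*$, some \emph{other} point of $P$ must hit each $s_j$, $j\in[b+1,c-1]$ (feasibility of the hitting set), hence $P^*_{opt}$ already covers $s_{b+1}^*,\dots,s_{c-1}^*$ by segments not from $P_i^*$; then I claim one of the two segments $I$, $I'$ can be deleted. To see this, note $s_a^*,\dots,s_b^*$ are covered by $I$ and $s_c^*,\dots,s_d^*$ by $I'$; removing, say, $I'$ leaves $s_c^*,\dots,s_d^*$ possibly uncovered. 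So instead I would show that the union $I\cup I'$ together with the forced coverage of the gap is dominated by a single dual segment from some point — and here is where I think the real obstacle lies.

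The main obstacle, then, is to locate a single point $q\in P$ (or argue $p_i$ itself works via a longer interval) whose dual segment covers $s_a^*$ through $s_d^*$ at once, so that swapping $\{I,I'\}$ for that one segment is feasible and has weight at most $w(p_i)\cdot 2 > w(p_i)$ — wait, that would need $w(q)\le w(p_i)$, which need not hold. So the cleaner route, and the one I would actually pursue, is purely combinatorial on the dual coverage instance: use the known structure of optimal solutions to the 1D coverage problem from~\cite{ref:PedersenAl22}. An optimal coverage solution can be taken \emph{canonical} in the sense that no two of its chosen segments are "nested-or-crossing" with the same weight, or more usefully: if two chosen segments $I,I'$ have a point $s^*$ between them that is covered by a third chosen segment $J$, then $J$ must overlap at least one of $I,I'$, and one can then show $I$ (or $I'$) is removable. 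The plan is therefore: (1) establish the $L_\infty$ "vertically above" fact for the gap squares as above; (2) deduce that in $P^*_{opt}$ the gap points $s_j^*$, $j\in[b+1,c-1]$, are covered by segments of $P^*\setminus P_i^*$; (3) invoke/prove a local-exchange property of optimal 1D coverage solutions showing that whenever a point between two chosen intervals $I,I'$ is covered by another chosen interval, one of $I,I'$ is superfluous, contradicting minimality of $|P^*_{opt}|$ (and hence of its weight, after a standard canonicalization). Step (3) is the crux; I anticipate needing the precise characterization of minimal 1D interval covers — that a minimal cover's intervals, ordered by left endpoint, have strictly increasing right endpoints and each interval is needed to cover some point not covered by its neighbors — and checking that the existence of $I,I'$ both in $P^*_{opt}$ violates this unless the points strictly between them are uncovered by the rest, which step (2) has ruled out.
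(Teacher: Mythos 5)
Your steps (1) and (2) are fine and match the start of the paper's argument: the Non-Containment property does force $p_i$ to be vertically above every square $s_j$ in the gap, and feasibility of $P^*_{opt}$ does force the gap dual points to be covered by segments of $P^*\setminus P_i^*$. The proof breaks at step (3), which you yourself identify as the crux. The purely combinatorial exchange property you plan to invoke --- ``if a point between two chosen intervals $I,I'$ is covered by a third chosen interval, then one of $I,I'$ is superfluous'' --- is simply false for 1D interval covers. Take dual points $1,\dots,6$ with $I=[1,2]$ and $I'=[5,6]$ both generated by $p_i$ and $[3,4]$ generated by another point: this cover is minimal, every interval covers a point covered by no other chosen interval, the left/right endpoints are strictly increasing, and nothing can be deleted. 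So no canonicalization of optimal 1D covers can rule out two segments of $P_i^*$ appearing together; the obstruction has to come from the $L_\infty$ geometry, which your step (3) abandons.

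The missing idea is a containment claim tied to a specific choice of gap square. Among the squares $s_j$, $j\in[b+1,c-1]$, let $s_{j_0}$ be the one whose upper edge is \emph{lowest} (all these upper edges cross the vertical line $\ell_{p_i}$ through $p_i$ by your step (1)). Some chosen segment $[j_5,j_6]$, generated by a point $p_{i'}\neq p_i$, covers $s^*_{j_0}$. If $x(p_{i'})\le x(p_i)$, then for every $j\in[a,j_0-1]$ the upper edge of $s_j$ is higher than that of $s_{j_0}$ (either $j$ is in the gap, by choice of $j_0$, or $p_i\in s_j$ while $p_i$ lies above $s_{j_0}$), and by Non-Containment $x(l_j)\le x(l_{j_0})$; hence the portion of $s_{j_0}$ left of $\ell_{p_i}$ is contained in the corresponding portion of $s_j$, so $p_{i'}\in s_{j_0}$ forces $p_{i'}\in s_j$. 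By maximality of dual segments, $[j_5,j_6]\supseteq[a,j_0]\supseteq I$; symmetrically, if $x(p_{i'})>x(p_i)$ then $[j_5,j_6]\supseteq I'$. Either way one of your two segments is contained in another segment of $P^*_{opt}$ and can be deleted (weights are positive), contradicting optimality. Without this geometric containment step --- the lowest gap square and the left/right portion argument --- the proof does not go through.
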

\begin{proof}
Assume to the contrary that $P^*_{opt}$ contains more than one segment from $P_i^*$. Among all segments of $P^*_{opt}\cap P_i^*$, we choose two consecutive segments (recall that no two segments of $P_i^*$ are overlapped); we let $[j_1,j_2]$ and $[j_3,j_4]$ denote these two segments, respectively, with $j_2+1\leq j_3-1$ from $P_i^*$.
Then all disks in $S[j_1,j_2]\cup S[j_3,j_4]$ are hit by $p_i$, while $s_j$ is not hit by $p_i$ for any
$j\in [j_2+1,j_3-1]$.

We claim that $p_i$ is vertically above $s_{j}$ for any $j\in [j_2+1,j_3-1]$. To see this, since
$j_2<j<j_3$, due to the Non-Containment property of $S$, $x(l_{j})\leq x(l_{j_3})$ and
$x(r_{j})\geq x(r_{j_2})$. As $p_i$ hits both $s_{j_2}$ and $s_{j_3}$, we have
$x(l_{j_3})\leq x(p_i)\leq x(r_{j_2})$. As such, we obtain that
$x(l_{j})\leq x(p_i)\leq x(r_{j})$. Since $p_i$ does not hit $s_{j}$, $p_i$ must
be vertically above $s_{j}$.

Let $\ell_{p_i}$ be the vertical line through $p_i$. The above claim implies that the upper edges of all disks of $S[j_2+1,j_3-1]$ intersect $\ell_{p_i}$.
Among all disks of $S[j_2+1,j_3-1]$, let $s_{j_0}$ be the one whose upper edge is the lowest.

Since $P^*_{opt}$ is an optimal solution to the 1D dual coverage problem, one
dual segment $[j_5,j_6]\in P^*_{opt}$ defined by some point $p_{i'}$ with $i\neq i'$ must
cover the dual point $s^*_{j_0}$, i.e., $p_{i'}$ hits all disks $s_j$ with $j\in [j_5,j_6]$ and $j_0\in [j_5,j_6]$. In particular, $p_{i'}$ hits $s_{j_0}$.
In what follows, we prove that $[j_5,j_6]$ must contain either $[j_1,j_2]$ or $[j_3,j_4]$.
Depending on whether $x(p_i')\leq x(p_i)$, there are two cases.

\begin{itemize}
  \item
  If $x(p_{i'})\leq x(p_i)$, we prove below that $p_{i'}$ hits all disks $S[j_1,j_0]$. Recall that $p_{i'}$ hits $s_{j_0}$. Hence, it suffices to prove that $p_{i'}$ hits $s_j$ for any $j\in [j_1,j_0-1]$.

  Consider any $j\in [j_1,j_0-1]$. We claim that the upper edge of $s_j$ must be higher than that of $s_{j_0}$. Indeed, if $j\in [j_2+1,j_0-1]$, then the claim is obviously true by the definition of $j_0$. Otherwise, $j\in [j_1,j_2]$ and thus $p_i$ hits $s_j$. Hence, $p_i$ is lower than the upper edge of $s_j$. As $p_i$ is vertically above $s_{j_0}$, we obtain that the upper edge of $s_j$ must be higher than that of $s_{j_0}$. The claim thus follows.

  Since $j<j_0$, due to the Non-Containment property of $S$, we have $x(l_j)\leq x(l_{j_0})$. Recall that $s_{j_0}$ intersects $\ell_{p_i}$. Since $p_i$ hits $s_j$, $s_{j}$ also intersects $\ell_{p_i}$. As such, since the upper edge of $s_j$ is higher than that of $s_{j_0}$, the portion of $s_{j_0}$ to the left of $\ell_{p_i}$ is a subset of the portion of $s_j$ to the left of $\ell_{p_i}$ (e.g., see Fig.~\ref{fig:contain}). As $p_{i'}$ hits $s_{j_0}$ and $x(p_{i'})\leq x(p_i)$, $p_{i'}$ is inside the portion of $s_{j_0}$ to the left of $\ell_{p_i}$. Therefore, $p_{i'}$ is inside the portion of $s_j$ to the left of $\ell_{p_i}$. Hence, $p_{i'}$ hits $s_j$.

\begin{figure}[h]
\begin{minipage}[t]{\textwidth}
\begin{center}
\includegraphics[height=1.0in]{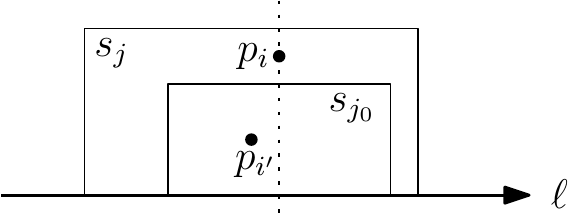}
\caption{\footnotesize Illustrating the proof of Lemma~\ref{lem:40}.
}
\label{fig:contain}
\end{center}
\end{minipage}
\vspace{-0.15in}
\end{figure}

  This proves that $p_{i'}$ hits all disks of $S[j_1,j_0]$. As $j_0\in [j_5,j_6]$, $[j_1,j_0]$ must be contained in $[j_5,j_6]$ since $[j_5,j_6]$ is a maximal interval of indices of disks hit by $p_{i'}$. Since $[j_1,j_2]\subseteq [j_1,j_0]$, we obtain that $[j_5,j_6]$ must contain $[j_1,j_2]$.
  \item
  If $x(p_{i'})> x(p_i)$, then by a symmetric analysis to the above, we can show that $[j_5,j_6]$ must contain $[j_3,j_4]$.
\end{itemize}

The above proves that $[j_5,j_6]$ contains either $[j_1,j_2]$ or $[j_3,j_4]$. Without loss of generality, we assume that $[j_5,j_6]$ contains $[j_1,j_2]$.
As $[j_5,j_6]$ is in $P^*_{opt}$, if we remove $[j_1,j_2]$ from $P^*_{opt}$, the rest of  the intervals of $P^*_{opt}$ still form a coverage for all dual points of $S^*$, which contradicts with that $P^*_{opt}$ is an optimal coverage.

The lemma thus follows.
\qed
\end{proof}

The above lemma implies that an optimal solution to the 1D dual coverage problem on $P^*$ and $S^*$ still corresponds to an optimal solution of the original hitting set problem on $P$ and $S$. As such, it remains to compute the set $P^*$ of dual segments. In what follows, we first prove an upper bound for $|P^*|$.

\subsection{Upper bound for $|P^*|$}
As $|P^*_i|\leq \lceil m/2\rceil$, an obvious upper bound for $|P^*|$ is $O(mn)$. In the following, we reduce it to $O(m+n)$.

Our first observation is that if the same dual segment of $P^*$ is defined by more than one point of $P$, then we only need to keep the one whose weight is minimum. In this way, all segments of $P^*$ are distinct (i.e., $P^*$ is not a multi-set).

We sort all points of $P$ from top to bottom as $q_1,q_2,\ldots, q_n$. For ease of exposition, we assume that no point of $P$ has the same $y$-coordinate as the upper edge of any disk of $S$.
For each
$2\leq i\leq n$, let $S_i$ denote the subset of disks whose upper edges are
between $q_{i-1}$ and $q_{i}$. Let $S_1$
denote the subset of disks whose upper edges are above $q_1$.
For each $1\leq i\leq n$, let $m_i=|S_i|$.

We partition the indices of disks of $S_1$ into a set $\calI_1$ of maximal
intervals. Clearly, $|\calI_1|\leq m_1$. The next lemma shows that other than the dual segments corresponding to the intervals in $\calI_1$, $q_1$ can generate at most two dual segments in $P^*$.

\begin{lemma}\label{lem:50}
The number of dual segments of $P^*\setminus \calI_1$ defined by $q_1$ is at most
$2$.
\end{lemma}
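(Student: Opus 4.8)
The plan is to understand precisely what dual segments $q_1$ can generate. Since $q_1$ is the topmost point of $P$, every disk of $S$ whose upper edge lies below $q_1$ (i.e., every disk not in $S_1$) has $q_1$ either inside it or vertically above it; the disks not in $S_1$ are exactly those in $S_2\cup\cdots\cup S_n$. So the maximal intervals of indices of disks hit by $q_1$ that are \emph{not} already accounted for by $\calI_1$ must consist of disks outside $S_1$, lying between two ``gaps'' caused by disks that $q_1$ fails to hit. I would first argue, exactly as in the proof of Lemma~\ref{lem:40}, that whenever $q_1$ lies vertically above a disk $s_j$ with indices of hit disks on both sides, $q_1$ is vertically above \emph{all} disks whose upper edges lie below the upper edge of $s_j$ among the relevant index range; more usefully, the key structural fact is monotonicity: consider the vertical line $\ell_{q_1}$ through $q_1$, and classify each disk $s_j$ of $S$ by whether its left edge, right edge, or neither crosses $\ell_{q_1}$ (by Non-Containment this classification is monotone in $j$: the left-edge-crossers have the largest indices, then the straddlers, then the right-edge-crossers).

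Then I would look at where $q_1$ fails. A disk $s_j$ is \emph{missed} by $q_1$ precisely when $q_1$ is vertically above it (outside, but between its left and right $x$-extents), which forces $s_j$ to be a straddler of $\ell_{q_1}$ whose upper edge is below $q_1$, i.e. $s_j\notin S_1$; and additionally among disks hit by $q_1$, whether $q_1$ is ``to the left'' or ``to the right'' inside them is governed by whether $x(q_1)$ is left or right of the square's center, which is again monotone in the index. The upshot I want to extract: the set of disks \emph{hit} by $q_1$, when restricted to $S\setminus S_1$, is the union of an initial block of indices (disks hit ``on the left'' or straddling low enough) and a final block of indices, with the missed straddlers sandwiched in between — so at most \emph{two} maximal intervals arise from $S\setminus S_1$. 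Any maximal interval of hit-disk indices that is not entirely inside one of these two blocks must begin or end at a disk of $S_1$, and hence was already counted in $\calI_1$ (or, at worst, a block can be extended by disks of $S_1$ on one side, but then it is a single interval and still contributes just one new dual segment beyond $\calI_1$'s share). Carefully bookkeeping this gives the bound $2$.

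Concretely, the steps in order: (1) set up $\ell_{q_1}$ and the three-way monotone classification of disks via Non-Containment; (2) show that a disk missed by $q_1$ must be a straddler not in $S_1$, and that missed straddlers form a contiguous block of indices (this is the core claim, proved by the same ``portion to the left/right of $\ell_{q_1}$ is nested'' argument as in Lemma~\ref{lem:40}, using that among straddlers the one with lowest upper edge below $q_1$ is missed and everything between two missed ones is also missed); (3) conclude that the indices of $S\setminus S_1$ hit by $q_1$ split into at most two maximal runs, one before and one after this missed block; (4) argue that any maximal interval in $\calI_{q_1}$ (the intervals for $q_1$) not coming from $\calI_1$ must be one of these at most two runs, possibly with $S_1$-disks appended at an end, and that each such interval not in $\calI_1$ contributes exactly one dual segment — for a total of at most $2$.

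The main obstacle I expect is step (2)/(3): pinning down that the missed disks form a single contiguous block of indices, and that on either side of it $q_1$ hits a contiguous run. The subtlety is that disks in $S_1$ (whose upper edges are \emph{above} $q_1$) are always hit by $q_1$ when they straddle $\ell_{q_1}$ but may or may not be hit otherwise, so they can glue a ``new'' run to an ``old'' $\calI_1$-interval or sit alone; one must check that this gluing never creates more than two genuinely new maximal intervals. I would handle this by the same nesting-of-left-portions and nesting-of-right-portions argument used in Lemma~\ref{lem:40}, applied with $q_1$ in the role of $p_i$ and noting $q_1$ is above every upper edge except those of $S_1$, so the ``upper edge higher than $s_{j_0}$'' hypotheses there are automatic here for disks outside $S_1$.
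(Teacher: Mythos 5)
Your proposal inverts the basic geometry of this case, and the inversion is load-bearing. In the $L_\infty$ metric a disk is an axis-parallel square centered on $\ell$, so $q_1$ hits $s_j$ if and only if $x(l_j)\leq x(q_1)\leq x(r_j)$ \emph{and} the upper edge of $s_j$ is above $q_1$. Since $q_1$ is the topmost point of $P$, every disk hit by $q_1$ lies in $S_1$, and a disk outside $S_1$ is never hit by $q_1$. Consequently several of your central claims are false: a disk not in $S_1$ cannot have ``$q_1$ inside it''; a missed disk need not be vertically above $q_1$ (it may simply lie horizontally clear of $q_1$); the new dual segments do not ``consist of disks outside $S_1$'' (they consist entirely of disks of $S_1$); and ``the set of disks hit by $q_1$, when restricted to $S\setminus S_1$'' is empty, so the two blocks you want to extract from it do not exist. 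Your step (2) also fails on its own terms: among the straddlers of the vertical line through $q_1$, those with upper edge below $q_1$ need not form a contiguous index block, because radii are arbitrary --- one can keep both endpoint sequences increasing while the straddlers alternate between upper edge above and below $q_1$. So the structure underlying steps (3)--(4) does not exist and the final bookkeeping has nothing to rest on.

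The fact you need is the opposite one: since every disk hit by $q_1$ is in $S_1$, each dual segment generated by $q_1$ is contained in a single interval of $\calI_1$, and it belongs to $P^*\setminus\calI_1$ only if the containment is strict. The paper then argues by contradiction: if $q_1$ generated three such segments, the $\calI_1$-interval strictly containing the middle one would contain an index $j$, not hit by $q_1$, sandwiched between hit indices $j_1'<j<j_3$; the Non-Containment property gives $x(l_j)\leq x(l_{j_3})$ and $x(r_{j_1'})\leq x(r_j)$, hence $x(l_j)\leq x(q_1)\leq x(r_j)$, so being unhit forces the upper edge of $s_j$ below $q_1$, contradicting $s_j\in S_1$. (Equivalently, in the spirit of your classification: the straddlers of the vertical line through $q_1$ form one contiguous index block by Non-Containment, and a disk of $S_1$ is hit iff it is a straddler; a dual segment can therefore differ from its containing $\calI_1$-interval only by being truncated at one of the two ends of that block, giving at most two new segments.) Your monotone three-way classification is a correct ingredient, but it must be applied to the hit disks inside $S_1$, not to $S\setminus S_1$.
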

\begin{proof}
Assume to the contrary that $q_1$ defines three intervals $[j_1,j_1']$,
$[j_2,j_2']$, and $[j_3,j_3']$ in $P^*\setminus \calI_1$, with $j_1'<j_2$ and
$j_2'<j_3$. By definition, $\calI_1$ must have
an interval, denoted by $I_k$, that strictly contains $[j_k,j_k']$ (i.e., $[j_k,j_k']\subset I_k$), for each $1\leq k\leq 3$.
Then, $I_2$ must contain an index $j$ that is
not in $[j_1,j_1']\cup [j_2,j_2']\cup [j_3,j_3']$ with $j_1'<j<j_3$
(e.g., see Fig.~\ref{fig:dualsegments}).
As such, $q_1$ does not hit $s_j$. Also, since $j\in I_2$, $s_j$ is in $S_1$.

\begin{figure}[h]
\begin{minipage}[t]{\textwidth}
\begin{center}
\includegraphics[height=0.9in]{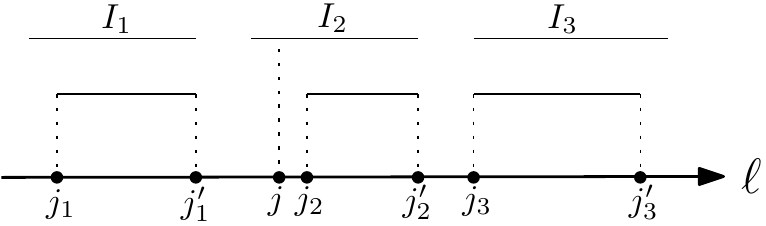}
\caption{\footnotesize Illustrating a schematic view of the intervals $[j_k,j_k']$ and $I_k$ for $1\leq k\leq 3$.}
\label{fig:dualsegments}
\end{center}
\end{minipage}
\vspace{-0.15in}
\end{figure}

Since $j_1'<j<j_3$, due to the Non-Containment property of $S$, $x(l_j)\leq x(l_{j_3})$ and
$x(r_{j_1'})\leq x(r_j)$. As $q_1$ hits both $s_{j_1'}$ and $s_{j_3}$, we have
$x(l_{j_3})\leq x(p_1)\leq x(r_{j_1'})$. Hence, we obtain $x(l_j)\leq x(q_1)\leq x(r_j)$. Since
$q_1$ does not hit $s_j$, the upper edge of $s_j$ must be below $q_1$. But this implies that $s_j$ is
not in $S_1$, which incurs contradiction.
\qed
\end{proof}

Now we consider the disks of $S_2$ and the dual segments defined by $q_2$. For each
disk $s_j$ of $S_2$, we update the intervals of $\calI_1$ by adding the
index $j$, as follows. Note that by definition, intervals of $\calI_1$ are pairwise disjoint and no interval contains $j$.

\begin{enumerate}
\item
If neither $j+1$ nor $j-1$ is in any interval of
$\calI_1$, then we add $[j,j]$ as a new interval to $\calI_1$.

\item
If $j+1$ is
contained in an interval $I\in \calI_1$ while $j-1$ is not, then $j+1$ must be the left endpoint of
$I$. In this case, we add $j$ to $I$ to obtain a new interval $I'$ (which has $j$ as its left endpoint) and add $I'$ to $\calI_1$; but we still keep
$I$ in $\calI_1$.

\item
Symmetrically, if $j-1$ is contained in an interval $I\in \calI_1$ while $j+1$ is not, then we add $j$ to $I$ to obtain a new interval $I'$ and add $I'$ to $\calI_1$; we still keep $I$ in $\calI_1$.

\item
If both $j+1$ and $j-1$ are contained in intervals of $\calI_1$, then they must
be contained in two intervals, respectively; we merge these two
intervals into a new interval by padding $j$ in between and adding the new interval to $\calI_1$. We
still keep the two original intervals in $\calI_1$.
\end{enumerate}

Let $\calI_1'$ denote the updated set $\calI_1$ after the above operation. Clearly, $|\calI_1'|\leq |\calI_1|+1$.

We process all disks $s_j\in S_2$ as above; let $\calI_2$ be
the resulting set of intervals. It holds that $|\calI_2|\leq
|\calI_1|+|S_2|\leq m_1+m_2$. Also observe that for any interval $I$ of indices of
disks of $S_1\cup S_2$ such that $I$ is not in $\calI_2$, $\calI_2$ must have an
interval $I'$ such that $I\subset I'$ (i.e., $I\subseteq I'$ but $I\neq I'$).
Using this property, by exactly the same
analysis as Lemma~\ref{lem:50}, we can show that other than the intervals in
$\calI_2$, $q_2$ can generate at most two intervals in $P^*$.
Since $\calI_1\subseteq \calI_2$,
combining Lemma~\ref{lem:50}, we obtain that other than the intervals of
$\calI_2$, the number of intervals of $P^*$ generated by $q_1$ and $q_2$ is at
most $4$.

We process disks of $S_i$ and point $q_i$ in the same way as above for all
$i=3,4,\dots,n$. Following the same argument, we can show that for each $i$, we obtain an interval set $\calI_i$ with $\calI_{i-1}\subseteq \calI_i$ and $|\calI_i|\leq \sum_{k=1}^i m_k$, and other than the intervals of
$\calI_i$, the number of intervals of $P^*$ generated by $\{q_1,q_2,\ldots,q_i\}$ is at
most $2i$. In particular, $|\calI_n|\leq \sum_{k=1}^n m_k\leq m$, and other than the intervals of
$\calI_n$, the number of intervals of $P^*$ generated by $P=\{q_1,q_2,\ldots,q_n\}$ is at
most $2n$. We thus achieve the following conclusion.

\begin{lemma}\label{lem:60}
In the $L_{\infty}$ metric, $|P^*|\leq 2n+m$.
\end{lemma}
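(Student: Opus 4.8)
The plan is to formalize the inductive construction that the discussion above has informally set up, and then read off the bound by instantiating the induction at $i=n$. First I would fix the top-to-bottom ordering $q_1,\dots,q_n$ of $P$ and the slabs $S_1,\dots,S_n$ as defined above, and record the elementary fact that $S_1,\dots,S_n$ partition $S$, so $\sum_{k=1}^n m_k=m$. The only thing worth checking here is that no disk is left out below $q_n$: since every disk $s_j\in S$ is hit by some point of $P$, and a hitting point lies strictly below the upper edge of $s_j$, the upper edge of $s_j$ lies above $q_n$, hence $s_j$ falls into some $S_k$.

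Next I would state the loop invariant maintained by the update procedure: after processing $q_1,\dots,q_i$ we have a family $\calI_i$ of pairwise-disjoint index intervals with (i) $\calI_{i-1}\subseteq \calI_i$; (ii) $|\calI_i|\le \sum_{k=1}^i m_k$; (iii) every maximal interval of consecutive indices of disks of $S_1\cup\cdots\cup S_i$ that is not itself a member of $\calI_i$ is strictly contained in some member of $\calI_i$; and (iv) the number of distinct dual segments of $P^*$ generated by $q_1,\dots,q_i$ that are not members of $\calI_i$ is at most $2i$. The base case $i=1$ is immediate: $\calI_1$ is the set of maximal intervals of indices of $S_1$, so (ii) and (iii) hold trivially, (i) is vacuous, and (iv) is exactly Lemma~\ref{lem:50}.

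For the inductive step I would process the disks of $S_i$ one at a time by the four-case rule, observing that each insertion of a new index creates at most one new interval (the merged or extended one) while all previously present intervals are retained; this gives (i) and $|\calI_i|\le |\calI_{i-1}|+m_i$, hence (ii). The crux — and the step I expect to be the main obstacle — is verifying that property (iii) survives each of the four update cases, in particular the merge case, where a new maximal interval of $S_1\cup\cdots\cup S_i$ can appear and must be shown to be either a member of, or strictly contained in a member of, the updated family; this is precisely where retaining the old intervals is essential (and is also what keeps the per-disk growth of $\calI_i$ to one). Once (iii) is in hand, the argument of Lemma~\ref{lem:50} applies verbatim, with $\calI_i$ in place of $\calI_1$ and $q_i$ in place of $q_1$, to show that $q_i$ contributes at most two dual segments lying outside $\calI_i$; combining this with invariant (iv) for $i-1$ and with $\calI_{i-1}\subseteq\calI_i$ (so that an interval not in $\calI_i$ is in particular not in $\calI_{i-1}$) yields (iv) for $i$.

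Finally, instantiating the invariant at $i=n$ finishes the proof: $|\calI_n|\le \sum_{k=1}^n m_k = m$, and every dual segment of $P^*$ is generated by some $q_i$, hence is either a member of $\calI_n$ or one of the at most $2n$ exceptional segments counted in (iv). Therefore, after discarding duplicate segments as described above, $|P^*|\le |\calI_n|+2n\le m+2n$, as claimed.
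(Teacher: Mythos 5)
Your proposal is correct and takes essentially the same route as the paper: the paper establishes the bound precisely via this inductive maintenance of the nested interval families $\calI_1\subseteq\cdots\subseteq\calI_n$ with $|\calI_i|\le\sum_{k\le i}m_k$, reusing the argument of Lemma~\ref{lem:50} at each level to charge at most two dual segments outside $\calI_i$ to each $q_i$, and concluding $|P^*|\le|\calI_n|+2n\le m+2n$. Your invariant (iii), though stated only for maximal intervals, suffices (any dual segment of $q_i$ not in $\calI_i$ is then strictly contained in a member of $\calI_i$ either directly or through its maximal interval), and its preservation under the four update cases is asserted at the same level of detail as in the paper, so your formalization is faithful to the paper's proof.
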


\subsection{Computing $P^*$}

Using Lemma~\ref{lem:60}, we next present an algorithm that computes $P^*$ in $O((n+m)\log (n+m))$ time.

For each segment $I\in P^*$, let $w(I)$ denote its weight.
We say that a segment $I$ of $P^*$ is {\em redundant} if there is another segment $I'$
such that $I\subset I'$ and $w(I)\geq w(I')$. Clearly, any redundant segment of
$P^*$ cannot be used in any optimal solution for the 1D dual coverage problem on
$S^*$ and $P^*$. A segment of $P^*$ is {\em non-redundant} if it is not redundant.

In the following algorithm, we will compute a subset $P^*_0$ of $P^*$ such that
segments of $P^*\setminus P^*_0$ are all redundant (i.e., the segments of $P^*$ that are not computed by the algorithm are all redundant and thus are useless). We will show that each
segment reported by the algorithm belongs to $P^*$ and thus the total number of
reported segments is at most $2n+m$ by Lemma~\ref{lem:60}. We will show that the
algorithm spends $O(\log(n+m))$ time reporting one segment and each segment is reported only once; this guarantees the $O((n+m)\log (n+m))$ upper bound of the runtime of the algorithm.

For each disk $s_j\in S$, we use $y(s_j)$ to denote the $y$-coordinate of the upper edge of $s_j$.

Our algorithm has $m$ iterations. In the $j$-th iteration, it computes all segments in $P^*_j$, where $P^*_j$ is the set of all non-redundant segments of $P^*$ whose starting indices are $j$, although it is possible that some redundant segments with starting index $j$ may also be computed. Points of $P$ defining these segments must be inside $s_j$; let $P_j$ denote the set of points of $P$ inside $s_j$. We partition $P_j$ into two subsets (e.g., see Fig.~\ref{fig:types}): $P_j^1$ consists of points of $P_j$ to the left of $r_{j-1}$ and $P_j^2$ consists of points of $P_j$ to the right of $r_{j-1}$. We will compute dual segments of $P^*_j$
defined by $P_j^1$ and $P_j^2$ separately; one reason for doing so is that when computing dual segments defined by a point of $P_j^1$, we need to additionally check whether this point also hits $s_{j-1}$ (if yes,
such a dual segment does not exist in $P^*$ and thus will not be reported).
In the following, we first describe the algorithm for $P_j^1$ since the algorithm for $P_j^2$ is basically the same but simpler. Note that our algorithm does not need to explicitly determine the points of $P_j^1$ or $P_j^2$; rather we will build some data structures that can implicitly determine them during certain queries.

\begin{figure}[t]
\begin{minipage}[t]{\textwidth}
\begin{center}
\includegraphics[height=1.5in]{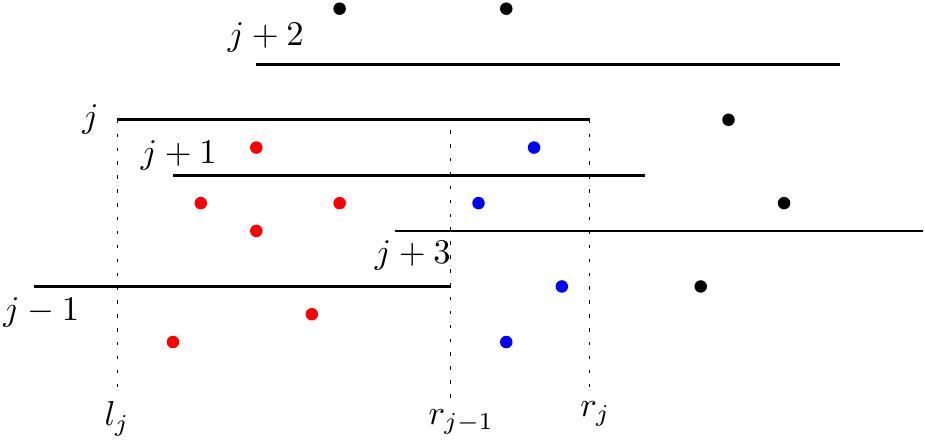}
\caption{\footnotesize Illustrating $P_j^1$ (the red points) and $P_j^2$ (the blue points). Only the upper edges of disks are shown. The numbers are the indices of disks.
}
\label{fig:types}
\end{center}
\end{minipage}
\vspace{-0.15in}
\end{figure}

If the upper edge of $s_{j-1}$ is higher than that of $s_j$, then all points of $P_j^1$ are in $s_{j-1}$ and thus no point of $P_j^1$ defines any dual segment of $P^*$ starting from $j$. Indeed, assume to the contrary that a point $p_i\in P_j^1$ defines such a dual segment $[j,j']$. Then, since $p_i$ is in $s_{j-1}$, $[j,j']$ cannot be a maximal interval of indices of disks hit by $p_i$ and thus cannot be a dual segment defined by $p_i$. In what follows, we assume that the upper edge of $s_{j-1}$ is lower than that of $s_j$. In this case, it suffices to only consider points of $P_j^1$ above $s_{j-1}$ since points below the upper edge of $s_{j-1}$ (and thus are inside $s_{j-1}$) cannot define any dual segments due to the same reason as above. Nevertheless, our algorithm does not need to explicitly determine these points.

We start with performing the following {\em rightward segment dragging query}: Drag the vertical segment $x(l_j)\times [y(s_{j-1}),y(s_j)]$ rightwards until a point $p\in P$ and return $p$ (e.g., see Fig.~\ref{fig:rightdrag}). Such a segment dragging query can be answered in $O(\log n)$ time after $O(n\log n)$ time preprocessing on $P$ (e.g., using Chazelle's result~\cite{ref:ChazelleAn88} one can build a data structure of $O(n)$ space in $O(n\log n)$ time such that each query can be answered in $O(\log n)$ time; alternatively, if one is satisfied with an $O(n\log n)$ space data structure, then an easier solution is to use fractional cascading~\cite{ref:ChazelleFr86} and one can build a data structure in $O(n\log n)$ time and space with $O(\log n)$ query time).
If the query does not return any point or if the query returns a point $p$ with $x(p)>x(r_{j-1})$, then $P_j^1$ does not have any point above $s_{j-1}$ and we are done with the algorithm for $P_j^1$. Otherwise, suppose the query returns a point $p$ with $x(p)\leq x(r_{j-1})$; we proceed as follows.

\begin{figure}[h]
\begin{minipage}[t]{\textwidth}
\begin{center}
\includegraphics[height=1.3in]{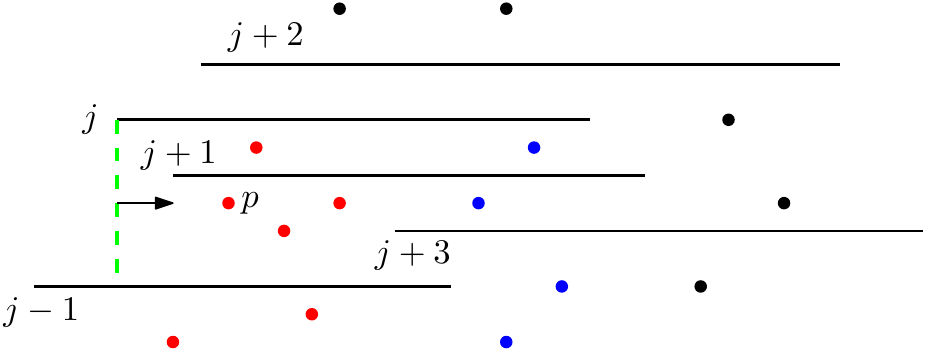}
\caption{\footnotesize Illustrating the rightward segment dragging query: the green dashed segment is the dragged segment $x(l_j)\times [y(s_{j-1}),y(s_j)]$.
}
\label{fig:rightdrag}
\end{center}
\end{minipage}
\vspace{-0.15in}
\end{figure}

We perform the following {\em max-range query} on $p$: Compute the largest index $k$ such that all disks in $S[j,k]$ are hit by $p$ (e.g., in Fig.~\ref{fig:rightdrag}, $k=j+2$). We will show later in Lemma~\ref{lem:maxrange} that after $O(m\log m)$ time and $O(m)$ space processing, each such query can be answered in $O(\log m)$ time. Such an index $k$ must exist as $s_j$ is hit by $p$. Observe that $[j,k]$ is a dual segment in $P^*$ defined by $p$. However, the weight of $[j,k]$ may not be equal to $w(p)$, because it is possible that a point with smaller weight also defines $[j,k]$. Our next step is to determine the minimum-weight point that defines $[j,k]$.

We perform a {\em range-minima query} on $[j,k]$: Find the lowest disk among all disks in $S[j,k]$ (e.g., in Fig.~\ref{fig:rightdrag}, $s_{j+1}$ is the answer to the query). This can be easily done in $O(\log m)$ time with $O(m)$ space and $O(m\log m)$ time preprocessing. Indeed, we can build a binary search tree on the upper edges of all disks of $S$ with their $y$-coordinates as keys and have each node storing the lowest disk among all leaves in the subtree rooted at the node. A better but more complicated solution is to build a range-minima data structure on the $y$-coordinates of the upper edges of all disks in $O(m)$ time and each query can be answered in $O(1)$ time~\cite{ref:BenderTh00,ref:HarelFa84}. However, the above binary search tree solution is sufficient for our purpose.
Let $y^*$ be the $y$-coordinate of the upper edge of the disk returned by the query.

We next perform the following {\em downward min-weight point query} for the horizontal segment
$[x(l_k),x(r_{j-1})]\times y^*$: Find the minimum weight point of $P$ below the
segment (e.g., see Fig.~\ref{fig:minweight}). We will show later in Lemma~\ref{lem:minweightquery}
that after $O(n\log n)$ time and space
preprocessing, each query can be answered in $O(\log n)$ time.
Let $p'$ be the point returned by the query. If $p'=p$, then we report $[j,k]$
as a dual segment with weight equal to $w(p)$. Otherwise, if $p'$ is inside $s_{j-1}$ or $s_{k+1}$, then $[j,k]$ is a redundant dual segment (because a dual segment defined by $p'$ strictly contains $[j,k]$ and $w(p')\leq w(p)$) and thus we do not need to report it.
In any case, we proceed as follows.

\begin{figure}[h]
\begin{minipage}[t]{\textwidth}
\begin{center}
\includegraphics[height=1.5in]{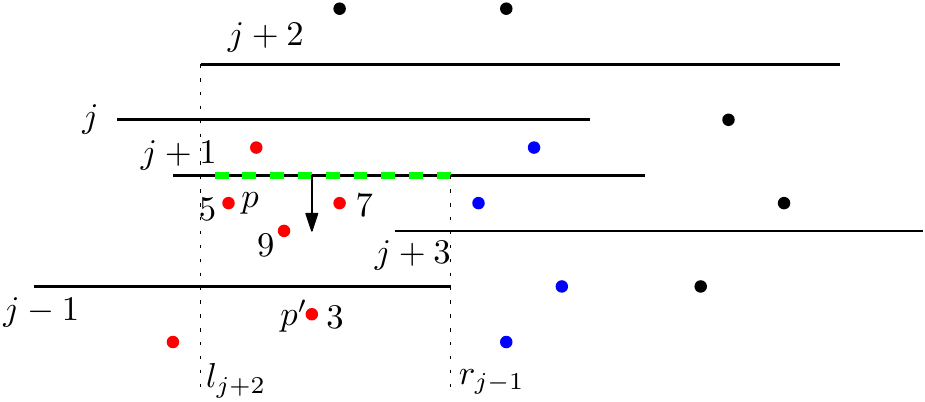}
\caption{\footnotesize Illustrating the downward min-weight point query (with $k=j+2$): the green dashed segment is the dragged segment $[x(l_k),x(r_{j-1})]\times y^*$. The numbers besides the points are their weights. The answer to the query is $p'$, whose weight is $3$.
}
\label{fig:minweight}
\end{center}
\end{minipage}
\vspace{-0.15in}
\end{figure}

The above basically determines that $[j,k]$ is a dual segment in $P^*$. Next, we proceed to determine those dual segments $[j,k']$ with $k'>k$. If such a dual segment exists, the interval $[j,k']$ must contain index $k+1$. Hence, we next consider $s_{k+1}$. If $y(s_{k+1})>y(s_{j-1})$, then let $y'=\min\{y^*,y(s_{k+1})\}$; we perform a rightward segment dragging query with the vertical segment
$x(l_{k+1})\times[y(s_{j-1}),y']$ (e.g., see Fig.~\ref{fig:rightdrag2}) and then repeat the above algorithm.
If $y(s_{k+1})\leq y(s_{j-1})$, then points of $P_j^1$ above $s_{j-1}$ are also above $s_{k+1}$ and
thus no point of $P_j^1$ can generate any dual segment $[j,k']$ with $k'>k$ and
thus we are done with the algorithm on $P_j^1$.

\begin{figure}[h]
\begin{minipage}[t]{\textwidth}
\begin{center}
\includegraphics[height=1.2in]{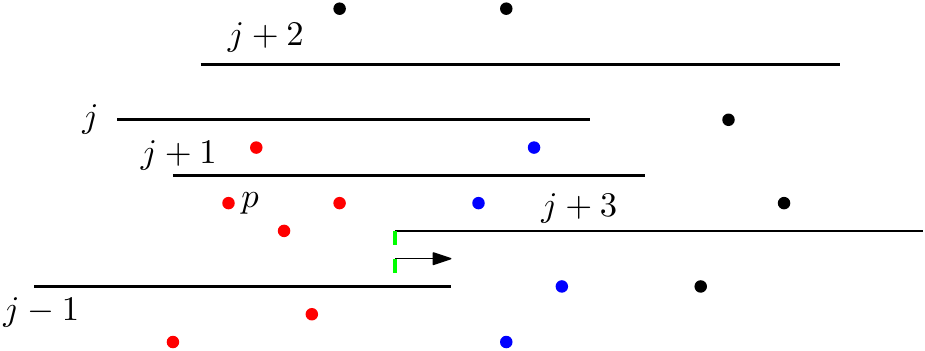}
\caption{\footnotesize Illustrating the rightwards segment dragging query: the green dashed segment is the dragged segment $x(l_{k+1})\times[y(s_{j-1}),y']$.
}
\label{fig:rightdrag2}
\end{center}
\end{minipage}
\vspace{-0.15in}
\end{figure}

For the time analysis, we charge the time of the above five queries to the
interval $[j,k]$, which is in $P^*$. Note that $[j,k]$ will not be charged again in the future because future queries in the $j$-th iteration will be charged to $[j,k']$ for some $k'>k$ and future queries in the $j'$-th iteration for any $j'>j$ will be charged to $[j',k'']$. As such, each dual segment of $P^*$ will be charged $O(1)$ times during the entire algorithm. As each query takes $O(\log (n+m))$
time, the total time of all queries in the entire algorithm is $O(|P^*|\log
(n+m))$, which is $((n+m)\log (n+m))$ by Lemma~\ref{lem:60}.

\begin{lemma}\label{lem:maxrange}
With $O(m\log m)$ time and $O(m)$ space preprocessing on $S$, each max-range query can be answered in $O(\log m)$ time.
\end{lemma}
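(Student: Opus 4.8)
The plan is to reduce a max-range query to two standard searches: a binary search on the $x$-coordinates of the left endpoints of the disks, and one ``leftmost-index-below-a-threshold'' search on the $y$-coordinates of the upper edges, both supported by structures built in $O(m\log m)$ time and $O(m)$ space.

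First I would exploit the Non-Containment property (Observation~\ref{obser:FIFO}): since $S=\widehat{S}$, both $x(l_1)<x(l_2)<\cdots<x(l_m)$ and $x(r_1)<x(r_2)<\cdots<x(r_m)$ are increasing in the disk index. Fix a query given by a point $p$ and the iteration index $j$, where $p$ hits $s_j$. Let $k_1$ be the largest index with $x(l_{k_1})\le x(p)$; since $p$ hits $s_j$ we have $j\le k_1$, and for every $i>k_1$ the point $p$ lies strictly left of $s_i$ and hence does not hit $s_i$. Moreover, for every $i\ge j$ we have $x(r_i)\ge x(r_j)\ge x(p)$, and, since all points of $P$ lie above $\ell$ while each square $s_i$ is centered on $\ell$, $p$ always lies above the lower edge of $s_i$. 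Consequently, for $i\in[j,k_1]$, $p$ hits $s_i$ if and only if $y(p)\le y(s_i)$, i.e.\ $p$ lies below the upper edge of $s_i$. Therefore the answer $k$ equals $k_1$ if $y(s_i)\ge y(p)$ for all $i\in[j,k_1]$, and otherwise it equals $i^*-1$, where $i^*$ is the smallest index in $[j,k_1]$ with $y(s_{i^*})<y(p)$.

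For the data structures: store the sorted array $x(l_1),\dots,x(l_m)$ (which, by Non-Containment, is exactly the order in which the disks are given), so $k_1$ is found by binary search in $O(\log m)$ time. In addition, build a balanced binary tree $T$ whose leaves are $s_1,\dots,s_m$ in index order and in which each internal node stores the minimum of $y(s_i)$ over the disks in its subtree; $T$ is built bottom-up in $O(m)$ time and $O(m)$ space (the $O(m\log m)$ preprocessing bound in the statement is a safe over-estimate, e.g.\ if one first sorts). To answer a query, after computing $k_1$, decompose $[j,k_1]$ into $O(\log m)$ canonical nodes of $T$ from left to right, locate the first one whose stored minimum is $<y(p)$ (if none exists, return $k_1$), and descend into it, always moving to the left child when its stored minimum is $<y(p)$ and to the right child otherwise; the leaf reached is $i^*$, and we return $i^*-1$. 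Each of these steps visits $O(\log m)$ nodes, so the overall query time is $O(\log m)$.

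The only real subtlety is the correctness reduction in the second paragraph: it is precisely the Non-Containment property that forces the set of disks of $S[j,\cdot]$ hit by $p$ to be a prefix cut off by the two monotone $x$-constraints, so that the query collapses to a threshold search on upper-edge heights; the boundary cases (when $i^*$ does not exist, when $k_1=m$, or when $s_{k_1+1}$ is undefined) must be checked but are routine. The data-structure component is entirely standard and presents no difficulty.
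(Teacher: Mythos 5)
Your proposal is correct and follows essentially the same route as the paper: a binary search on the left endpoints $x(l_i)$ (monotone by the Non-Containment property) combined with a balanced tree storing subtree minima of the upper-edge $y$-coordinates to find where the upper edges first drop below $y(p)$, with the answer being the minimum of the two resulting cutoffs. The only difference is cosmetic (you search for the first low upper edge via canonical nodes of $[j,k_1]$, while the paper walks up from the leaf of $s_j$ and then descends), which changes nothing in correctness or the $O(\log m)$ query bound.
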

\begin{proof}
We build a complete binary search tree $T$ with $m$ leaves storing the disks of $S$ in their index order. For each node $v\in T$, we store a value $y_v$ that is equal to the minimum $y(s_j)$ for all disks $s_j$ stored in the leaves of the subtree rooted at $v$.
In addition, we use an array $A$ to store all disks sorted by their indices. This finishes our preprocessing, which takes $O(m\log m)$ time and $O(m)$ space.

Given a query point $p$ and a disk index $j$ with $p$ hitting $s_j$, the max-range query asks for the largest index $k$ such that all disks of $S[j,k]$ are hit by $p$. Our query algorithm has two main steps. In the first step, we use the tree $T$ to find in $O(\log m)$ time the largest index $k'$ such that $y(s_{t})\geq y(p)$ for all $t\in [j,k']$; the details of the algorithm will be described later. In the second step, using the array $A$, we find the largest index $k''$ such that $x(l_{k''})\leq x(p)$. As the disks in $A$ are sorted by their indices, due to the Non-Containment property, the disks $s_j$ of $A$ are also sorted by the values $x(l_j)$. Hence, $k''$ can be found in $O(\log m)$ time by binary search on $A$. As $p$ hits $s_j$, we have $x(l_j)\leq x(p)$ and thus $j\leq k''$. After having $k'$ and $k''$, we return $k=\{k',k''\}$ as the answer to the max-range query. In the following, we prove the correctness: $k$ thus defined is the largest index such that all disks of $S[j,k]$ are hit by $p$. Depending on whether $k'\leq k''$, there are two cases.

\begin{enumerate}
  \item If $k'\leq k''$, then $k=k'$. By the definition of $k'$, $y(s_{k'+1})< y(p)$ and thus $p$ does not hit $s_{k'+1}$. Hence, if suffices to prove that $p$ hits $s_{t}$ for all $t\in [j,k']$. Indeed, since $t\leq k'\leq k''$, by the definition of $k''$, we have $x(l_t)\leq x(p)$. On the other hand, since $p$ hits $s_j$, we have $x(p)\leq x(r_j)$. Since $j\leq t$, by the Non-Containment property of $S$, $x(r_j)\leq x(r_t)$. Therefore, we obtain $x(p)\leq x(r_t)$. Finally, as $t\leq k'$, by the definition of $k'$, $y(p)\leq y(s_t)$.

      In summary, we have $x(l_t)\leq x(p)\leq x(r_t)$ and $y(p)\leq y(s_t)$. Therefore, $p$ hits $s_t$. This proves that $k=k'$ is the largest index such that all disks of $S[j,k]$ are hit by $p$.

  \item
      If $k'> k''$, then $k=k''$. By the definition of $k''$, $x(l_{k''+1})> x(p)$ and thus $p$ does not hit $s_{k''+1}$. Hence, if suffices to prove that $p$ hits $s_{t}$ for all $t\in [j,k'']$. Indeed, since $t\leq k''$, by the definition of $k''$, we have $x(l_t)\leq x(p)$. On the other hand, since $p$ hits $s_j$, we have $x(p)\leq x(r_j)$. Since $j\leq t$, by the Non-Containment property of $S$, $x(r_j)\leq x(r_t)$. Therefore, we obtain $x(p)\leq x(r_t)$. Finally, as $t\leq k''<k'$, by the definition of $k'$, $y(p)\leq y(s_t)$.

      In summary, we have $x(l_t)\leq x(p)\leq x(r_t)$ and $y(p)\leq y(s_t)$. Therefore, $p$ hits $s_t$. This proves that $k=k''$ is the largest index such that all disks of $S[j,k]$ are hit by $p$.
\end{enumerate}

It remains to describe the algorithm for computing $k'$ using $T$. The algorithm has two phases. Starting from the leaf storing disk $s_j$, for each node $v$, we process it as follows. Let $u$ be the parent of $v$. If $v$ is the right child of $u$, then we proceed on $u$ recursively by setting $v=u$. If $v$ is the left child of $u$, then let $w$ be the right child of $u$. If $y_w\geq y(p)$, then we proceed on $u$ recursively by setting $v=u$. Otherwise, the first phase of the algorithm is over and the second phase starts from $v=w$ in a top-down manner as follows. Let $u$ and $w$ be the left and right children of $v$ recursively. If $y_u\geq y(p)$, then we proceed on $w$ recursively by setting $v=w$; otherwise we proceed on $u$ recursively by setting $v=u$. When we reach a leaf $v$, which stores a disk $s_t$, if $y(s_t)\geq y(p)$, then we return $k'=t$; otherwise we return $k'=t-1$. Clearly, the algorithm runs in $O(\log m)$ time.

The lemma thus follows. \qed
\end{proof}

\begin{lemma}\label{lem:minweightquery}
With $O(n\log n)$ time and space preprocessing on $P$, each downward min-weight point
query can be answered in $O(\log n)$ time.
\end{lemma}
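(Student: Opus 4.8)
The plan is to recognize the \emph{downward min-weight point query} as a three-sided orthogonal range-minimum query and answer it with a range tree augmented by fractional cascading. Concretely, the query with input segment $[x(l_k),x(r_{j-1})]\times y^*$ asks for a point $p'\in P$ of minimum weight subject to $x(l_k)\le x(p')\le x(r_{j-1})$ and $y(p')<y^*$; that is, the minimum-weight point inside an axis-parallel rectangle bounded on the left, right, and top and unbounded below. So it suffices to preprocess $P$ for such three-sided range-minimum queries.

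For the data structure I would build a balanced binary search tree $T$ on the $x$-coordinates of the points of $P$, with the points stored at the leaves in $x$-order. For each node $v$ of $T$, let $P_v$ be the set of points in the leaves of $v$'s subtree; store an array $A_v$ holding $P_v$ sorted by $y$-coordinate, and alongside it the running prefix minima of the weights (each prefix-minimum entry carrying a pointer to the point that realizes it). Then, for any index $t$, the minimum weight among the first $t$ entries of $A_v$, together with a witness point, is available in $O(1)$ time. Building all arrays $A_v$ and their prefix-minimum arrays takes $O(n\log n)$ time and $O(n\log n)$ space.

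To answer a query, decompose $[x(l_k),x(r_{j-1})]$ in the standard way into $O(\log n)$ canonical nodes of $T$ whose subtrees partition the points with $x$-coordinate in the interval. At each canonical node $v$, locate $y^*$ in $A_v$, i.e., find the number $t_v$ of entries of $A_v$ with $y$-coordinate strictly below $y^*$, and read off the prefix minimum of weights over the first $t_v$ entries in $O(1)$ time; the final answer is the smallest of these $O(\log n)$ candidate values, with the associated witness $p'$. Doing the $O(\log n)$ locations by independent binary searches would cost $O(\log^2 n)$, so, exactly as in the classical speed-up for orthogonal range trees, I would add fractional-cascading pointers between $A_v$ and the arrays $A_{\mathit{lc}(v)}$, $A_{\mathit{rc}(v)}$ of its two children. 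Walking down the two search paths for $x(l_k)$ and $x(r_{j-1})$, we maintain the position of $y^*$ in the current $A_v$; whenever a canonical node is spun off as a child of a path node, its position of $y^*$ is obtained in $O(1)$ time by following one cascading pointer. This makes the whole query run in $O(\log n)$ time, while preprocessing stays $O(n\log n)$ time and space.

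The step I expect to be the main obstacle is purely the bookkeeping of the fractional-cascading layer: threading the position of $y^*$ correctly through the split node and then down both branches so that each of the $O(\log n)$ canonical nodes is served in $O(1)$, and handling the open inequality $y(p')<y^*$ consistently at the binary-search boundary. If one prefers to avoid fractional cascading entirely, the same bounds can be achieved by a persistence-based route: sort $P$ by $y$-coordinate and insert the points in that order into a persistent segment tree keyed on $x$-rank that maintains subtree weight-minima; a query binary-searches for the version corresponding to the threshold $y^*$ and then performs a single $O(\log n)$ range-minimum query on that version, again with $O(n\log n)$ preprocessing time and space and $O(\log n)$ query time.
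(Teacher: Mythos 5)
Your proposal is correct and follows essentially the same route as the paper: a balanced search tree on the $x$-coordinates with $y$-sorted secondary lists at the nodes, a canonical decomposition of the query $x$-range into $O(\log n)$ canonical nodes, and fractional cascading across the lists so that all the $y^*$-locations cost $O(\log n)$ in total. The only difference is how the per-node minimum is extracted: you augment the full $y$-sorted list with prefix minima of the weights, whereas the paper prunes each list to a dominance staircase (points in decreasing $y$-order have increasing weight) so that the highest surviving point below the query segment is automatically the minimum-weight one; both variants give the claimed $O(n\log n)$ preprocessing and $O(\log n)$ query time.
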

\begin{proof}
Recall that the downward min-weight point query is to compute the minimum weight point of
$P$ below a query horizontal segment.

We built a complete binary search tree $T$ whose leaves store points of $P$ from left to right.
For each node $v\in T$, let $P_v$ denote the subset of points of $P$ in the
leaves of the subtree rooted at $v$. We compute a subset $P_v'\subseteq P_v$
with the following property: (1) If we sort all points of $P_v'$ in the order of
decreasing $y$-coordinate, then the weights of the points are sorted in
increasing order; (2) for any point $p\in P_v\setminus P_v'$, $P_v'$ must have a point $p'$ below $p$ with $w(p')\leq w(p)$. We compute $P_v'$ for all $v\in T$ in
a bottom-up manner as follows. Initially, let $P_v'=P_v$ for all leaves $v\in
T$. Consider an internal node $v$. We assume that both $P_u'$ and $P_w'$ are
computed already, where $u$ and $w$ are the two children of $v$. We also assume
that points of both $P_u'$ and $P_w'$ are sorted by $y$-coordinate. The subset $P_v'$ is computed
by merging $P_u'$ and $P_w'$ as follows.

We scan the two sorted lists of $P_u'$ and $P_w'$ in decreasing $y$-coordinate
order, in the same way as merge sort. Suppose we are comparing two points $p_u\in
P_u'$ and $p_w\in P_w'$ and the higher one will be placed at the end of an
already sorted list $L$ (assume that the lowest point of $L$ is higher than both $p_u$ and $p_w$; initially $L=\emptyset$). Suppose $p_u$ is higher than $p_w$. In the normal merge
sort, one would just place $p_u$ at the end of $L$. Here we do the following.
Let $p$ be the lowest point of $L$. If $w(p_u)>w(p)$, then add $p_u$ to the end
of $L$. Otherwise, we remove $p$ from $L$ (we say that $p$ is {\em pruned}), and then we keep pruning the next lowest point of $L$ until its weight is smaller than $w(p_u)$ and finally we place $p_u$ at the end of $L$.
Clearly, the time for computing
$P_v'$ is bounded by $O(|P_u'|+|P_w'|)$.

In this way, we can compute $P_v'$ for all nodes $v\in T$ in $O(n\log n)$ time
and space.
Next, we construct a fractional cascading data structure~\cite{ref:ChazelleFr86}
on the sorted lists of $P_v'$ of all nodes $v\in T$, which can be done in time
linear to the total size of all lists, which is $O(n\log n)$.
This finishes the preprocessing, which takes $O(n\log n)$ time and space.

Given a query horizontal segment $B=[x_1,x_2]\times y$, our goal is to find the
minimum weight point among all points of $P$ below $B$. Using the standard
approach, we can find in $O(\log n)$ time a set $V$ of $O(\log n)$ nodes such
that the union $\bigcup_{v\in V}P_v$ is exactly the subset of points of $P$
whose $x$-coordinates are in $[x_1,x_2]$ and parents of nodes of $V$ are on two
paths of $T$ from the root to two nodes. For each node $v\in V$, we wish to
find the highest point $p_v$ of $P_v'$ below $B$. Due to the above Property (2) of $P_v'$, $p_v$ must be the
minimum weight point below $B$ among all points of $P_v$. We can compute $p_v$ for all $v\in V$
in $O(\log n)$ time using the fractional cascading data
structure~\cite{ref:ChazelleFr86}, after which we return the highest $p_v$ among
all $v\in V$ as the answer to the query. The total time of the query algorithm is $O(\log n)$.

This proves the lemma.
\qed
\end{proof}

This finishes the description of the algorithm for $P_j^1$. The algorithm for
$P_j^2$ is similar with the following minor changes. First, when doing each rightward
segment dragging query, the lower endpoint of the query vertical segment is
at $-\infty$ instead of $y(s_{j-1})$. Second, when the downward min-weight point query
returns a point, we do not have to check whether it is in $s_{j-1}$ anymore.
The rest of the algorithm is the same.
In this way, all non-redundant intervals of $P^*$ starting at index $j$ can be
computed. As analyzed above, the runtime of the entire algorithm is bounded by
$O((n+m)\log (n+m))$.

As such, using the dual transformation, the $L_{\infty}$ case can be solved in
$O((n+m)\log (n+m))$ time.

\begin{theorem}
The line-constrained $L_{\infty}$ hitting set problem can be solved in $O((n+m)\log (n+m))$ time.
\end{theorem}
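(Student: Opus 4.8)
The plan is to instantiate the dual transformation of Section~\ref{sec:dualtransformation} for the $L_{\infty}$ metric, where each disk is an axis-parallel square. After replacing $S$ by its Non-Containment subset $\widehat{S}$, we build $S^*$ (the $m$ dual points at abscissae $1,\dots,m$) and the set $P^*$ of dual segments, run the 1D dual coverage algorithm of Pedersen and Wang~\cite{ref:PedersenAl22} on $S^*$ and $P^*$, and read off an optimal hitting set $P_{opt}$ by taking, for each dual segment chosen by the cover, the point $p_i\in P$ that generated it. The difference from the 1D, unit-disk, and $L_1$ cases is that now a single point $p_i$ can generate up to $\lceil m/2\rceil$ dual segments (the indices of the disks it hits need not be consecutive), so three things need care: that the transformation is still correct, that $|P^*|$ is small, and that $P^*$ can be computed quickly.

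For correctness, the worry is that an optimal cover $P^*_{opt}$ uses two segments from the same $P_i^*$, thereby charging $w(p_i)$ twice. I would rule this out via Lemma~\ref{lem:40}: if two consecutive segments $[j_1,j_2]$ and $[j_3,j_4]$ of $P_i^*$ both lie in $P^*_{opt}$, then the Non-Containment property forces $p_i$ to be vertically above every disk $s_j$ with $j_2<j<j_3$; letting $s_{j_0}$ be the one among these whose upper edge is lowest, some segment $[j_5,j_6]\in P^*_{opt}$ defined by a different point $p_{i'}$ covers $s^*_{j_0}$. Comparing upper-edge heights and using that $s_{j_0}$ crosses the vertical line through $p_i$, one shows that $p_{i'}$ hits all of $S[j_1,j_0]$ when $x(p_{i'})\le x(p_i)$, and all of $S[j_0,j_4]$ otherwise; either way $[j_5,j_6]$ contains $[j_1,j_2]$ or $[j_3,j_4]$, so one of those can be deleted from $P^*_{opt}$ without destroying coverage, contradicting optimality. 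Hence $P_{opt}$ really is optimal. For the bound $|P^*|\le 2n+m$ (Lemma~\ref{lem:60}), sort $P$ from top to bottom as $q_1,\dots,q_n$, let $S_i$ collect the disks whose upper edge lies between $q_{i-1}$ and $q_i$ (with $S_1$ those above $q_1$), and build a nested family $\calI_1\subseteq\calI_2\subseteq\cdots\subseteq\calI_n$ of interval sets, where inserting each disk of $S_i$ adds at most one new interval so that $|\calI_i|\le\sum_{k\le i}m_k$. Each $\calI_i$ has the ``superset'' property that every maximal interval of indices of disks of $S_1\cup\cdots\cup S_i$ missing from $\calI_i$ is strictly contained in some interval of $\calI_i$; the three-interval pigeonhole argument of Lemma~\ref{lem:50} then shows $q_i$ can contribute at most two dual segments beyond those already in $\calI_i$. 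Since $|\calI_n|\le m$, summing gives $|P^*|\le 2n+m$.

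The step I expect to be the main obstacle is computing $P^*$ in $O((n+m)\log(n+m))$ time, since naive enumeration costs $\Theta(nm)$. I would process $s_1,\dots,s_m$ in order and, in iteration $j$, output all non-redundant dual segments with left endpoint $j$; splitting the points inside $s_j$ at the vertical through $r_{j-1}$ into $P_j^1$ and $P_j^2$ isolates the points for which one must additionally exclude those already hitting $s_{j-1}$. The inner loop repeatedly (a) runs a rightward segment-dragging query on a thin vertical segment above $s_{j-1}$ (initially at abscissa $x(l_j)$, reaching up to $y(s_j)$) to find the next candidate point $p$, answerable in $O(\log n)$ after $O(n\log n)$ preprocessing via Chazelle's structure~\cite{ref:ChazelleAn88} or fractional cascading~\cite{ref:ChazelleFr86}; (b) runs a max-range query (Lemma~\ref{lem:maxrange}) to get the largest $k$ with $p$ hitting all of $S[j,k]$; (c) runs a range-minima query for the lowest disk of $S[j,k]$, whose upper-edge height $y^*$ pins down where a minimum-weight generator of $[j,k]$ must lie; (d) runs a downward min-weight point query (Lemma~\ref{lem:minweightquery}) to recover that generator and hence $w([j,k])$; then it lowers the dragging segment to $\min\{y^*,y(s_{k+1})\}$ and repeats. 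Charging the $O(\log(n+m))$ cost of each round to the interval $[j,k]\in P^*$ it produces — which is never charged again — the total is $O(|P^*|\log(n+m))=O((n+m)\log(n+m))$ by Lemma~\ref{lem:60}. Finally, $|S^*|=m$ and $|P^*|=O(n+m)$, so the 1D dual coverage solver of~\cite{ref:PedersenAl22} finishes in $O((m+n)\log(m+n))$ time, establishing the theorem.
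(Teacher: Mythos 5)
Your proposal is correct and follows essentially the same route as the paper: the dual transformation, the one-segment-per-$P_i^*$ correctness lemma (Lemma~\ref{lem:40}), the $|P^*|\le 2n+m$ bound via the nested interval sets (Lemmas~\ref{lem:50} and \ref{lem:60}), and the same query-based algorithm (segment dragging, max-range, range-minima, downward min-weight queries) with the charging argument, followed by the 1D dual coverage solver of~\cite{ref:PedersenAl22}. No substantive differences from the paper's proof.
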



\section{The $L_2$ case}
\label{sec:l2}
In this section, following the dual transformation, we solve $L_{2}$ hitting set problem.

Recall that we have made a general position assumption that no point of $P$ is on the
boundary of a disk of $S$. In the $L_2$ metric, $l_j$ (resp., $r_j$) is the only leftmost (resp., rightmost) point of disk $s_j$. For a disk $s_j\in S$ and a point $p_i\in P$, we say that $p_i$ is {\em vertically above} $s_j$ if $p_i$ is outside $s_j$ and $x(l_j)\leq x(p_i)\leq x(r_j)$. For any disk $s_j$, we use $\partial s_j$ to denote the portion of its boundary above $\ell$, which is a half-circle. Note that $\partial s_j$ and $\partial s_k$ have at most one intersection, for any two disks $s_j$ and $s_k$.

%
%

As in the $L_{\infty}$ case, it is possible that $|P_i^*|\geq 2$;
$|P_i^*|\leq \lceil m/2\rceil$ also holds. We have the following lemma, whose proof
follows the scheme of Lemma~\ref{lem:40} although the details are not exactly the same.

\begin{lemma}\label{lemma:80}
In the $L_2$ metric, for any optimal solution $P^*_{opt}$ to the 1D dual coverage problem on $P^*$ and $S^*$, $P^*_{opt}$ contains at most one dual segment from $P_i^*$ for any $1\leq i\leq n$.
\end{lemma}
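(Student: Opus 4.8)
The plan is to mimic the structure of the proof of Lemma~\ref{lem:40}, adapting the geometric containment argument from squares to disks. Assume for contradiction that an optimal solution $P^*_{opt}$ contains two segments from $P_i^*$; among all such pairs, pick two consecutive ones $[j_1,j_2]$ and $[j_3,j_4]$ with $j_2+1\le j_3-1$. Then $p_i$ hits every disk in $S[j_1,j_2]\cup S[j_3,j_4]$ but misses every $s_j$ with $j\in[j_2+1,j_3-1]$. Exactly as in Lemma~\ref{lem:40}, the Non-Containment property gives $x(l_j)\le x(l_{j_3})\le x(p_i)\le x(r_{j_2})\le x(r_j)$ for each such $j$, so $p_i$ is vertically above $s_j$; in particular $p_i$ lies above $\partial s_j$. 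Among the disks in $S[j_2+1,j_3-1]$, let $s_{j_0}$ be one whose half-circle $\partial s_{j_0}$ is lowest at $x=x(p_i)$ (equivalently, $p_i$ is ``least far'' above it). Since $P^*_{opt}$ covers $S^*$, some dual segment $[j_5,j_6]\in P^*_{opt}$ generated by a point $p_{i'}$ with $i'\ne i$ covers $s_{j_0}^*$, so $p_{i'}$ hits $s_{j_0}$ and $j_0\in[j_5,j_6]$.

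The heart of the argument is to show $[j_5,j_6]$ contains either $[j_1,j_2]$ or $[j_3,j_4]$, which yields the contradiction since then one of those two segments is redundant in $P^*_{opt}$. Split on whether $x(p_{i'})\le x(p_i)$; by symmetry treat $x(p_{i'})\le x(p_i)$ and aim to prove $p_{i'}$ hits all of $S[j_1,j_0]$. Fix $j\in[j_1,j_0-1]$. First I would argue that $p_i$ lies at or above $\partial s_j$ on the vertical line through $p_i$, and strictly above $\partial s_{j_0}$ there: if $j\in[j_2+1,j_0-1]$ this is the minimality choice of $j_0$; if $j\in[j_1,j_2]$ then $p_i$ hits $s_j$ so $p_i$ is below (or on) $\partial s_j$, and since $p_i$ is above $\partial s_{j_0}$, the half-circle of $s_j$ sits above that of $s_{j_0}$ at $x(p_i)$. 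Since $j<j_0$, Non-Containment gives $x(l_j)\le x(l_{j_0})$. The key geometric claim is then: the portion of disk $s_{j_0}$ lying to the left of $\ell_{p_i}$ is contained in the portion of $s_j$ lying to the left of $\ell_{p_i}$. Given $x(p_{i'})\le x(p_i)$ and $p_{i'}\in s_{j_0}$, point $p_{i'}$ lies in that left portion of $s_{j_0}$, hence in $s_j$, so $p_{i'}$ hits $s_j$. Thus $p_{i'}$ hits all of $S[j_1,j_0]$; since $[j_5,j_6]$ is a maximal run of indices hit by $p_{i'}$ and contains $j_0$, it contains $[j_1,j_0]\supseteq[j_1,j_2]$.

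The main obstacle, and the place where disks genuinely differ from squares, is establishing that left-of-$\ell_{p_i}$ containment claim for two circular disks centered on $\ell$. For axis-parallel squares this was immediate because the left side of a square is a single vertical segment; for disks I need a real lemma about two half-circles $\partial s_{j_0}$ and $\partial s_j$ with $x(l_j)\le x(l_{j_0})$, with $\partial s_j$ above $\partial s_{j_0}$ at abscissa $x(p_i)$, and with $x(r_{j_0})\le x(r_j)$ (which follows from $j<j_0$? — actually from Non-Containment, $x(r_j)\le x(r_{j_0})$, so I must be careful here and instead compare radii/centers directly). I would prove: if two disks $s,s'$ centered on $\ell$ satisfy $x(l_s)\le x(l_{s'})$ and $\partial s$ passes above $\partial s'$ at some $x=x_0$ with $x_0$ inside both disks, then the region of $s'$ with $x\le x_0$ is contained in the region of $s$ with $x\le x_0$; this uses that $\partial s$ and $\partial s'$ cross at most once (noted in the excerpt) together with the left-endpoint ordering to pin down which disk is ``outer'' on the left side. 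Once this monotonicity lemma is in hand, the rest of the proof is a routine transcription of Lemma~\ref{lem:40}, including the symmetric case $x(p_{i'})>x(p_i)$ giving $[j_5,j_6]\supseteq[j_3,j_4]$, and the concluding redundancy argument contradicting optimality of $P^*_{opt}$.
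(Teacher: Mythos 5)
Your proposal is correct and follows essentially the same route as the paper's proof: choose $s_{j_0}$ with the lowest intersection of $\partial s_{j_0}$ with $\ell_{p_i}$, and in the case $x(p_{i'})\le x(p_i)$ show $p_{i'}$ hits all of $S[j_1,j_0]$ via exactly the left-of-$\ell_{p_i}$ containment of $s_{j_0}$ in $s_j$, which the paper also derives from $x(l_j)<x(l_{j_0})$ together with the fact that $\partial s_j$ and $\partial s_{j_0}$ cross at most once. The ``monotonicity lemma'' you flag as the main obstacle is precisely the step the paper handles (somewhat tersely, with a figure), so your argument matches the paper's in both structure and level of detail.
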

\begin{proof}
Assume to the contrary that $P^*_{opt}$ contains more than one segment from $P_i^*$. Among all segments of $P^*_{opt}\cap P_i^*$, we choose two consecutive segments $[j_1,j_2]$ and $[j_3,j_4]$; thus $j_2+1\leq j_3-1$.
Then all disks in $S[j_1,j_2]\cup S[j_3,j_4]$ are hit by $p_i$, while $s_j$ is not hit by $p_i$ for any
$j\in [j_2+1,j_3-1]$.

Due to the Non-Containment property of $S$,
following the argument in Lemma~\ref{lem:40}, $p_i$ is vertically above $s_{j}$ for any $j\in [j_2+1,j_3-1]$.
Among all disks of $S[j_2+1,j_3-1]$, let $s_{j_0}$ be the one whose boundary has the lowest intersection with $\ell_{p_i}$, where $\ell_{p_i}$ is the vertical line through $p_i$.

Since $P^*_{opt}$ is an optimal solution, a dual
segment $[j_5,j_6]\in P^*$ defined by some point $p_{i'}$ with $i\neq i'$ must
cover the dual point $s^*_{j_0}$, i.e., $p_{i'}$ hits all disks $s_j$ with $j\in [j_5,j_6]$ and $j_0\in [j_5,j_6]$. In particular, $p_{i'}$ hits $s_{j_0}$.
In what follows, we prove that $[j_5,j_6]$ must contain either $[j_1,j_2]$ or $[j_3,j_4]$.
Depending on whether $x(p_{i'})\leq x(p_i)$, there are two cases.

\begin{figure}[h]
\begin{minipage}[t]{\textwidth}
\begin{center}
\includegraphics[height=1.3in]{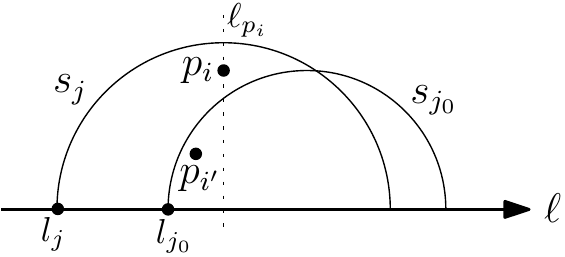}
\caption{\footnotesize Illustrating the proof of Lemma~\ref{lemma:80}.
}
\label{fig:cross}
\end{center}
\end{minipage}
\vspace{-0.15in}
\end{figure}

\begin{itemize}
  \item
    If $x(p_{i'})\leq x(p_i)$, we prove below that $p_{i'}$ hits all disks $S[j_1,j_0]$. Recall that $p_{i'}$ hits $s_{j_0}$. Hence, it suffices to prove that $p_{i'}$ hits $s_j$ for any $j\in [j_1,j_0-1]$. Consider any $j\in [j_1,j_0-1]$.

    We claim that the intersection of $\partial s_j$ with $\ell_{p_i}$ must be higher than that of $\partial s_{j_0}$. Indeed, if $j\in [j_2+1,j_0-1]$, then this follows the definition of $j_0$. If $j\leq j_2$, then $p_i$ must hit $s_j$. Hence, $p_i$ is below $\partial s_j$. Because $p_i$ is vertically above $s_{j_0}$, the claim follows.

    Since $j<j_0$, due to the Non-Containment property, $x(l_j)<x(l_{j_0})$. Since $\partial s_j$ and $\partial s_{j_0}$ cross each other at most once, the above claim implies that $s_j$ and $s_{j_0}$ do not cross each other on the left side of $\ell_{p_i}$ (e.g., see Fig.~\ref{fig:cross}). As $p_{i'}$ is to the left of $\ell_{p_i}$ and $p_{i'}$ is inside $s_{j_0}$, the above further implies that $p_{i'}$ is inside $s_j$ as well (e.g., see Fig.~\ref{fig:cross}).

      This proves that $p_{i'}$ hits all disks of $S[j_1,j_0]$. As $j_0\in [j_5,j_6]$, $[j_1,j_0]$ must be contained in $[j_5,j_6]$ since $[j_5,j_6]$ is a maximal interval of indices of disks hit by $p_{i'}$. Since $[j_1,j_2]\subseteq [j_1,j_0]$, we obtain that $[j_5,j_6]$ must contain $[j_1,j_2]$.
  \item
  If $x(p_{i'})> x(p_i)$, then by a symmetric analysis to the above, we can show that $[j_5,j_6]$ must contain $[j_3,j_4]$.
\end{itemize}

The above proves that $[j_5,j_6]$ contains either $[j_1,j_2]$ or $[j_3,j_4]$. Without loss of generality, we assume that $[j_5,j_6]$ contains $[j_1,j_2]$.
As $[j_5,j_6]$ is in $P^*_{opt}$, if we remove $[j_1,j_2]$ from $P^*_{opt}$, the rest of  the intervals of $P^*_{opt}$ still form a coverage for all dual points of $S^*$, which contradicts with that $P^*_{opt}$ is an optimal coverage.

The lemma thus follows.
\qed
\end{proof}

As in the $L_{\infty}$ case, the above lemma implies that it suffices to find an
optimal solution to the 1D dual coverage problem on $P^*$ and $S^*$.

\subsection{Upper bound for $|P^*|$}
\label{sec:boundL2}

As $|P_i^*|\leq \lceil m/2\rceil$, an obvious upper bound for $|P^*|$ is
$O(mn)$. In this section, with some observations, we show that
$|P^*|=O(m+\kappa)$, where $\kappa$ is the number of pairs of disks of $S$ that
intersect.

Let $H$ denote the upper half-plane bounded by $\ell$.
Consider two disks $s_{j}$ and $s_{k}$ whose boundaries intersect, say, at a point $v$. The boundaries of $s_{j}$ and $s_{j'}$ partition $H$ into four regions. One region is inside both disks; another region is outside both disks. Each of the remaining two regions is contained in exactly one of the disks; we call them the {\em wedges} of $v$ (resp., $s_{j}$ and $s_{j'}$). One wedge has $v$ as its rightmost point and we call it the {\em left wedge}; the other wedge has $v$ as its leftmost point and we call it the {\em right wedge} (e.g., see Fig.~\ref{fig:wedge}).

\begin{figure}[h]
\begin{minipage}[t]{\textwidth}
\begin{center}
\includegraphics[height=1.0in]{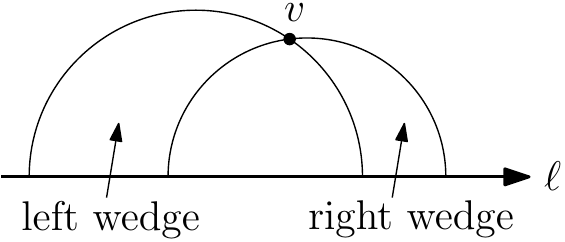}
\caption{\footnotesize Illustrating left and right wedges of two disks that intersect.}
\label{fig:wedge}
\end{center}
\end{minipage}
\vspace{-0.15in}
\end{figure}

Let $\calA$ be the arrangement of the boundaries of all disks of $S$ in the half-plane $H$. $\calA$ has a single face that is outside all disks; for convenience, we remove it from $\calA$. Observe that points of $P$ located in the same face of $\calA$ define the same subset of dual segments of $P^*$. Hence it suffices to consider the dual segments defined by all faces of $\calA$.

Due to that all disks of $S$ are centered on $\ell$ as well as the Non-Containment property of $S$, we discuss some properties of the faces of $\calA$. For convenience, we consider $\ell$ as the boundary of a disk with an infinite radius and with its center below $\ell$ (thus $H$ is the region outside the disk); let $s'$ denote the disk and let $S'=S\cup \{s'\}$. Consider a face $f\in \calA$. Each edge $e$ of $f$ is a circular arc on the boundary of a disk $s_e$ of $S'$, such that $f$ is either inside or outside the disk $s_e$. More specifically, if $u$ and $v$ are leftmost and rightmost vertices of $f$, respectively, then $u$ and $v$ partition the boundary of $f$ into an upper chain and a lower chain (both chains are $x$-monotone). For each edge $e$ in the upper (resp., lower) chain, $f$ is inside (resp., outside) $s_e$.
As the boundaries of every two disks of $S$ cross each other at most once,
the boundary of each disk contributes at most one edge of $f$.
Consider the leftmost vertex $u$ of $f$, which is incident two edges of $f$, one edge $e_a$ on the upper chain and the other edge $e_b$ on the lower chain. If $e_b$ is on $\ell$, then $u$ is actually the leftmost point of the disk whose boundary contains $e_a$; in this case, we call $f$ an {\em initial face}  (e.g., see Fig.~\ref{fig:initialface}).
If $f$ is not an initial face, we say that $f$ is a {\em non-initial face}; in this case,
$u$ must be the rightmost vertex of another face $f'$ such that $f'$ is in the left wedge of $u$ while $f$ is in the right wedge of $u$, and we call $f'$ the {\em opposite face} of $f$ (e.g., see Fig.~\ref{fig:noninitial}).

\begin{figure}[h]
\begin{minipage}[t]{\textwidth}
\begin{center}
\includegraphics[height=1.0in]{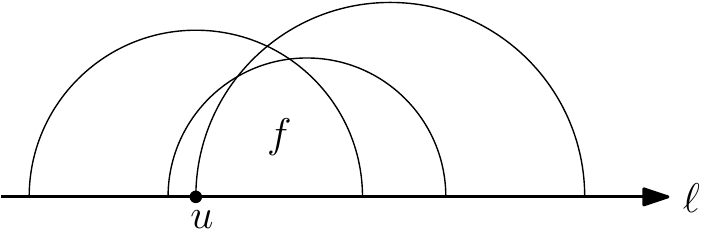}
\caption{\footnotesize Illustrating an initial face $f$ with leftmost vertex $u$.
}
\label{fig:initialface}
\end{center}
\end{minipage}
\vspace{-0.15in}
\end{figure}

\begin{figure}[h]
\begin{minipage}[t]{\textwidth}
\begin{center}
\includegraphics[height=1.0in]{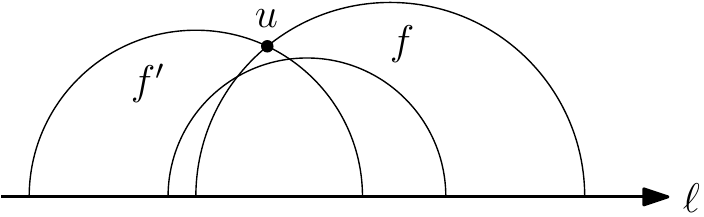}
\caption{\footnotesize Illustrating a non-initial face $f$ with leftmost vertex $u$ and its opposite face $f'$.}
\label{fig:noninitial}
\end{center}
\end{minipage}
\vspace{-0.15in}
\end{figure}

To help the analysis, we introduce a directed graph $G$, defined as follows. The faces of $\calA$ form the node set of $G$. There is an edge from a node $f'$ to another node $f$ if the face $f$ is a non-initial face and $f'$ is the opposite face of $f$ (i.e., the rightmost vertex of $f'$ is the leftmost vertex of $f$; e.g., in Fig.~\ref{fig:noninitial}, there is a directed edge from $f'$ to $f$). Since each face of $\calA$ has only one leftmost vertex and only one rightmost vertex, each node $G$ has at most one incoming edge and at most one outgoing edge. Observe that each initial face does not have an incoming edge while each non-initial face must have an incoming edge. As such, $G$ is actually composed of a set of paths, each of which has an initial face as the first node.

For each face $f\in \calA$, we use $P^*(f)$ to denote the subset of the dual segments of $P^*$ generated by $f$ (i.e, generated by any point in $f$). Our goal is to obtain an upper bound for $|\bigcup_{f\in \calA}P^*(f)|$, which will be an upper bound for $|P^*|$ as $P^*\subseteq\bigcup_{f\in \calA}P^*(f)$.
The following lemma proves that each initial face can only generate one dual segment.

\begin{lemma}\label{lem:90}
For each initial face $f$, $|P^*(f)| = 1$.
\end{lemma}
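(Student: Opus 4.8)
The plan is to show that an initial face $f$ has a very special shape which forces the indices of the disks hit by points in $f$ to form a single interval. Recall that $f$ is an initial face means its leftmost vertex $u$ is the leftmost point $l_a$ of some disk $s_a$, and the lower chain of $f$ near $u$ lies on $\ell$. First I would argue that $f$ is a face whose \emph{lower chain consists of a single edge on $\ell$} (i.e.\ $f$ touches $\ell$), because starting at $u=l_a$ on $\ell$ and walking along the boundary of $f$, the only way the lower chain could leave $\ell$ is through the rightmost point $r_b$ of some disk $s_b\in S$; but then, by the Non-Containment property, one disk would contain another along $\ell$, which is impossible since $S=\widehat S$. So $f$ is bounded below by a segment of $\ell$ from $l_a$ to some point, and bounded above by an $x$-monotone chain of circular arcs. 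In particular $f$ lies inside exactly the disks whose arcs appear on its upper chain (plus $s_a$ itself, whose arc starts at $u$).

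Next I would identify precisely which disks contain $f$. Let $p$ be any point of $f$. Then $p$ hits $s_j$ iff $p\in s_j$. I claim the set of such $j$ is exactly an interval $[a,c]$ for some $c\ge a$. For this, the key is: if $p\in s_j$ and $p\in s_k$ with $j<k$, then for every $t$ with $j<t<k$ we have $p\in s_t$. This is a standard consequence of the Non-Containment property: since $x(l_j)\le x(l_t)\le x(l_k)\le x(p)$ (the middle inequalities from Non-Containment, the last since $p\in s_k$) and $x(p)\le x(r_j)\le x(r_t)$ (since $p\in s_j$, then Non-Containment), $p$ lies between $l_t$ and $r_t$ horizontally; and because $f$ is an initial face attached to $\ell$ at $l_a$ with $a\le j<t$, the point $p$ lies below the arc $\partial s_t$ — this is where I must be careful, using that $\partial s_t$ separates $f$ from the exterior of $s_t$ only if $p$ were above it, but $p$ being in the initial face's interior and $\partial s_j,\partial s_k$ both passing over $p$ sandwiches $\partial s_t$ above $p$ as well (since the arcs are $x$-monotone and two of them already lie above $p$ at $x(p)$, and $\partial s_t$ cannot cross below while being a single-crossing arc with each of them). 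Hence $p\in s_t$. So the indices of disks hit by $p$ form a set with no ``gaps,'' i.e.\ a single interval, giving $|\calI_i|=1$ for any $p=p_i\in f$ and therefore $|P^*(f)|=1$.

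The step I expect to be the main obstacle is the geometric claim that for $t$ strictly between $j$ and $k$, the point $p\in f$ lies below $\partial s_t$. The horizontal-coordinate argument via Non-Containment is routine, but concluding that $p$ is also \emph{below} the arc requires exploiting that $f$ is an \emph{initial} face (so it is anchored to $\ell$ and everything relevant lies on one connected ``inside'' region), rather than a general face of $\calA$ where the dual segment set can genuinely break into several intervals. I would handle this by noting that all of $s_a, s_j, s_t, s_k$ are centered on $\ell$ with $l_a \le l_t$, so their boundaries $\partial s_a,\partial s_j,\partial s_t,\partial s_k$ are nested near the left (the one with leftmost $l$ is ``outermost'' on the left), and since $f$ sits against $\ell$ just to the right of $l_a$, a point of $f$ that is inside both $s_j$ and $s_k$ must lie in the region bounded below by $\ell$ and above by the lower envelope of the arcs of all disks indexed in $[a,k]$ — and $\partial s_t$ is one of those arcs, hence above $p$. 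Once this is pinned down, the conclusion $|P^*(f)|=1$ is immediate. Alternatively, and perhaps more cleanly, I would phrase it as: $f$ is contained in $s_a$, $f$ is below $\partial s_j$ and below $\partial s_k$; since $\partial s_t$ crosses neither $\partial s_j$ nor $\partial s_k$ more than once and $l_t$ lies to the left of $x(p)$ while $r_t$ lies to the right, $\partial s_t$ must pass above $x(p)$, so $p\in s_t$.
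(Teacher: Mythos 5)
There is a genuine gap at exactly the step you flagged. Your argument hinges on the claim that a point $p\in f$ hitting $s_j$ and $s_k$ with $j<t<k$ must lie below $\partial s_t$, and neither justification you offer establishes it. The ``cleaner'' version is invalid: knowing $x(l_t)\leq x(p)\leq x(r_t)$, that $p$ is below $\partial s_j$ and $\partial s_k$, and that arcs pairwise cross at most once only guarantees that $\partial s_t$ crosses the vertical line through $p$ somewhere above $\ell$ --- not above $p$. Indeed, a point inside $s_j$ and $s_k$ but \emph{vertically above} $s_t$ is precisely the configuration that makes $|P_i^*|\geq 2$ possible in the $L_2$ case, and nothing in that argument actually uses the initial-face hypothesis (the fact ``$f\subseteq s_a$'' plays no role in it). The ``lower envelope'' version is essentially a restatement of the desired conclusion ($f$ lies in all disks indexed in the interval) with no proof attached. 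In addition, your preliminary structural claim that the lower chain of an initial face is a single edge on $\ell$ is false: take $s_a$ and $s_b$ with $x(l_a)<x(l_b)<x(r_a)<x(r_b)$; the initial face with leftmost vertex $l_a$ is bounded below by the segment of $\ell$ from $l_a$ to $l_b$ \emph{and then} by an arc of $\partial s_b$, with no Non-Containment violation (the lower chain can leave $\ell$ at a leftmost point $l_b$, not only at a rightmost point). That claim is not needed, but its failure shows the ``anchored to $\ell$'' picture you rely on is not as rigid as you assume.

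The paper closes the gap by a much simpler, essentially one-dimensional argument that you should adopt: since all points of a face of the arrangement lie in the same set of disks, it suffices to determine which disks contain the leftmost vertex $u=l_a$ of $f$ (more precisely, points of $f$ arbitrarily close to $u$), and $u$ lies \emph{on} $\ell$. For a point on $\ell$, membership in a disk centered on $\ell$ is just the condition $x(l_j)<x(l_a)<x(r_j)$, so the 2D geometry disappears. The Non-Containment property then gives the interval structure directly: no $s_j$ with $j>a$ contains $l_a$ because $x(l_a)<x(l_j)$; and if $s_j$ with $j<a$ contains $l_a$, then for any $j<j'<a$ one has $x(l_{j'})<x(l_a)$ and $x(r_j)<x(r_{j'})$, hence $x(l_a)<x(r_{j'})$, so $s_{j'}$ contains $l_a$ as well. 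Thus the indices of disks containing $f$ form a single interval $[k',a]$ and $|P^*(f)|=1$, with no envelope or arc-crossing argument required.
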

\begin{proof}
Let $u$ be the leftmost vertex of $f$. By the definition of initial faces, $u$ is the leftmost point $l_k$ of a disk $s_k$ and $f\subseteq s_k$. In the following, we show that indices of all disks of $S$ containing $l_k$ must form an interval $[k',k]$ for some $k'\leq k$, which will prove the lemma.

Indeed, for any disk $s_j$ with $j>k$, due to the Non-Containment property of $S$, $x(l_k)<x(l_{j})$ and thus $s_j$ cannot contain $l_k$. On the other hand, suppose $s_j$ contains $l_k$ for some $j<k$. Then, for any $j'$ with $j<j'<k$, we claim that $s_{j'}$ contains $l_k$. It suffices to show that $x(l_{j'})<x(l_k)<x(r_{j'})$. Due to the Non-Containment property of $S$, we have $x(l_{j'})<x(l_k)$. Also, since $s_j$ contains $l_k$, we have $x(l_j)<x(l_k)<x(r_j)$. Due to the Non-Containment property, since $j<j'$, we have $x(r_j)<x(r_{j'})$. As such, we obtain $x(l_k)<x(r_{j'})$. The claim thus follows, which leads to the lemma.
\qed
\end{proof}

The next lemma shows that for any two adjacent faces $f'$ and $f$ in any path of $G$, the symmetric difference between $P^*(f')$ and $P^*(f)$ is of constant size.

\begin{lemma}\label{lem:100}
For any two adjacent faces $f'$ and $f$ in any path of $G$, $|P^*(f)\setminus P^*(f')|\leq 3$ and $|P^*(f')\setminus P^*(f)|\leq 3$.
\end{lemma}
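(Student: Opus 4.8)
The plan is to relate the face $f$ to its opposite face $f'$ through the structure of the leftmost vertex $u$ of $f$. Recall that $u$ is the rightmost vertex of $f'$ and is the single intersection point $v$ of the boundaries of two disks $s_a$ and $s_b$, where $f'$ lies in the left wedge of $u$ and $f$ lies in the right wedge. The first step is to pin down exactly how the subset of disks of $S$ containing a point changes as one crosses from $f'$ into $f$: since every pair of disk boundaries crosses at most once, crossing the arc separating $f'$ from $f$ flips membership only for $s_a$ and $s_b$, and moreover (because $f'$ is in one wedge and $f$ in the other) exactly one of $s_a, s_b$ is gained and exactly one is lost. So if $I(f')$ and $I(f)$ denote the index sets of disks of $S$ containing the respective faces, then $I(f)$ is obtained from $I(f')$ by deleting one index and inserting another. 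The key structural input here is the Non-Containment property, which forces $I(f')$ and $I(f)$ to each be a union of maximal intervals, and the dual segments $P^*(f')$, $P^*(f)$ are precisely those maximal intervals.

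Next I would carry out the combinatorial bookkeeping: how many maximal intervals can appear or disappear when a single index is removed and a single index is inserted into a set whose elements partition into maximal runs. Removing one index from a run either shrinks a run (no change in the count of intervals, but one interval is replaced by a shorter one), splits a run into two, or deletes a singleton run — in all cases at most two ``new'' intervals are created and at most two old ones destroyed. Inserting one index similarly either extends a run, merges two runs, or creates a singleton — again creating/destroying at most two intervals. Composing the two operations, $P^*(f)\setminus P^*(f')$ and $P^*(f')\setminus P^*(f)$ each have size at most $3$ (the bound is $3$ rather than $4$ because the delete and the insert act on indices that differ, so not all four potential changes can be independent; one can also just bound each of the two operations by $2$ and observe the interval touched by the delete cannot coincide with the one created by the insert in the worst overlap). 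I would formalize this with a short case analysis on whether $a$ and $b$ land in the same run, adjacent runs, etc., checking that each of the two symmetric differences never exceeds $3$.

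The main obstacle I anticipate is not the interval-counting arithmetic — that is routine — but rather justifying rigorously that crossing from $f'$ to $f$ changes disk-membership by exactly ``delete one, insert one'' and never by more. This requires arguing that $u = v$ is a vertex of the arrangement incident to exactly the two boundary arcs of $s_a$ and $s_b$ (no third disk boundary passes through $u$, by the general position assumption), and that the two wedges at $v$ really do correspond to ``$\in s_a,\ \notin s_b$'' on one side and ``$\notin s_a,\ \in s_b$'' on the other — i.e., that $f'$ and $f$ sit in these two wedges and not in the ``inside both'' or ``outside both'' regions. This follows from the definition of opposite face (the leftmost vertex of $f$ lies in the right wedge while the rightmost vertex of its opposite face lies in the left wedge), but it needs to be spelled out, together with the observation that every other disk of $S$ has the same membership status on both sides of $u$ since its boundary does not pass through $u$ and the two faces are adjacent across only the $s_a$–$s_b$ arc. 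Once this local picture is established, the rest is the interval bookkeeping described above, and the lemma follows.
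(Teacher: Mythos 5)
Your proposal is correct and follows essentially the same route as the paper: you identify that $S(f)$ and $S(f')$ differ in exactly the two disks whose boundaries cross at the shared vertex $u$ (one gained, one lost), and then bound the interval changes — deleting one index splits at most one interval into two, inserting one index creates at most one new interval — which yields the bound of $3$ for each side of the symmetric difference. The local "delete one, insert one" picture you flag as the main obstacle is exactly what the paper asserts (with less detail) via the wedge structure at $u$, so no genuinely different idea is involved.
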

\begin{proof}
As $f'$ and $f$ are adjacent in a path of $G$, without loss of generality, we assume that there is a directed edge from $f'$ to $f$. By definition, $f'$ and $f$ share a vertex $u$ that is the rightmost vertex of $f'$ and also the leftmost vertex of $f$ (e.g., see Fig.~\ref{fig:adjdisks}).

\begin{figure}[h]
\begin{minipage}[t]{\textwidth}
\begin{center}
\includegraphics[height=1.0in]{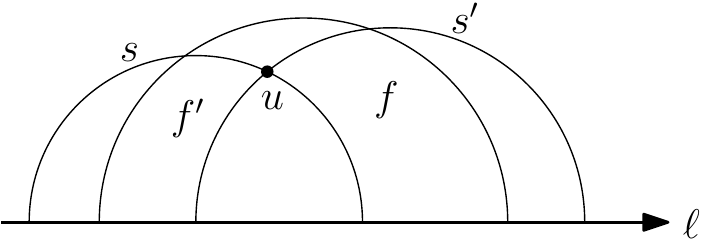}
\caption{\footnotesize Illustrating two faces $f'$ and $f$ that are adjacent in a path of $G$. The vertex $u$, which is the intersection of the boundaries of two disks $s$ and $s'$, is the rightmost vertex of $f'$ and also the leftmost vertex of $f$.
}
\label{fig:adjdisks}
\end{center}
\end{minipage}
\vspace{-0.15in}
\end{figure}

We define $S(f)$ as the subset of disks of $S$ containing $f$ and define $S(f')$ similarly. By definition, $P^*(f)$ (resp., $P^*(f')$) is the set of maximal intervals of indices of disks of $S(f)$ (resp., $S(f')$). It is easy to see that the symmetric difference of $S(f)$ and $S(f')$ comprises exactly two disks, i.e., the two disks whose boundaries intersect at $u$. As such, a straightforward analysis can prove that $|P^*(f)\setminus P^*(f')|\leq 3$ and $|P^*(f')\setminus P^*(f)|\leq 3$ as follows. Indeed, let $s$ be the disk of $S(f')\setminus S(f)$ and $s'$ be the disk of $S(f)\setminus S(f')$. Hence, $s$ contains $f'$ but not $f$ while $s'$ contains $f$ but not $f'$ (e.g., see Fig.~\ref{fig:adjdisks}). As such, comparing $P^*(f)$ to $P^*(f')$, we have the following two cases.

\begin{itemize}
  \item
  Due to that $S(f)$ ``loses'' a disk (i.e., $s$) comparing to $S(f')$, at most two new dual segments are generated in $P^*(f)$ comparing to $P^*(f')$, i.e., the interval of $S(f')$ containing the index of $s$ is divided into at most two new intervals in $P^*(f)$.

  \item
  Due to that $S(f)$ ``gains'' a disk (i.e., $s'$) comparing to $S(f')$, at most one new dual segment is generated in $P^*(f)$ in the form of one of the following three cases: (1) the index of $s'$ becomes a single interval in $P^*(f)$; (2) the index of $s'$ is merged with one interval of $P^*(f')$ to become a new interval of $P^*(f)$ with $s$ as an endpoint; (3) $s'$ is concatenated with two other intervals of $P^*(f')$ to become a new interval of $P^*(f)$ with $s'$ in the middle.
\end{itemize}

Combining the above two cases, we obtain that $|P^*(f)\setminus P^*(f')|\leq 3$. By a symmetric analysis, we can also obtain $|P^*(f')\setminus P^*(f)|\leq 3$.
\qed
\end{proof}

With the above two lemmas, we can now prove the upper bound for $P^*$.

\begin{lemma}\label{lem:boundLinf}
In the $L_2$ metric, $|P^*|=O(m+\kappa)$.
\end{lemma}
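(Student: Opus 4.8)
The plan is to bound $|P^*|$ by summing, over all paths of the graph $G$, the number of dual segments generated along each path, then summing over all paths. Recall that $P^* \subseteq \bigcup_{f \in \calA} P^*(f)$, so it suffices to bound $|\bigcup_{f\in\calA} P^*(f)|$. Since $G$ decomposes into vertex-disjoint paths, each starting at an initial face, I would fix one such path $f_0 \to f_1 \to \cdots \to f_t$ and bound the number of \emph{distinct} dual segments that appear among $P^*(f_0), P^*(f_1), \ldots, P^*(f_t)$. A segment that ever appears in this union either already appears in $P^*(f_0)$, or first appears at some $f_i$ with $i \geq 1$, in which case it lies in $P^*(f_i) \setminus P^*(f_{i-1})$. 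Hence the number of distinct segments along the path is at most $|P^*(f_0)| + \sum_{i=1}^{t} |P^*(f_i) \setminus P^*(f_{i-1})|$. By Lemma~\ref{lem:90}, $|P^*(f_0)| = 1$, and by Lemma~\ref{lem:100}, each term $|P^*(f_i)\setminus P^*(f_{i-1})|$ is at most $3$. Therefore the path contributes at most $1 + 3t$ distinct dual segments, where $t$ is the number of edges on the path.

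Next I would sum over all paths. If the paths have edge-counts $t_1, t_2, \ldots$, then the total number of distinct dual segments is at most $\sum_k (1 + 3 t_k) = (\text{number of paths}) + 3\sum_k t_k$. The number of paths equals the number of initial faces, which is at most the number of faces of $\calA$; more simply, each initial face corresponds to the leftmost point $l_k$ of a distinct disk, so there are at most $m$ paths. The quantity $\sum_k t_k$ is the total number of edges of $G$, and since $G$ is a subgraph of $\calA$'s adjacency structure (each edge of $G$ corresponds to a non-initial face, hence to a vertex of $\calA$ that is an intersection point of two disk boundaries), $\sum_k t_k$ is at most the number of vertices of $\calA$ in $H$. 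That number is $O(m + \kappa)$: there are $O(m)$ vertices coming from leftmost/rightmost points of disks on $\ell$ (and intersections with $\ell$), and each pair of intersecting disks contributes at most one interior vertex to $\calA$ (since $\partial s_j$ and $\partial s_k$ meet in at most one point in $H$), giving $O(\kappa)$ interior vertices. Putting these together yields $|\bigcup_{f} P^*(f)| = O(m) + 3\cdot O(m+\kappa) = O(m+\kappa)$, hence $|P^*| = O(m+\kappa)$.

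The main obstacle I anticipate is being careful about double-counting: the bound $|P^*(f_i)\setminus P^*(f_{i-1})| \le 3$ counts segments that are \emph{new to $f_i$ relative to its predecessor}, but a segment could in principle appear, disappear, and reappear along the path, or appear in two different paths. For the single-path bound this is handled automatically, since any segment in the union-along-the-path is counted at the first index where it appears (either $f_0$ or the first $f_i$ where it belongs to the symmetric difference), and we are overcounting, which is fine for an upper bound. Across different paths a segment could be generated by faces on two different paths, but again we are only proving an upper bound, so counting it once per path it appears in is acceptable. The one genuine point to verify is that the edges of $G$ are in bijection with (a subset of) the non-$\ell$ vertices of $\calA$, so that $\sum_k t_k = O(m+\kappa)$; this follows because a non-initial face $f$ has its leftmost vertex $u$ equal to a unique intersection point of two disk boundaries, and distinct non-initial faces give distinct such $u$ (each vertex of $\calA$ is the leftmost vertex of at most one face, being the unique leftmost vertex of the right wedge at $u$). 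I would also double-check the count of vertices of $\calA$: $O(m)$ endpoints $l_j, r_j$ on $\ell$, plus at most $\kappa$ interior crossing vertices, so $|\calA|$ has $O(m+\kappa)$ vertices, as needed. Combining everything gives the claimed $O(m+\kappa)$ bound on $|P^*|$.
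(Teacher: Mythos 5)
Your proposal is correct and follows essentially the same route as the paper: decompose $G$ into paths starting at initial faces, use Lemma~\ref{lem:90} for the one segment of each initial face and Lemma~\ref{lem:100} for the at most $3$ new segments per step, and bound the size of $G$ by the $O(m+\kappa)$ complexity of the arrangement $\calA$. Your extra care in charging edges of $G$ to leftmost vertices of non-initial faces is just a more explicit version of the paper's observation that $|G|=O(m+\kappa)$.
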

\begin{proof}
Recall that the graph $G$ consists of a set of directed paths, with the first node of each path representing an initial face. By definition, each initial face corresponds to exactly one disk of $S$. Hence, the total number of initial faces is at most $m$.
Lemmas~\ref{lem:90} and \ref{lem:100} together imply that $|P^*|\leq m+3\cdot |G|$, where $|G|$ is the number of nodes of $G$. Since the number of faces of $\calA$ is $O(m+\kappa)$, we have $|G|=O(m+\kappa)$. Therefore, we obtain $|P^*|=O(m+\kappa)$.
\qed
\end{proof}

%

\subsection{Computing $P^*$}

We now compute $P^*$. A straightforward method is to use brute force: For each point $p_i\in P$, check the disks of $S$ one by one following their index order; in this way, $P_i^*$ can be computed in $O(m)$ time. As such, the total time for computing $P^*$ is $O(mn)$. In what follows, we present another algorithm of $O(n\log (n+m)+(m+\kappa)\log m)$ time. As discussed in Section~\ref{sec:boundL2}, it suffices to compute the dual segments generated by all faces of $\calA$ (or equivalently, generated by all nodes of the graph $G$).


The main idea of our algorithm is to directly compute for each path $\pi\in G$ the dual segments defined by the initial face of $\pi$ and then for each non-initial face $f\in \pi$, determine $P^*(f)$ indirectly based on $P^*(f')$, where $f'$ is the predecessor face of $f$ in $\pi$.

We begin with computing the graph $G$. To this end, we first compute the arrangement $\calA$. This can be done in $O((m+\kappa)\cdot \log m)$ time, e.g., by a line sweeping algorithm.\footnote{It might be possible to compute $\calA$ in $O(m\log m+\kappa)$ time by adapting the algorithm of~\cite{ref:ChazelleAn92segment}. However, $O((m+\kappa)\cdot \log m)$ time suffices for our purpose as other parts of the algorithm dominate the time complexity of the overall algorithm.} Then, the graph $G$ can be constructed by traversing $G$ in additional $O(m+\kappa)$ time.

Recall that we also need to determine the weight for each dual segment of $P^*$. To this end, for each face $f\in \calA$, we compute its ``weight'' that is equal to the minimum weight of all points of $P$ in $f$ (if $f$ does not contain any point of $P$, then we set its weight to $\infty$). For this, it suffices to determine the face of $\calA$ containing each point of $P$. This can be done in $O(n\log (n+m)+(m+\kappa)\log m)$ time by a line sweeping algorithm, e.g., we can incorporate this step into the above sweeping algorithm for constructing $\calA$ (alternatively one could build a point location data structure on $\calA$~\cite{ref:EdelsbrunnerOp86} and then perform point location queries for points of $P$).

We next compute $P^*(f)$ for all initial faces $f$. Consider an initial face $f$. Let $s_j$ be the disk such that $l_j$ is the leftmost vertex of $f$. According to the proof of Lemma~\ref{lem:90}, $P^*(f)$ has only one interval $[k_j,j]$ for some index $k_j\leq j$. To compute $k_j$, we can do a simple binary search on the indices in the interval $[1,j]$. Indeed, we first take $k=j/2$ and check whether $s_k$ contains $l_j$. If yes, we continue the search on $[1,k]$; otherwise we proceed on $[k,j]$. In this way, we can find $k_j$ in $O(\log m)$ time. As such, $P^*(f)$ for all initial faces $f$ can be computed in $O(m\log m)$ time.

Next, for each path $\pi$ of $G$, starting from its initial face, we compute $P^*(f)$ for all non-initial faces $f\in \pi$. Based on the analysis of Lemma~\ref{lem:100}, the following lemma shows that $P^*(f)$ can be determined in $O(\log m)$ time based on $P^*(f')$, where $f'$ is the predecessor face of $f$ in $\pi$.

\begin{lemma}\label{lem:path}
The dual segments of $\bigcup_{f\in \pi}P^*(f)$ and their weights can be computed in $O(|\pi|\cdot\log m)$ time, where $|\pi|$ is the number of nodes of $\pi$.
\end{lemma}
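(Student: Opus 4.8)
The plan is to walk along the path $\pi$ from its initial face, maintaining the set $P^*(f)$ for the current face $f$ as an ordered collection of intervals together with, for each interval $[a,b]$, the minimum weight of a point of $P$ whose generated dual segments include exactly $[a,b]$ (equivalently, the minimum face-weight over all faces of $\calA$ that generate $[a,b]$). The data structure I would use is a balanced binary search tree keyed by the left endpoints of the intervals of $P^*(f)$, augmented so that we can locate, insert, delete, split, and merge intervals in $O(\log m)$ time, and so that each interval also carries a ``best weight seen so far'' field. We already know from the preprocessing that every face of $\calA$ has an associated weight (the minimum weight of a point of $P$ in it, or $\infty$), computed in the global $O(n\log(n+m)+(m+\kappa)\log m)$ sweep; and by Lemma~\ref{lem:90} the initial face of $\pi$ contributes a single interval, computable in $O(\log m)$ time. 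So the base of the walk is set up in $O(\log m)$ time.

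The inductive step is where Lemma~\ref{lem:100} does the work. When we move from $f'$ to its successor $f$ along the directed edge of $G$, the symmetric difference $S(f')\,\triangle\,S(f)$ consists of exactly two disks $s$ (lost) and $s'$ (gained), and these two disks are precisely the two whose boundaries cross at the shared vertex $u$ — their indices are available from the arrangement $\calA$ and the edge of $G$, so no search is needed to identify them. Removing the index of $s$ from $P^*(f')$ affects only the one interval of $P^*(f')$ that contains it: that interval is split into at most two pieces, which is an $O(\log m)$ operation in the tree (one search to find the interval, a constant number of deletions/insertions). Adding the index of $s'$ likewise touches only the (at most two) intervals adjacent to index $j(s')$: we look up whether $j(s')-1$ and $j(s')+1$ are currently covered, and correspondingly create a singleton, extend one interval, or concatenate two intervals, again $O(\log m)$ in the tree. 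These are exactly the three subcases enumerated in the proof of Lemma~\ref{lem:100}, so the total edit between $P^*(f')$ and $P^*(f)$ consists of $O(1)$ interval insertions and deletions, each costing $O(\log m)$; that is $O(\log m)$ per node of $\pi$, hence $O(|\pi|\cdot\log m)$ over the whole path.

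It remains to account for the weights. The weight that $P^*$ must attach to an interval $[a,b]\in P^*$ is the minimum, over all faces $f$ of $\calA$ that generate $[a,b]$, of the face-weight of $f$; and a given interval may be generated by several faces, possibly lying on different paths of $G$. To handle this I would not try to finalize weights during the walk of a single path. Instead, during the walk I attach to each newly created interval the face-weight of the face that created it, and whenever the walk re-creates an already-existing interval I keep the smaller of the two recorded weights. After all paths of $G$ have been processed, every interval of $P^*$ has been produced (by Lemmas~\ref{lem:90} and~\ref{lem:100}, every interval of $P^*$ equals $P^*(f)$-member for some face, and every face lies on some path), so a final pass that groups the $O(m+\kappa)$ produced intervals by their endpoints and keeps the minimum recorded weight for each distinct interval yields the correct weighted set $P^*$; by Lemma~\ref{lem:boundLinf} there are only $O(m+\kappa)$ of them, so this post-processing costs $O((m+\kappa)\log(m+\kappa))$, within the claimed bound. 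The statement of the present lemma concerns one fixed path $\pi$, so for its proof the per-path cost $O(|\pi|\cdot\log m)$ is exactly what is asserted; the global reconciliation of weights across paths is a separate, cheaper step folded into the overall $L_2$ algorithm.

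The main obstacle I anticipate is bookkeeping correctness rather than any deep difficulty: one must be careful that when $s'$ is added and it concatenates two intervals of $P^*(f')$ into a longer interval of $P^*(f)$, the two old intervals are still genuinely present in earlier faces and hence still belong to $P^*$ — which is fine, since $P^*$ is the union over all faces — but we must not delete them from our record of produced intervals when the tree representation of the current face $P^*(f)$ replaces them. In other words, the tree represents the current face's interval set, while a separate append-only list (or multiset) records every interval ever produced together with the best weight seen; keeping these two structures in sync, and making sure the split-when-losing-$s$ case likewise retains the pre-split interval in the produced list, is the delicate part. Once that separation is in place, the time bound $O(|\pi|\cdot\log m)$ follows immediately from the $O(1)$ edit size guaranteed by Lemma~\ref{lem:100} and the $O(\log m)$ cost of each balanced-tree operation.
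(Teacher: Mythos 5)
Your maintenance of the interval set along the path is exactly the paper's argument: walk $\pi$ from the initial face, store the intervals of the current $P^*(f)$ in a balanced search tree keyed by left endpoints, and use the $O(1)$-size symmetric difference from Lemma~\ref{lem:100} (the lost disk splits one interval, the gained disk creates/extends/concatenates) to update in $O(\log m)$ per node; collecting every interval that ever leaves the tree, plus the final tree contents, gives $\bigcup_{f\in\pi}P^*(f)$ in $O(|\pi|\log m)$ time. That part is fine.

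The gap is in the weights. As you state at the outset, the weight of an interval $I$ must be the minimum face-weight over \emph{all} faces that generate $I$; along a path this means all consecutive faces $f_i,f_{i+1},\ldots,f_k$ during which $I$ sits in the tree, since $I\in P^*(f_h)$ for every $i\leq h\leq k$, not only for the face $f_i$ at which it was inserted. Your scheme attaches to $I$ only the weight of the creating face (taking a minimum only when the identical interval is later re-created), so an interval that merely persists unchanged while the walk passes through a lighter face never sees that smaller weight; the cross-path reconciliation at the end cannot repair this, because the recorded per-path weight is already too large. Updating every currently live interval at every step would fix correctness but costs up to $\Theta(m)$ per node and destroys the bound. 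The paper's remedy is to decouple the two: precompute a range-minima structure over the face weights $w(f_1),\ldots,w(f_t)$ in path order (a simple augmented binary search tree suffices, $O(\log m)$ per query), record the index $a(I)$ when $I$ enters the tree, and when $I$ is deleted at face $f_k$ set its weight by one range-minimum query over $[a(I),k]$. This adds only $O(\log m)$ per produced interval, preserving the $O(|\pi|\cdot\log m)$ bound, and is the missing ingredient in your proposal.
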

\begin{proof}
Let $t=|\pi|$. Let $f_1,f_2,\ldots,f_t$ be the list of nodes of $\pi$ with $f_1$ as the initial face.
Recall that $P^*(f_1)$ has exactly one interval, which has already been computed. In general, suppose $P^*(f_{i})$ has been computed. We show below that $P^*(f_{i+1})$ can be determined in $O(\log m)$ time based on the analysis of Lemma~\ref{lem:100}.

Note that intervals of $P^*(f_{i})$ are disjoint and we assume that they are stored in a balanced binary search tree $T(f_i)$ sorted by the left endpoints of the intervals. Since $|P^*(f_i)|\leq \lceil m/2\rceil$, the size of $T(f_i)$ is $O(m)$.
Let $u$ be the rightmost vertex of $f_i$, which is also the leftmost vertex of $f_{i+1}$. Let $s_j$ and $s_k$ be the two disks that intersect at $u$ such that $s_j$ contains $f_{i+1}$ but not $f_i$ (e.g., see Fig.~\ref{fig:adjdisksalg}). Hence, $s_k$ contains $f_i$ but not $f_{i+1}$. As discussed in the proof Lemma~\ref{lem:100}, there are two cases that lead to changes from $P^*(f_i)$ to $P^*(f_{i+1})$.

\begin{figure}[t]
\begin{minipage}[t]{\textwidth}
\begin{center}
\includegraphics[height=1.0in]{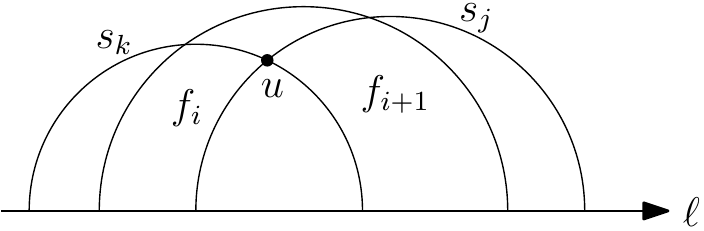}
\caption{\footnotesize Illustrating the two faces $f_i$ and $f_{i+1}$ as well as the two disks $s_k$ and $s_j$.
}
\label{fig:adjdisksalg}
\end{center}
\end{minipage}
\vspace{-0.15in}
\end{figure}

\begin{itemize}
  \item Due to that $s_k$ contains $f_i$ but not $f_{i+1}$, we first find the interval of $P^*(f_i)$ containing the index $k$, which can be done in $O(\log m)$ time using the tree $T(f_i)$. Then, we remove $k$ from the interval, which splits the interval into two new intervals (degenerate case happens if $k$ is an endpoint of the interval, in which case only one new interval is produced and that interval could be empty as well; we only discuss the non-degenerate case below as the degenerate case can be handled similarly). We remove the original interval from $T(f_i)$ and then insert the two new intervals into $T(f_i)$.
  \item Due to that $s_j$ contains $f_{i+1}$ but not $f_i$, we need to add the index $j$ to the intervals of $P^*(f_i)$ in order to obtain $P^*(f_{i+1})$. To this end, we find the two intervals of $P^*(f_i)$ closest to $j$, one on the left side of $j$ and the other on the right side of $j$; this can be done in $O(\log m)$ time using the tree $T(f_i)$. As discussed in the proof Lemma~\ref{lem:100}, depending on whether $j$ is adjacent to one, both, or neither of the two intervals, we will update $T(f_i)$ accordingly (more specifically, at most two intervals are removed from $T(f_i)$ and exactly one interval is inserted into $T(f_i)$).
\end{itemize}

The above performs $O(1)$ insertion/deletion operations on $T(f_i)$, which together take $O(\log m)$ time. The resulting tree is $T(f_{i+1})$, representing all intervals of $P^*(f_{i+1})$. In addition, once an interval is removed from the tree, we add the interval to $\calI$ (which is $\emptyset$ initially). After the last face $f_t$ is processed, we obtain $T(f_t)$, representing $P^*(f_t)$. We then add all intervals of $T(f_t)$ to $\calI$, after which $\calI$ is $\bigcup_{f\in \pi}P^*(f)$.

The above computes all dual segments of $\calI=\bigcup_{f\in \pi}P^*(f)$ in $O(|\pi|\cdot \log m)$ time. However, we also need to determine the weights of these segments. To this end, we modify the above algorithm as follows.

We build a data structure on the weights of the nodes of $\pi$ to support the following {\em range-minima query}: Given a range $[i,j]$ with two indices $1\leq i\leq j\leq t$, the query asks for the minimum weight of all faces $f_k$ with $k\in [i,j]$. We can easily achieve $O(\log t)$ query time by constructing in $O(t)$ time an augmenting binary search tree on the weights of $f_1,f_2,\ldots,f_t$.\footnote{It is possible to achieve $O(1)$ time query with $O(t)$ preprocessing time using the data structures of~\cite{ref:BenderTh00,ref:HarelFa84}; however the simple binary search tree solution with $O(\log m)$ query time suffices for our purpose.} Note that $\log t=O(\log m)$ since $t=O(m^2)$.

For each interval $I\in \calI$, when it is first time inserted into $T(f_i)$ for some face $f_i$, we set $a(I)=i$. When $I$ is deleted from $T(f_k)$ for some face $f_k$, we know that all faces $f_i,f_{i+1},\ldots,f_k$ define $I$ (i.e., $I$ is in $P^*(f_h)$ for all $i\leq h\leq k$) and thus the weight of $I$ is equal to the minimum weight of these faces; to find the minimum weight, we perform a range-minimum query with $[a(I),k]$ in $O(\log m)$ time. As such, this change introduces a total of $O(|\calI|\cdot \log m)$ additional time to the overall algorithm.
Therefore, the overall time of the entire algorithm is still bounded by $O(|\calI|\cdot \log m)$ time, which is $O(|\pi|\cdot \log m)$ as $|\bigcup_{f\in \pi}P^*(f)|=O(|\pi|)$ by Lemmas~\ref{lem:90} and \ref{lem:100}.

The lemma thus follows
\qed
\end{proof}

We apply the algorithm of Lemma~\ref{lem:path} to all paths of $G$, which takes $O(|G|\cdot \log m)$ time in total. After that, all dual segments of $P^*$ with their weights are computed. Recall that $|G|=O(m+\kappa)$. Hence, the time of the overall algorithm for computing $P^*$ is bounded by $O(n\log(n+m)+(m+\kappa)\log m)$.
Consequently, using the dual transformation, we can solve the $L_2$ hitting set problem on $P$ and $S$. The following theorem analyzes the time complexity of the overall algorithm.

\begin{theorem}\label{theo:l2}
The line-constrained $L_2$ hitting set problem can be solved in $O((n+m)\log (n+m)+\kappa\log m)$ time, where $\kappa$ is the number of pairs of disks that intersect.
\end{theorem}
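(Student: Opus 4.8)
The plan is to combine the structural results already established with the dual-coverage subroutine of~\cite{ref:PedersenAl22}. Recall from Section~\ref{sec:dualtransformation} that solving the $L_2$ hitting set problem reduces to three steps: (1) construct $S^*$ and $P^*$; (2) run the 1D dual coverage algorithm of~\cite{ref:PedersenAl22} on $S^*$ and $P^*$ to obtain an optimal coverage $P^*_{opt}$; (3) translate $P^*_{opt}$ back into an optimal hitting set $P_{opt}$ by selecting, for each dual segment in $P^*_{opt}$ originating from $P^*_i$, the point $p_i$. Lemma~\ref{lemma:80} guarantees that $P^*_{opt}$ uses at most one segment from each $P^*_i$, so no weight is double-counted and the translation is valid; Observation~\ref{obser:20} guarantees that the resulting $P_{opt}$ hits every disk. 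Thus the correctness of the whole pipeline is already in hand, and only the running time remains to be tallied.

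First I would account for the preprocessing that makes the dual transformation legitimate: computing $\widehat{S}$ (so that the Non-Containment property holds) takes $O(m\log m)$ time as shown in Section~\ref{sec:fifo}, and sorting $P$ and the disk endpoints takes $O((n+m)\log(n+m))$ time. Next, constructing $S^*$ is trivially $O(m)$. For $P^*$, I invoke the algorithm developed in Section~\ref{sec:l2}: computing the arrangement $\calA$ and the graph $G$ takes $O((m+\kappa)\log m)$ time, locating the faces of $\calA$ containing the points of $P$ (hence the face weights) takes $O(n\log(n+m)+(m+\kappa)\log m)$ time, computing $P^*(f)$ for all initial faces takes $O(m\log m)$ time, and applying Lemma~\ref{lem:path} to every path of $G$ takes $O(|G|\cdot\log m)=O((m+\kappa)\log m)$ time in total. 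Summing these gives $O(n\log(n+m)+(m+\kappa)\log m)$ for computing $P^*$, and by Lemma~\ref{lem:boundLinf} we have $|P^*|=O(m+\kappa)$.

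Then I would bound step (2): the 1D dual coverage algorithm of~\cite{ref:PedersenAl22} runs in $O((|S^*|+|P^*|)\log(|S^*|+|P^*|))$ time; since $|S^*|=m$ and $|P^*|=O(m+\kappa)$ with $\kappa=O(m^2)$, this is $O((m+\kappa)\log m)$. Step (3) clearly costs $O(|P^*_{opt}|)=O(m+\kappa)$. Adding everything and simplifying $O(n\log(n+m)+(m+\kappa)\log m) = O((n+m)\log(n+m)+\kappa\log m)$ (the $m\log m$ term is absorbed into $(n+m)\log(n+m)$) yields the claimed bound. I would remark in passing that the alternative brute-force computation of $P^*$ in $O(nm)$ time together with the $O(nm)$ size bound yields the $O(nm\log(n+m))$ variant mentioned in Section~\ref{sec:dualtransformation}, which is preferable when $\kappa$ is close to its worst case $m^2$ and $n\ll m$.

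Honestly, there is no real obstacle left at this point — all the genuine work (the Non-Containment property, the correctness of the dual transformation, Lemma~\ref{lemma:80}, the $O(m+\kappa)$ size bound, and the $O((n+m)\log(n+m)+\kappa\log m)$-time construction of $P^*$) has already been carried out in Sections~\ref{sec:pre}--\ref{sec:l2}. The proof of the theorem is essentially a bookkeeping exercise that stitches these pieces together and checks that no term dominates beyond the stated bound. If anything merits a second look, it is making sure the translation in step (3) is stated precisely enough that a reader sees why Lemma~\ref{lemma:80} is exactly what is needed to prevent a point's weight from being counted twice, and why every disk is hit — but both follow immediately from Observation~\ref{obser:20} and Lemma~\ref{lemma:80}.
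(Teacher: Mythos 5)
Your proposal is correct and follows essentially the same route as the paper: dual transformation, the $O(n\log(n+m)+(m+\kappa)\log m)$-time construction of $P^*$ with $|P^*|=O(m+\kappa)$ (Lemma~\ref{lem:boundLinf}), the 1D dual coverage algorithm of~\cite{ref:PedersenAl22}, and Lemma~\ref{lemma:80} plus Observation~\ref{obser:20} for correctness of the translation back to $P_{opt}$. The only (harmless) difference is in the final arithmetic: by using $|S^*|=m$ you bound the coverage step directly by $O((m+\kappa)\log(m+\kappa))=O((m+\kappa)\log m)$ since $\kappa=O(m^2)$, which sidesteps the paper's case analysis (comparing $n$ with $m^2$) needed to absorb a $\kappa\log(n+m)$ term, and your simplification is valid.
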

\begin{proof}
As discussed above, computing $P^*$ takes $O(n\log(n+m)+(m+\kappa)\log m)$ time. As $|S^*|=n$ and $|P^*|=O(m+\kappa)$ by Lemma~\ref{lem:boundLinf}, applying the 1D dual coverage algorithm in~\cite{ref:PedersenAl22} takes $O((|S^*|+|P^*|)\log (|S^*|+|P^*|))$ time, which is $O((n+m+\kappa)\log (n+m+\kappa))$.

We claim that $(n+m+\kappa)\log (n+m+\kappa)=O((n+m)\log (n+m)+\kappa\log m)$. Indeed, since $\kappa=O(m^2)$, it suffices to show that $\kappa\log (n+m)=O((n+m)\log (n+m)+\kappa\log m)$. If $n<m^2$, then $\log (n+m)=O(\log m)$ and thus $\kappa\log (n+m)=O((n+m)\log (n+m)+\kappa\log m)$ holds; otherwise, we have $\kappa \leq m^2\leq n$ and thus $\kappa\log (n+m)=O((n+m)\log (n+m)+\kappa\log m)$ also holds.

As such, the total time of the overall algorithm for solving the $L_2$ hitting set problem is bounded by $O((n+m)\log (n+m)+\kappa\log m)$.\qed
\end{proof}

Recall that $P^*$ can also be computed in $O(mn)$ time by a straightforward brute force method. Using the dual transformation, the $L_2$ problem can also be solved in $O(nm\log (n+m))$ time. This algorithm may be interesting when $n$ is much smaller than $m$.

\section{The line-separable unit-disk hitting set and the half-plane hitting set}
\label{sec:line-separable}

In this section, we demonstrate that our techniques for the line-constrained disk hitting set problem can be utilized to solve other geometric hitting set problems.

\paragraph{\bf Line-separable unit-disk hitting set.}
We first consider the line-separable unit-disk hitting set problem, in which $P$ and centers of $S$ are separated by a line $\ell$ and all disks of $S$ have the same radius. Without loss of generality, we assume that $\ell$ is the $x$-axis and all points of $P$ are above (or on) $\ell$. 
Since disks of $S$ have the same radius and their centers are below (or on) $\ell$, the boundaries of every two disks intersect at most once above $\ell$ (referred to as the {\em single-intersection} property). 
Due to the single-intersection property, to solve the problem, we can simply use the same algorithm as in Section~\ref{sec:l2} for the line-constrained $L_2$ case. Indeed, one can verify that the following lemmas that the algorithm relies on still hold: Lemmas~\ref{lemma:80}, \ref{lem:90}, \ref{lem:100}, \ref{lem:boundLinf}, and \ref{lem:path}. By Theorem~\ref{theo:l2} (and the discussion after it), we obtain the following result.

\begin{theorem}\label{theo:line-sep}
Given in the plane a set $P$ of $n$ weighted points and a set $S$ of $m$ unit disks such that $P$ and centers of disks $S$ are separated by a line $\ell$, one can compute a minimum weight hitting set of $P$ for $S$ in $O(nm\log(m+n))$ time or in $O((n+m)\log(n+m)+\kappa\log m)$ time,
where $\kappa$ is the number of pairs of disks of $S$ whose boundaries intersect in the side of $\ell$ containing $P$.
\end{theorem}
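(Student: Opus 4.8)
The plan is to reduce the line-separable unit-disk hitting set problem to the line-constrained $L_2$ case by reusing the entire machinery of Section~\ref{sec:l2}, so that Theorem~\ref{theo:l2} and its accompanying $O(nm\log(n+m))$-time remark apply almost verbatim. First I would observe that the only geometric facts about disks used throughout Section~\ref{sec:l2} are: (i) each disk's intersection with $\ell$ is a segment, so the Non-Containment preprocessing of Section~\ref{sec:fifo} makes sense and gives the Non-Containment property (Observation~\ref{obser:FIFO}); and (ii) the boundary arcs of any two disks cross at most once in the half-plane $H$ above $\ell$ containing $P$. In the line-constrained setting, (ii) held because all disks had equal-or-varying radii centered on $\ell$ and one restricts to half-circles; here it holds precisely because of the single-intersection property, which is exactly what we need. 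So the first step is to verify that after removing redundant disks (those whose $\ell$-chord contains another disk's $\ell$-chord), the surviving set $S=\widehat S$ still satisfies the Non-Containment property, and that the arrangement $\calA$ of the portions of disk boundaries inside $H$ has its two-disk arcs crossing at most once.

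Next I would walk through the dependency chain of lemmas and certify that each still holds. The dual transformation of Section~\ref{sec:dualtransformation} is purely combinatorial once $\widehat S$ has the Non-Containment property, so $S^*$, $P^*$, and Observation~\ref{obser:20} carry over unchanged. Lemma~\ref{lemma:80} (at most one dual segment of $P_i^*$ in any optimal coverage) used only the Non-Containment property, the notion of ``vertically above,'' and the single-crossing of boundary arcs in $H$ — all available here; the containment argument comparing portions of $s_j$ and $s_{j_0}$ on one side of the vertical line $\ell_{p_i}$ goes through verbatim. Lemmas~\ref{lem:90} and \ref{lem:100} are arrangement-combinatorics statements (initial faces correspond to leftmost points of disks; crossing a vertex swaps exactly two disks in the ``containing'' set), again depending only on Non-Containment and single-crossing, so they hold. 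Consequently Lemma~\ref{lem:boundLinf} gives $|P^*|=O(m+\kappa)$ with $\kappa$ now the number of disk pairs whose boundaries cross in the side of $\ell$ containing $P$, and Lemma~\ref{lem:path} gives the $O(|\pi|\log m)$ per-path computation. Then I would invoke Theorem~\ref{theo:l2}'s proof (and the brute-force remark after it) essentially word for word: computing $P^*$ costs $O(n\log(n+m)+(m+\kappa)\log m)$, the 1D dual coverage algorithm of~\cite{ref:PedersenAl22} costs $O((|S^*|+|P^*|)\log(|S^*|+|P^*|))$, and the same arithmetic simplification shows the total is $O((n+m)\log(n+m)+\kappa\log m)$; the straightforward brute-force computation of $P^*$ in $O(nm)$ time yields the alternative $O(nm\log(n+m))$ bound.

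The main obstacle — really the only substantive point — is confirming that the structural claims about the arrangement $\calA$ survive when the disks are no longer centered on $\ell$. Two things need care: first, that an ``initial face'' is still characterized as a face whose leftmost vertex is the leftmost point $l_k$ of some disk $s_k$ with the face lying inside $s_k$, which relies on $\ell$ acting as the boundary of the infinite ``disk'' $s'$ — here $\ell$ is a separating line, not a disk boundary, but since all disk centers lie on one side of $\ell$ and all of $P$ on the other, the chord $s_j\cap\ell$ still plays the role of the ``diameter projection,'' and the $x$-monotonicity of each half-boundary inside $H$ still holds because the part of $\partial s_j$ inside $H$ is a single $x$-monotone arc (a disk with center on or below $\ell$ meets $H$ in one $x$-monotone arc). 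Second, that the proof of Lemma~\ref{lem:90} — indices of disks containing $l_k$ form an interval $[k',k]$ — still uses only Non-Containment, which I have already certified. So I expect the write-up to consist mainly of a short paragraph checking these arrangement properties under the single-intersection hypothesis, followed by the statement that Lemmas~\ref{lemma:80}, \ref{lem:90}, \ref{lem:100}, \ref{lem:boundLinf}, and \ref{lem:path}, and hence Theorem~\ref{theo:l2}, apply unchanged, giving the claimed running times.
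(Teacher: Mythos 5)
Your proposal matches the paper's own argument: the paper likewise observes that unit disks centered on one side of $\ell$ have the single-intersection property above $\ell$, and then asserts that Lemmas~\ref{lemma:80}, \ref{lem:90}, \ref{lem:100}, \ref{lem:boundLinf}, and \ref{lem:path} (hence Theorem~\ref{theo:l2} and its brute-force remark) carry over, yielding both stated time bounds. Your additional checks (Non-Containment preprocessing via the $\ell$-chords, with $l_j,r_j$ reinterpreted as chord endpoints, and the $x$-monotonicity of each boundary arc in the upper half-plane) are exactly the verifications the paper leaves implicit, so the proof is correct and essentially the same.
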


\paragraph{\bf Remark.}
Although disks of $S$ have the same radius, since their centers may not be on the same line, one can verify that Lemma~\ref{lem:20} does not necessarily hold any more. Consequently, the algorithm in Section~\ref{sec:unitDist} for the line-constrained unit-disk case cannot be applied in this scenario. 

\paragraph{\bf Half-plane hitting set.}
In the half-plane hitting set problem, we are given in the plane a set $P$ of $n$ weighted points and a set $S$ of $m$ half-planes. The goal is to compute a subset of $P$ of minimum weight so that every half-plane of $S$ contains at least one point in the subset. 
In the {\em lower-only case}, all half-planes of $S$ are lower half-planes. 

The lower-only case problem can be reduced to the line-separable unit-disk hitting set problem, as follows. We first find a horizontal line $\ell$ below all points of $P$. Then, since each half-plane $h$ of $S$ is a lower one, $h$ can be considered as a disk of infinite radius with center below $\ell$. As such, $S$ becomes a set of unit disks with centers below $\ell$. By Theorem~\ref{theo:line-sep}, we have the following result.\footnote{Another way to see this is the following. The main property our algorithm for Theorem~\ref{theo:line-sep} relies on is the single-intersection property, that is, the boundaries of any two disks intersect at most once above $\ell$. This property certainly holds for the half-planes of $S$ and thus the algorithm is applicable.}

\begin{theorem}\label{theo:loweronly}
Given in the plane a set $P$ of $n$ weighted points and a set $S$ of $m$ lower half-planes, one can compute a minimum weight hitting set of $P$ for $S$ in $O(nm\log(m+n))$ time or in $O(n\log n+ m^2\log m)$ time.
\end{theorem}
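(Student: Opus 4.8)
The plan is to reduce the lower-only half-plane hitting set problem to the line-separable unit-disk hitting set problem and then invoke Theorem~\ref{theo:line-sep}. First I would translate the plane so that a fixed horizontal line $\ell$ (say $y=c$ with $c$ smaller than every $y$-coordinate of $P$) lies strictly below all points of $P$; by a standard general-position assumption we may take every bounding line of a half-plane of $S$ to be non-vertical (vertical ones are handled by an infinitesimal rotation, or directly as noted below). For each lower half-plane $h_i\in S$, fix a point $q_i$ on its bounding line and let $D_i$ be the disk of radius $\rho$ internally tangent to the bounding line at $q_i$, so $D_i\subseteq h_i$ and the center of $D_i$ sits on the inward normal at distance $\rho$ from $q_i$. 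Since $P$ is finite and each bounding line is non-vertical, there is a single radius $\rho$, used for all $i$, that is large enough so that simultaneously: (i) $D_i\cap P=h_i\cap P$ for all $i$, since near the bounded region occupied by $P$ the arc $\partial D_i$ deviates from the bounding line of $h_i$ by only $O(1/\rho)$; and (ii) the center of $D_i$ lies below $\ell$ for all $i$, since its $y$-coordinate tends to $-\infty$ as $\rho\to\infty$ (the inward normal of a non-vertical lower half-plane has negative $y$-component). Put $S'=\{D_1,\dots,D_m\}$.

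By construction, all disks of $S'$ have the common radius $\rho$, their centers lie below $\ell$, and $P$ lies above $\ell$, so $(P,S')$ is an instance of the line-separable unit-disk hitting set problem. Because $D_i\cap P=h_i\cap P$ for every $i$, a subset $P'\subseteq P$ hits $h_i$ if and only if it hits $D_i$; hence $P'$ is a hitting set for $S$ if and only if it is a hitting set for $S'$, so the two instances have identical optimal values and optimal solutions, and we may solve $(P,S')$ instead. (Alternatively, avoiding the limiting argument entirely, one may note that the algorithm behind Theorem~\ref{theo:line-sep} uses only the single-intersection property --- that the boundaries of any two objects cross at most once on the side of $\ell$ containing $P$ --- which is trivially true for half-planes since two lines meet in at most one point, so that algorithm applies verbatim to $(P,S)$.) Invoking Theorem~\ref{theo:line-sep} on $(P,S')$ then solves the problem in $O(nm\log(m+n))$ time or in $O((n+m)\log(n+m)+\kappa\log m)$ time, where $\kappa$ counts the pairs of boundaries that cross above $\ell$.

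To finish I would simplify the second bound. Since the boundaries are lines, each pair crosses at most once, so $\kappa\le\binom{m}{2}=O(m^2)$ and $\kappa\log m=O(m^2\log m)$; and a calculation identical to the one in the proof of Theorem~\ref{theo:l2} --- splitting $(n+m)\log(n+m)$ into $n\log(n+m)$ and $m\log(n+m)$ and using $\log(n+m)=O(\log n+\log m)$ together with $m\log m=O(m^2\log m)$ and $n\log m=O(n\log n+m^2\log m)$ --- gives $(n+m)\log(n+m)=O(n\log n+m^2\log m)$. Hence the running time is $O(n\log n+m^2\log m)$, as claimed. I expect the only genuine care needed to be in the correctness of the reduction --- checking that the infinite-radius (equivalently, single-intersection) viewpoint truly preserves the line-separable structure, and that degenerate half-planes (vertical bounding lines, or bounding lines meeting $\ell$ or a point of $P$) are removed by the usual perturbation arguments --- since all the algorithmic substance already lives in Theorem~\ref{theo:line-sep}. \qed
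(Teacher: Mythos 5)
Your proposal is correct and follows essentially the same route as the paper: reduce the lower-only case to the line-separable unit-disk hitting set problem by viewing each lower half-plane as a (limiting) disk of huge radius centered below a horizontal line $\ell$ placed under $P$, then invoke Theorem~\ref{theo:line-sep} and use $\kappa=O(m^2)$ to get the $O(n\log n+m^2\log m)$ bound; your parenthetical remark that the algorithm only needs the single-intersection property is exactly the paper's footnote. Your finite-$\rho$ tangent-disk construction just makes the paper's informal ``infinite radius'' statement slightly more explicit, which is fine.
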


As discussed in Section~\ref{sec:intro}, using duality to reduce the problem to the lower-only case half-plane coverage problem and applying the coverage algorithm in~\cite{ref:PedersenAl22}, one can solve the lower-only case half-plane hitting set problem in $O(m\log m+n^2\log n)$ time. Combining this result with Theorem~\ref{theo:loweronly} leads to the following. 

\begin{corollary}\label{coro:halfplane}
Given in the plane a set $P$ of $n$ weighted points and a set $S$ of $m$ lower half-planes, one can compute a minimum weight hitting set of $P$ for $S$ in $O((n+m)\log (n+m)+ k^2\log k)$ time, where $k=\min\{m,n\}$. 
\end{corollary}

For the general case where $S$ contains both lower and upper half-planes, we show that the problem can be reduced to $O(n^2)$ instances of the lower-only case problem, as follows.

We first discuss some observations on which our algorithm relies. 
Consider an optimal solution $P_{opt}$, i.e., a minimum weight hitting set of $P$ for $S$. Let $\calH$ denote the convex hull of $P_{opt}$. 
Let $p$ and $q$ be the leftmost and rightmost vertices of $\calH$, respectively. 
Let $\calH_1$ (resp., $\calH_2$) denote the set of vertices of the lower (resp., upper) hull of $\calH$ excluding $p$ and $q$. As such, $\calH_1$, $\calH_2$, and $\{p,q\}$ form a partition of the vertex set of $\calH$.
Define $P^1_{pq}$ (resp., $P^2_{pq}$) to be the subset of points of $P$ below (resp., above) the line through $p$ and $q$.
Denote by $S^0_{pq}$ the subset of half-planes of $S$ each of which is hit by either $p$ or $q$. Let $S^1_{pq}$ (resp., $S^2_{pq}$) be the subset of lower (resp., upper) half-planes of $S\setminus S^0_{pq}$. As such, $S^0_{pq}$, $S^1_{pq}$, and $S^2_{pq}$ form a partition of $S$. Observe that each (lower) half-plane of $S^1_{pq}$ is hit by a point of $\calH_1$ but not hit by any point of $\calH_2$, and each (upper) half-plane of $S^2_{pq}$ is hit by a point of $\calH_2$ but not hit by any point of $\calH_1$~\cite{ref:Har-PeledWe12}. As such, we further have the following observation. 

\begin{observation}\label{obser:halfplane}
For each $i=1,2$, $\calH_i$ is an optimal solution to the half-plane hitting set problem for $P^i_{pq}$ and $S^i_{pq}$. 
\end{observation}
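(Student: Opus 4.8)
I would prove Observation~\ref{obser:halfplane} by a cut-and-paste (exchange) argument, showing that $\calH_i$ is both a feasible hitting set for the instance $(P^i_{pq}, S^i_{pq})$ and an optimal one. Feasibility is the easy half: by the cited fact from~\cite{ref:Har-PeledWe12}, every lower half-plane of $S^1_{pq}$ is hit by some point of $\calH_1$, and since $\calH_1\subseteq P_{opt}\subseteq P$ and $\calH_1$ consists of points below the line through $p$ and $q$ (as they lie on the lower hull strictly between the leftmost and rightmost vertices), we have $\calH_1\subseteq P^1_{pq}$; hence $\calH_1$ hits every half-plane of $S^1_{pq}$. The symmetric statement holds for $\calH_2$ and $S^2_{pq}$. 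So the content of the observation is optimality.

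First I would fix $i=1$ (the case $i=2$ being symmetric) and suppose, for contradiction, that some hitting set $Q_1\subseteq P^1_{pq}$ for $S^1_{pq}$ has strictly smaller total weight than $\calH_1$. The idea is to splice $Q_1$ back into the global solution: consider $Q := Q_1 \cup \calH_2 \cup \{p,q\}$. I would argue that $Q$ is a feasible hitting set for all of $S$. Indeed $S$ is partitioned into $S^0_{pq}$, $S^1_{pq}$, $S^2_{pq}$: every half-plane of $S^0_{pq}$ is hit by $p$ or $q$ by definition; every half-plane of $S^1_{pq}$ is hit by $Q_1$ by assumption; and every half-plane of $S^2_{pq}$ is hit by $\calH_2$ by the cited fact. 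Now compare weights: $w(Q) \le w(Q_1) + w(\calH_2) + w(\{p,q\}) < w(\calH_1) + w(\calH_2) + w(\{p,q\}) = w(P_{opt})$, where the last equality uses that $\calH_1$, $\calH_2$, $\{p,q\}$ partition the vertex set of $\calH$ and that only vertices of $\calH$ can belong to an optimal solution $P_{opt}$ (any non-vertex point of $P_{opt}$ would be strictly inside or on the hull of the other chosen points and could be dropped while keeping feasibility, contradicting optimality and positivity of weights). This contradicts the optimality of $P_{opt}$, so no such $Q_1$ exists, proving $\calH_1$ is optimal for $(P^1_{pq},S^1_{pq})$.

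The main obstacle — really the only subtle point — is making sure the bookkeeping of the partitions is airtight: one must verify that $S^1_{pq}$, $S^2_{pq}$, $S^0_{pq}$ genuinely partition $S$ (immediate from the definitions, since $S^0_{pq}$ collects exactly the half-planes hit by $p$ or $q$ and the remainder is split by lower/upper type), and that the vertex set of $\calH$ is genuinely $\{p,q\}\cup\calH_1\cup\calH_2$ with all three parts disjoint (this is where we use that $p,q$ are the unique leftmost/rightmost vertices and that $\calH_1,\calH_2$ exclude them). One must also double-check that replacing $\calH_1$ by $Q_1$ does not accidentally break half-planes in $S^0_{pq}$ or $S^2_{pq}$ — but it cannot, because in the spliced solution $Q$ we retain $p$, $q$, and all of $\calH_2$ untouched, and those three groups are exactly what those half-planes are certified against. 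The argument uses nothing beyond the cited property from~\cite{ref:Har-PeledWe12} that half-planes in $S^i_{pq}$ are hit by $\calH_i$ and not by $\calH_{3-i}$, together with the weight-positivity assumption stated in the Preliminaries.
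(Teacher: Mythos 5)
Your proposal is correct and follows essentially the argument the paper intends (the paper states the observation without an explicit proof): feasibility of $\calH_i$ comes from the cited fact of Har-Peled and Lee together with $\calH_i\subseteq P^i_{pq}$, and optimality is the natural exchange argument, splicing a hypothetically cheaper hitting set $Q_1$ for $(P^1_{pq},S^1_{pq})$ into $Q_1\cup\calH_2\cup\{p,q\}$, which would hit all of $S^0_{pq}\cup S^1_{pq}\cup S^2_{pq}=S$ at weight below $w(P_{opt})$, a contradiction. Your extra step that $P_{opt}$ consists only of hull vertices (using convexity of half-planes and positive weights) is valid, though the weaker inequality $w(\calH_1)+w(\calH_2)+w(\{p,q\})\leq w(P_{opt})$ would already suffice.
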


Note that the half-plane hitting set problem for each $i=1,2$ in Observation~\ref{obser:halfplane} is an instance of the lower-only case problem. 

In light of Observation~\ref{obser:halfplane}, our algorithm for the hitting set problem for $P$ and $S$ works as follows. For any two points $p$ and $q$ of $P$, we do the following. 
Following the definitions as above, we first compute $P^1_{pq}$, $P^2_{pq}$, $S^0_{pq}$, $S^1_{pq}$, and $S^2_{pq}$, which takes $O(n+m)$ time. Then, for each $i=1,2$, we solve the lower-only case half-plane hitting set problem for $P^i_{pq}$ and $S^i_{pq}$, and let $P^i_{\text{opt}}$ denotes the optimal solution. We keep $P^1_{\text{opt}}\cup P^2_{\text{opt}}\cup \{p,q\}$ as a candidate solution for our original hitting set problem for $P$ and $S$. In this way, we reduce our hitting set problem for $P$ and $S$ to $O(n^2)$ instances of the lower-only case half-plane hitting set problem (each instance involves at most $n$ points and at most $m$ half-planes). Among all $O(n^2)$ candidate solutions, we finally return the one of minimum weight as an optimal solution. The total time of the algorithm is bounded by $O(n^2\cdot (n+m+T))$, where $T$ is the time for solving the lower-only case hitting set problem for at most $n$ points and at most $m$ half-planes. Using Corollary~\ref{coro:halfplane}, we obtain the following result. 

\begin{theorem}\label{theo:halfplane}
Given in the plane a set $P$ of $n$ weighted points and a set $S$ of $m$ half-planes, one can compute a minimum weight hitting set of $P$ for $S$ in $O(n^2(n+m)\log (n+m)+ n^2k^2\log k)$ time, where $k=\min\{m,n\}$. 
\end{theorem}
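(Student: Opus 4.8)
The plan is to implement exactly the reduction sketched above: enumerate all $O(n^2)$ pairs $(p,q)$ of points of $P$; for each pair, partition $P$ into $P^1_{pq},P^2_{pq}$ and $S$ into $S^0_{pq},S^1_{pq},S^2_{pq}$ as defined above, solve the two lower-only instances $(P^i_{pq},S^i_{pq})$ for $i=1,2$ with Corollary~\ref{coro:halfplane}, and record the candidate $P^1_{\text{opt}}\cup P^2_{\text{opt}}\cup\{p,q\}$ whenever both instances are feasible; finally return a minimum-weight candidate. The proof then has a correctness part and a running-time part.

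For correctness, I would fix an optimal solution $P_{opt}$, let $\calH$ be its convex hull, and let $p^*,q^*$ be the leftmost and rightmost vertices of $\calH$, with $\calH_1,\calH_2$ the remaining lower- and upper-hull vertices, exactly as in Observation~\ref{obser:halfplane}. The key step is to analyze the iteration in which the algorithm processes the pair $(p^*,q^*)$: Observation~\ref{obser:halfplane} guarantees that $\calH_i$ is an optimal, hence feasible, solution of the lower-only instance $(P^i_{p^*q^*},S^i_{p^*q^*})$, so the solver returns $P^i_{\text{opt}}$ with $w(P^i_{\text{opt}})=w(\calH_i)$. Since $\{p^*,q^*\}$ is disjoint from $P^1_{p^*q^*}\cup P^2_{p^*q^*}$, the candidate built for this pair has weight $w(\calH_1)+w(\calH_2)+w(p^*)+w(q^*)$, which is the total weight of the vertex set of $\calH$; because all point weights are positive and the vertices of $\calH$ form a subset of $P_{opt}$, this is at most $w(P_{opt})$. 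Moreover this candidate is a valid hitting set of $S$: each half-plane of $S^0_{p^*q^*}$ is hit by $p^*$ or $q^*$, and each half-plane of $S^i_{p^*q^*}$ is hit by $P^i_{\text{opt}}$. Conversely, for every pair $(p,q)$ for which both lower-only instances are feasible, the same partition argument shows that the assembled candidate is a valid hitting set of $S$ and therefore has weight at least $w(P_{opt})$; pairs with an infeasible sub-instance contribute no candidate. Consequently the minimum-weight candidate over all $O(n^2)$ pairs has weight exactly $w(P_{opt})$ and yields an optimal hitting set.

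For the running time, there are $\binom{n}{2}=O(n^2)$ pairs. For each pair, computing $P^1_{pq},P^2_{pq},S^0_{pq},S^1_{pq},S^2_{pq}$ takes $O(n+m)$ time, by testing each point of $P$ against the line through $p$ and $q$ and each half-plane of $S$ against $p$ and $q$. Each of the two lower-only instances involves at most $n$ points and at most $m$ half-planes, so by Corollary~\ref{coro:halfplane} it is solved in $O((n+m)\log(n+m)+k^2\log k)$ time with $k=\min\{m,n\}$ (the minimum of the sub-instance sizes is at most $k$). If the lower-only solver also returns the optimal weight, then forming the candidate and comparing its weight to the running best costs $O(1)$. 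Summing the per-pair cost $O((n+m)\log(n+m)+k^2\log k)$ over the $O(n^2)$ pairs gives the claimed bound $O\bigl(n^2(n+m)\log(n+m)+n^2k^2\log k\bigr)$.

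I do not expect a serious obstacle here: the real content is isolated in Observation~\ref{obser:halfplane} (itself drawn from~\cite{ref:Har-PeledWe12}) and in Corollary~\ref{coro:halfplane}, both already available; the only point requiring care is making sure that, over all $O(n^2)$ pairs, every candidate we consider is genuinely a hitting set of all of $S$ (so that no spuriously cheap candidate is ever returned) while the distinguished pair $(p^*,q^*)$ still produces a candidate of weight exactly $w(P_{opt})$.
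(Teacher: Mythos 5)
Your proposal is correct and follows essentially the same route as the paper: the same reduction to $O(n^2)$ lower-only instances via the pair $(p,q)$ of extreme hull vertices and Observation~\ref{obser:halfplane}, solved with Corollary~\ref{coro:halfplane}, with the same per-pair $O(n+m)$ partition cost and overall bound. Your added care about feasibility of sub-instances and verifying that every assembled candidate is a genuine hitting set (so the minimum candidate equals $w(P_{opt})$) just makes explicit what the paper leaves implicit.
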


When $m=n$, the runtime of our algorithm is $O(n^4\log n)$, which improves the previous best result of $O(n^6)$ time~\cite{ref:Har-PeledWe12} by nearly a quadratic factor. 

\section{Concluding remarks}
\label{sec:conclusion}

In this paper, we solve the line-constrained disk hitting set problem in $O((m+n)\log(m+n)+\kappa\log m)$ time in the $L_2$ metric, where $\kappa$ is the number of pairs of disks that intersect. The factor $\kappa\log m$ can be removed for the 1D, $L_1$, $L_{\infty}$, and unit-disk cases. An alternative (and relatively straightforward) algorithm also solves the $L_2$ case in $O(nm\log (n+m))$ time. Our techniques can also be used to solve other geometric hitting set problems. 

We can prove an $\Omega((n+m)\log (n+m))$ time lower bound for the problem even for the 1D unit-disk case (i.e., all segments have the same length), by a simple reduction from the element uniqueness problem (Pedersen and Wang~\cite{ref:PedersenAl22} used a similar approach to prove the same lower bound for the 1D coverage problem). Indeed, the element uniqueness problem is to decide whether a set $X=\{x_1,x_2,\ldots,x_N\}$ of $N$ numbers are distinct. We construct an instance of the 1D unit-disk hitting set problem with a point set $P$ and a segment set $S$ on the $x$-axis $\ell$ as follows. For each $x_i\in X$, we create a point $p_i$ on $\ell$ with $x$-coordinate equal to $x_i$ and create a segment on $\ell$ that is the point $p_i$ itself. Let $P=\{p_i \ |\ 1\leq i\leq N\}$ and $S$ the set of segments defined above (and thus all segments have the same length); then $|P|=|S|=N$. We set the weights of all points of $P$ to $1$.  Observe that the elements of $X$ are distinct if and only if the total weight of points in an optimal solution to the 1D unit disk hitting set problem on $P$ and $S$ is $n$. As the element uniqueness problem has an $\Omega(N\log N)$ time lower bound under the algebraic decision tree model, $\Omega((n+m)\log (n+m))$ is a lower bound for our 1D unit disk hitting set problem.

The lower bound implies that our algorithms for the 1D, $L_1$, $L_{\infty}$, and unit-disk cases are all
optimal. It would be interesting to see whether faster algorithms exist for the $L_2$ case or some non-trivial lower bounds can be proved (e.g., 3SUM-hard~\cite{ref:GajentaanOn95}).


%

\bibliographystyle{plain}





\end{document}